\documentclass[12pt]{article}

\usepackage{geometry, fullpage, authblk}
\usepackage{xr-hyper}
\usepackage{times}
\usepackage{hyperref}[]
\hypersetup{
	colorlinks=true,
	linkcolor=darkblue,
	filecolor=magenta,      
	urlcolor=darkblue,
	citecolor=blue,
}

\usepackage[normalem]{ulem}

\usepackage{url}

\usepackage{bigints}
\usepackage{amsfonts,amsmath,amssymb,amsthm,bbm}
\usepackage{wrapfig}
\usepackage{verbatim,float,url}
\usepackage{graphicx}
\usepackage{sidecap}
\usepackage{subfigure}
\usepackage{subcaption}
\usepackage[skip=0pt]{caption}
\usepackage{natbib}
\usepackage[english]{babel}
\usepackage{cancel}

\usepackage[dvipsnames]{xcolor}
\definecolor{darkblue}{RGB}{10, 10, 200}

\usepackage{cleveref}
\usepackage{enumerate}
\usepackage{multirow}
\usepackage{amsmath}

\usepackage{collcell}
\makeatletter
\newcolumntype{G}{>{\collectcell\@gobble}c<{\endcollectcell}@{}}
\makeatother

\usepackage{bm}

\usepackage{mathtools}

\newcommand{\bbe}{\bm{\beta}}
\newcommand{\bx}{\mathbf{x}}

\newtheorem{theorem}{Theorem}
\newtheorem{lemma}{Lemma}

\newtheorem{proposition}{Proposition}

\usepackage{mdwlist}

\theoremstyle{definition}
\newtheorem{definition}{Definition}

\usepackage{ctable}
\usepackage{colortbl}

\usepackage{mathtools}

\newcommand{\iidsim}{\overset{iid}{\sim}} % iid simulated
\renewcommand{\&}{and}
\usepackage{ulem}

\usepackage[ruled,vlined]{algorithm2e}
\SetKwInput{kwInit}{Initialize}

\DeclareMathOperator*{\argmax}{arg\,max}

\theoremstyle{plain}
\newtheorem{assumption}{Assumption}
\allowdisplaybreaks
\usepackage{bm}

\newcommand{\algoname}{{CPVS}}
\newcommand{\algonamesp}{{CPVS }}

\title{Computationally efficient segmentation for non-stationary time series with oscillatory patterns}

\newcommand{\abs}{
We propose a novel approach for change-point detection and parameter learning in multivariate non-stationary time series exhibiting oscillatory behaviour. We approximate the process through a piecewise function defined by a sum of sinusoidal functions with unknown frequencies and amplitudes plus noise. The inference for this model is non-trivial. However, discretising the parameter space allows us to recast this complex estimation problem into a more tractable linear model, where the covariates are Fourier basis functions. Then, any change-point detection algorithms for segmentation can be used. The advantage of our proposal is that it bypasses the need for trans-dimensional Markov chain Monte Carlo algorithms used by state-of-the-art methods. Through simulations, we demonstrate that our method is significantly faster than existing approaches while maintaining comparable numerical accuracy. We also provide high probability bounds on the change-point localization error. We apply our methodology to climate and EEG sleep data.

\vspace{0.1in}

\noindent\textbf{Keywords:} Bayesian variable selection, spectral analysis, change point, EEG sleep data, localization rate. 

}

\author[1]{Nicolas Bianco
}
\author[2,3]{Lorenzo Cappello}
\affil[1]{Scientific Computing Center, Karlsruhe Institute of Technology, Germany}
\affil[2]{Department of Economics and Business, Universitat Pompeu Fabra, Spain}
\affil[3]{Data Science Center, Barcelona School of Economics, Spain}

\date{\today }

\begin{document}
	
\maketitle
\thispagestyle{empty}

\centerline{\bf Abstract}
\medskip
\abs
\normalsize
\setlength{\parskip}{.2cm }
\setlength{\parindent}{0.5cm}

\clearpage
\pagenumbering{arabic}
	
\section{Introduction}
\label{sec:intro}

Many dynamic processes in environmental science and neuroscience exhibit cyclical patterns. In these applications, knowing the frequency of the cycles is relevant to the understanding of the underlying phenomena. Traditional statistical techniques assume that the set of frequencies is constant over time. However, the availability of large data sets that characterize modern applications is debunking this assumption. An example is given by electroencephalogram (EEG) sleep data, with events such as spindles \citep{parekh2017multichannel} and micro-arousals \citep{qian2021review} causing sudden changes in the EEG signals recorded.

The simplest departure from i.i.d. or stationarity assumptions is given by change point models, which define a partition of the data into shorter segments. Such partition is useful statistically -- within each segment one can apply traditional estimators -- but also scientifically -- the segmentation is often meaningful to the understanding of the phenomena that are studied. For example, researchers are interested in segmenting EEG data because sleep events are related to neuropsychiatric conditions \citep{manoach2019abnormal}. For these reasons, research in change point models had a new impetus; see \cite{truong2020selective} for a review.

Most works either assume a simple parametric structure within each segment --  \textit{e.g.}, piecewise constant mean \citep{fric14,fryzlewicz2014wild}, piecewise constant regression parameters \citep{bai1998estimating,baranowski2019narrowest} -- or are fully non-parametric \citep{matteson2014nonparametric,madrid2021optimal}. 
The shared thread is that they all admit a closed-form ``score" within each segment -- \textit{e.g.}, the likelihood evaluated at the MLE -- which is then used downstream for the segmentation. 
The problem considered here assumes a parametric model that does not admit a closed-form estimator making  within-segment inference non-trivial.
Statistical and computational issues are exacerbated working with multivariate series,  a standard setting in modern applications; for example, EEG signals are recorded at multiple locations. In this paper, we introduce a computationally efficient methodology to segment multivariate non-stationary time series exhibiting cyclical behaviour. 

We model the oscillatory pattern within each segment as a sum of an unknown number of sinusoidal functions, with different frequencies and intensities, plus a noise process. The model appeared in the literature to analyse the cyclical and oscillatory signals \citep{andrieu1999joint,shumway2019time,hadj2020bayesian,wu2024frequency}, especially in the context of physiological time series, and relies on a sparsity assumption that just a few of the frequencies of a periodogram are relevant to describe the series \citep{shumway2019time}. Statistical inference for this model is challenging because the model dimension needs to be selected, and then one needs to estimate model parameters conditionally on a given dimension.

State-of-the-art segmentation approaches for this class of model are Bayesian \citep[e.g.,][]{rosen2012adaptspec,hadj2020bayesian}. This is somewhat against the trend in the change point detection literature, where Bayesian approaches are not widely used because their computational cost is magnitudes of order larger than competing algorithms \citep{fearnhead2006exact,liu20,cappello2022bayesian}. 
A possible explanation for the success of Bayesian methods is their ability to incorporate within-segment inference uncertainty.
Indeed, a frequentist approach would estimate within-segment model dimension, for example maximizing an information criterion, and then detect change points conditionally on such estimates. As a result, the accuracy of the segmentation might be affected by model selection mistakes within each segment. On the other hand, Bayesian methods avoid this issue, treating model dimension as random. \cite{hadj2020bayesian} provide evidence of that through numerical studies. 

In addition to larger computational cost, a further drawback of Bayesian approaches stems from the reliance on complex reversible jump Markov chain Monte Carlo (RJ-MCMC) \citep{green1995reversible}. RJ-MCMC must be carefully initialized and designed to enable good mixing \citep{brooks2003efficient}. Moreover, assessing the convergence of these algorithms is difficult. For example, \cite{fearnhead2006exact} points out that \cite{green1995reversible} RJ-MCMC has not converged in the coal mining change point application. Here, the model is arguably even more complex. \cite{hadj2020bayesian}, for example, use two trans-dimensional components: a first one to estimate the model dimensionality within-segment \citep{andrieu1999joint},
and a second one to estimate the size of the partition in the change point model \citep{green1995reversible}.  \cite{hadj2020bayesian} highlight all these difficulties in the discussion.

Our proposal relies on two ideas. The first one is that it is possible to develop a hybrid methodology that preserves the uncertainty quantification within segments provided by Bayesian methods but uses computationally efficient frequentist procedures to detect change points. The reason that makes this hybrid approach attractive is that it preserves the ability to account for the model selection uncertainty while avoiding the computational burden of Bayesian change point detection algorithms. We retain the uncertainty quantification of the parameters within a segment, but lose that on the change point locations. However, we avoid RJ-MCMC altogether resulting in an algorithm order of magnitudes faster, less prone to convergence issues, hence more suitable for large multivariate data sets.

The second idea is that, through a discretisation of the space of frequencies, we can recast the within-segment inference (model selection and parameter inference) as a variable selection problem for a linear regression with a design matrix whose columns are Fourier basis. The discretisation of the frequency parameter is justified by classical methods in oscillatory frequency detection which evaluate the periodogram at the discrete set of canonical frequencies $\{\omega_{jn}=2\pi j/n:j=1,\ldots,\lfloor n/2\rfloor\}$ \citep{shumway2019time}. By discretizing and reformulating the mean signal as a linear model, frequencies are no longer parameters to be estimated in a non-linear model, but covariates to be selected in a simpler, linear one.
Then, any method that allows for joint variable selection and regression coefficient estimation can be used. While a potential drawback is the sensitivity the construction of the grid, which we will discuss in more details, recasting the problem as a linear regression speeds up computation massively since we can leverage decades of research on scalable variable selection, while we show that accuracy is not particularly affected. In our proposal, we use Bayesian variable selection within each segment to account for model dimension uncertainty, as explained in the previous paragraph. 

A further benefit of our proposal is that it is amenable to theoretical analysis. To our knowledge, the localization rate of change point procedures in this class of models - changes in frequencies, intensities, and dimensionality - has not been studied before. We establish high probability bounds on the estimation error of our method. A feature of our result is that we account for both parameter and model selection uncertainty within each segment. Having rewritten the problem as a variable selection problem is crucial in the proof technique. This result is new in the change point detection literature and adds to the literature on theoretical guarantees for Bayesian change point detection methods, currently limited \citep{liu20,cappello2023bayesian,cappello2022bayesian}. 

In terms of related works, the closest are \cite{hadj2020bayesian} (previously discussed) and \cite{wu2024frequency}, who considered the same mean signal but with a non-stationary noise process. Like our approach, they employed a discretized frequency space and introduced the notion of ``dense" periodogram, demonstrating that the parametric estimation rate for frequency detection \citep{genton2007statistical} can be achieved through a finer frequency grid. In addition, they proposed two block bootstrap schemes suitable for frequency estimation and change point detection tailored in the presence of non-stationary noise. Neither of these works treated the multivariate case.

The paper builds on two streams of literature, that on locally stationary time series \citep[see][for a formal definition]{dahlhaus1997fitting} and the one on change point detection. Remaining in the frequency domain, \cite{rosen2009local} model nonparametrically a time-varying log spectral density via a tailored Bayesian mixture of splines. A limitation of this method is that it requires a prior specification of the partition, a drawback solved by the AdaptSPEC method \citep{rosen2012adaptspec}; \cite{bertolacci2022adaptspec} generalize AdaptSPEC to a panel data.  
An alternative way to model nonstationary periodic times is to work in the time domain. Here, the main focus has been on mixtures of autoregressive processes \citep{davis2006structural,lau2008bayesian,wood2011bayesian}. In particular, \cite{davis2006structural} model a non-stationary time series as a composition of stationary autoregressive processes (AR), where the number and location of breakpoints are unknown. The problem of finding the best partition is treated as a statistical model selection problem. The proposed algorithm uses a genetic algorithm to solve the optimization problem, where the target function is the Minimum Description Length (MDL) criteria of \cite{rissanen1978modeling}. Wavelet-based methods operate at intermediate time-frequency resolutions and have been widely used to segment non-stationary time series. Recent contributions include \cite{cho2012multiscale,killick2013wave,preuss2015detection,korkas2017multiple}. However, most of these works focus on detecting changes in the second-order structure of the series, rather than directly modelling the oscillatory behaviour.

The second literature is on change point detection, an area that dates back to \cite{page1954continuous}. Our work belongs to the offline change point detection literature, where we assume that we first observe all the data, and then segment the series. A class of exact segmentation approaches relies on dynamic programming \citep{auger1989algorithms}, which can be slow. However, the development of efficient algorithms is an active area of research \citep{killick2012optimal,fric14,maidstone2017optimal}. An alternative class of methods performs approximate inference through greedy search algorithms, for example, Binary Segmentation (BS) \citep{vostrikova1981detecting}. Such algorithms are generally faster than dynamic programming, but a vanilla version of the algorithm can be statistically suboptimal. There are solutions to this, such as Wild BS \citep{fryzlewicz2014wild}, Seeded BS \citep{kovacs2023seeded}, and narrowest-over-threshold (NOT) \citep{baranowski2019narrowest}. Despite being flexible to other choices, we will argue that optimistic search (OS) \citep{kovacs2020optimistic} is particularly suitable for our model because it allows us to minimize the number of within-segment inferences. Relevant to this work is also the literature on theoretical guarantees for the change point method; see references in \cite{yu2020review}. From the Bayesian perspective, popular methods to detect multiple change points are based on product partition models \citep{bar92}. An active area of research is speeding up computations, mostly relying on alternatives to MCMC \citep{chi98,fearnhead2006exact,rig12,cappello2022bayesian,cappello2023bayesian,berlind2025}.

\section{Methodology}
\label{sec:method}

Many time series show periodic behaviour, and a primary role of spectral analysis is to estimate these underlying periodicities \citep{koopmans1995spectral}. Basic results in Fourier analysis state that it is possibly to approximate arbitrary well a function in a finite interval by a weighted sum of sine and cosine functions with different frequencies. This idea is exploited by several authors who show that in practice only a few sinusoidal functions are sufficient to represent well the cyclical behaviour in many real data problems \citep[see][among the others]{andrieu1999joint,shumway2019time,hadj2020bayesian,wu2024frequency}. We start considering a framework with a panel of $d$ independent series that share the same change points.
Assume that we observe $T$ realizations $\mathbf{y}_t=(y_{1t},\ldots,y_{dt})^\intercal$ of $d$-dimensional random vector $\bm{Y}_t \sim N_d(\bm{\mu}_t,\bm{\Sigma}_t) $ where $\bm{\Sigma}_t=\mathrm{diag}(\sigma^2_{1t},\ldots,\sigma^2_{dt})$, and for $i=1,\ldots,d$:
\begin{align}\label{eq:model}
	\mu_{it}= \sum_{l=1}^{L_{ij}} \left(\beta_{ijl}^{(1)} \sin (2\pi \omega_{ijl}t)+\beta_{ijl}^{(2)} \cos (2\pi \omega_{ijl}t)\right), \quad \sigma_{it}^2=\sigma_{ij}^2,  \qquad t_{j-1} < t \leq t_j,
\end{align}

where $\{t_1,\ldots,t_{m}\}$ is a set of $m$ time instances that partition $\{1,\ldots,T\}$ in $m+1$ segments, and conventionally $t_0=0$ and $t_{m+1}=T$. For a generic series $i$, we refer to $L_{ij}$, the number of sinusoidal functions used in segment $j$, as the within-segment model dimensionality, to $\bm{\omega}_{ij}=(\omega_{ij1},\ldots, \omega_{ijL_{ij}})$, with $\omega_{ijl}\in(0,1/2)$ as the vector of frequencies, $\bm{\beta}_{ij}^{(1)}$ and $\bm{\beta}_{ij}^{(2)}$  as the vectors of intensities associated to a given segment, and denote $\bm{\theta}_{ij}=(L_{ij},\bm{\omega}_{ij},\bm{\beta}_{ij}^{(1)},\bm{\beta}_{ij}^{(2)})$, and $\bm{\theta}_{j}=(\bm{\theta}_{ij})_{i=1:d} \in \Theta_j$.  Following \cite{hadj2020bayesian} and \cite{wu2024frequency}, the mean function in \eqref{eq:model} is not continuous and follows a piecewise structure. We discuss continuous mean signals in Sections~\ref{sec:sim} and \ref{sec:cont}.

In parametric change point detection, one builds on the assumption that $\mathbf{Y}_t \iidsim \mathbb{P}_{\bm{\theta}_j}$ for $t\in (t_{j-1},t_j]$, and parametrize the full series by $( \{t_j\}_{1:m},(\bm{\theta}_j)_{1:m+1} )$. The log-likelihood can be expressed as:
\begin{equation}\
	\log p (\mathbf{y}_{1:T}| \{t_j\}_{1:m},(\bm{\theta}_j)_{1:m+1})= \sum_{j=1}^{m+1} \sum_{t=t_{j-1}+1}^{t_j} \log p(\mathbf{y}_t|\bm{\theta}_j).
\end{equation}	
Standard methodologies typically estimate the number of change points $m$ and their locations $\{t_1,\ldots,t_{m}\}$ maximizing a penalized parametric log-likelihood
\begin{equation}\label{eq:cp}
	\hat{t}_1,\ldots, \hat{t}_{\hat{m}} = \argmax_{\mathcal{P} \in \mathcal{T}} \max_{(\bm{\theta}_j)_{1:|\mathcal{P}|+1}} \log p (\mathbf{y}_{1:T}| \mathcal{P},(\bm{\theta}_j)_{1:|\mathcal{P}|+1}) + \lambda | \mathcal{P}|,
\end{equation}
where $\mathcal{T}$ is the space of partitions of $\{1,\ldots,T\}$, $\mathcal{P}$ is an element of $\mathcal{T}$, and $|\mathcal{P}|$ its cardinality. 

Optimization $\eqref{eq:cp}$ is appealing because the MLE of the parametric models considered in the standard change point detection problem is usually available in closed form; for example, the mean of a Gaussian \citep{yao1988estimating,fryzlewicz2014wild}, linear regression coefficients \citep{bai1998estimating,baranowski2019narrowest}, variance of a Gaussian \citep{chen1997testing,killick2012optimal}, and exponential family natural parameters \citep{fric14}. This means that $ \max_{(\bm{\theta}_j)_{1:|\mathcal{P}|+1}} \log p(\mathbf{y}_{1:T}| \mathcal{P},(\bm{\theta}_j)_{1:|\mathcal{P}|+1})$ can be written explicitly, and the primary challenge in solving \eqref{eq:cp} is the combinatorial optimization over $\mathcal{T}$.  Here, the setting is more challenging: no closed form estimator is available for parameters in \eqref{eq:model} and different MLE estimates are necessary as within-segment model dimension varies. In the next subsection, we propose a simple way to tackle these challenges. 

In addition, solving \eqref{eq:cp} is only one of the possible ways to estimate the change points. Another popular idea is to substitute the log-likelihood evaluated at the MLE with a marginal likelihood:
\begin{equation}\label{eq:marginal}
	\hat{t}_1,\ldots, \hat{t}_{\hat{m}} = \argmax_{\mathcal{P} \in \mathcal{T}}  p(\mathbf{y}_{1:T}| \mathcal{P})+ \lambda | \mathcal{P}|,
\end{equation}
where, given a prior distribution on $(\bm{\theta}_k)_{1:|\mathcal{P}|+1}$,
\begin{equation*}
p (\mathbf{y}_{1:T}| \mathcal{P}) = \int_{\Theta_1\ldots \Theta_{|\mathcal{P}|+1}} p(\mathbf{y}_{1:T}| \mathcal{P},(\bm{\theta}_k)_{1:|\mathcal{P}|+1}) \,d\, p((\bm{\theta}_k)_{1:|\mathcal{P}|+1}),    
\end{equation*}
where with an abuse of notation we use the integral both for continuous (frequencies, intensities) and discrete (model dimension) parameters. The use of the marginal likelihood has motivated the development of information criteria specific to change point detection, for example, the Modified BIC \citep{zhang2007modified}, and it is common in the Bayesian literature in the single change point case \citep{smi75,cappello2022bayesian}. In the problem we are considering, the advantage of the marginal likelihood is that it explicitly accounts for model dimension uncertainty, while $ \max_{(\bm{\theta}_j)_{1:|\mathcal{P}|+1}} \log p(\mathbf{y}_{1:T}| \mathcal{P},(\bm{\theta}_j)_{1:|\mathcal{P}|+1})$ fixes the size of the model for each segment.

In the rest of the section, we discuss how to infer the model's parameters within a segment (Subsection~\ref{sec:within}), how to solve the optimization problem \eqref{eq:marginal} (Subsection~\ref{sec:cp}), and study theoretically our proposal in the single change point case (Subsection~\ref{subsec:theory}). Inference on the within segment parameters is described for a generic $i=1,\ldots,d$ series since, given any interval, the series are modelled independently. On the other hand, the gain function to detect the change-points presented in Subsection~\ref{sec:cp} is defined for the joint vector of time series.

\subsection{ Within segment inference}\label{sec:within}
Assume to be in a generic segment $I_j=(t'_{j-1},t'_j]$ with $n_j=|I_j|$, with $t'_{j-1}, t'_j$ denoting the time instances that define a candidate region with constant parameters. The goal is to estimate $L_{ij}$, $\bm{\omega}_{ij}$, $\bm{\beta}_{ij}$, and $\sigma^2_j$ given vector of observations $\mathbf{y}_{iI_j}=(y_{it'_{j-1}},\ldots, y_{it'_{j}-1})$, for all $i=1,\ldots,d$. The main difficulties arise in the inference for $L_{ij}$ and $\bm{\omega}_{ij}$ since there is no closed-form estimator for $\omega_{ijl}$, and, similarly, no conjugate prior is available for Bayesian analysis. Here, we show that a discretisation of the space of frequencies allows us to rewrite inference for model~\eqref{eq:model} as a linear regression problem. This reformulation enables the use of inference methods for sparse high-dimensional linear models to jointly estimate all model parameters, as we detail below.
The key idea of the subsection is that inference for this regression model can be order of magnitude faster than trying to do inference in \eqref{eq:model} without sacrificing accuracy, as we will illustrate in Section~\ref{sec:sim}.
While the discretisation may introduce approximation error—since the true frequencies might not lie exactly on the chosen grid—\cite{wu2024frequency} demonstrate that selecting a sufficiently dense grid can mitigate this issue.

Assume that $\omega_{ijl} \in \Omega:=\{w_{1},\ldots,w_{p}\}$,  $0<w_1<\dots<w_p<1/2$, let $\mathbf{X}^{(1)}$ and  $\mathbf{X}^{(2)}$ be two $T \times p$ matrices whose columns are Fourier basis, \textit{i.e.}, $[\mathbf{X}^{(1)}]_{tk}=\sin(2\pi w_{k} t)$ and $[\mathbf{X}^{(2)}]_{tk}=\cos(2\pi w_{k}t)$, and $\mathbf{X}_{I_j}^{(1)}$ and  $\mathbf{X}_{I_j}^{(2)}$ the $n_j \times p$ sub-matrices corresponding to rows $I_j$. Then, we can rewrite the $i$-th component in \eqref{eq:model} as a standard linear regression: 
\begin{equation}\label{eq:varsel}
	\mathbf{y}_{iI_j} = \mathbf{X}^{(1)}_{I_j} \boldsymbol{b}_{ij}^{(1)}+\mathbf{X}^{(2)}_{I_j} \boldsymbol{b}_{ij}^{(2)} +\sigma_{ij}\bm{\varepsilon}_{iI_j}, \qquad \bm{\varepsilon}_{iI_j}\sim N(0,\mathbf{I}_{n_j}),
\end{equation}
where $\boldsymbol{b}_{ij}^{(1)}$ and $\boldsymbol{b}_{ij}^{(2)}$ are sparse vectors of coefficients, with $L_{ij}$ non-zero entries satisfying:
\begin{equation}\label{eq:sparsity_condition}
	\left\{k: b^{(1)}_{ijk} \neq 0\right\}= \left\{k: b^{(2)}_{ijk} \neq 0\right\}.
\end{equation}
The non-zero entries of $\boldsymbol{b}_{ij}^{(1)}$ and $\boldsymbol{b}_{ij}^{(2)}$ correspond to the intensities $\bm{\beta}^{(1)}_{ij}$ and $\bm{\beta}^{(2)}_{ij}$ (up to a permutation). Applying variable selection to \eqref{eq:varsel}, the equivalence between the two models is evident:
$\bm{\beta}_{ij}$ are regression coefficients, $\bm{\omega}_{ij}$ are obtained by selecting which features to include in the regression, and the dimensionality $L_{ij}$ is the number of features included. The benefit of using this second model is also clear as we have many computationally efficient algorithms to do variable selection. The only peculiar feature is the constraint \eqref{eq:sparsity_condition}, which resembles a group LASSO-type constraint, and it can be incorporated in most standard variable selection methodologies. 

A drawback is that the assumption that the true frequencies are in $\Omega$ is unlikely to hold. This causes a non-perfect equivalence between the two models, making \eqref{eq:varsel} a surrogate of \eqref{eq:model}. An intuitive solution is to increase $p$, i.e., a finer mesh grid $\Omega$. Variable selection remains feasible as we can enforce a sparsity constraint, which here reads as only a ``small" number of frequencies are relevant. However, covariates become increasingly correlated, making variable selection more challenging. This motivates the use of a variable selection method ``robust" to correlated regressors. A method particularly suited for this task is the Sum of Single Effects (SuSiE) regression proposed in \cite{wang2020simple}, discussed below in greater detail, whereas we defer how to select the grid $\Omega$ in practice to Supplementary D.

SuSiE is a Bayesian stepwise selection method that assumes that at most $N_E$ entries of $\bm{b}^{(1)}$ and $\bm{b}^{(2)}$ (we drop subscripts $i,j$ for lighter notation) are non-zero.  Hence, we can write a prior for $\bm{b}^{(1)}$ and $\bm{b}^{(2)}$ as a sum of $N_E$ single effect models
\begin{align}\label{eq:prior_susie}
	\bm{b}^{(1)} &= \sum_{e=1}^{N_E} \bm{b}_{e}^{(1)}, \qquad \bm{b}^{(2)} = \sum_{e=1}^{N_E} \bm{b}_{e}^{(2)}, \\
	\bm{b}_{e}^{(1)} &= \bm{\gamma}_{e}b_{e}^{(1)}, \qquad \bm{b}_{e}^{(2)} = \bm{\gamma}_{e}b_{e}^{(2)}, \\
	\bm{\gamma}_{e} &\sim Mult_p(1,\bm{\pi}), \qquad \left(\begin{array}{c}
		b_{e}^{(1)}  \\ b_{e}^{(2)}
	\end{array}\right) \sim N_2(0,\sigma^2_0\mathbf{I}_2),
\end{align}
with $e=1,\ldots, N_e$, each $\bm{\gamma}_e\in\{0,1\}^p$ is p-vector indicator variable with one non-zero entry describing which regressor to include in the model,  $\sigma^2_0>0$ and $\bm{\pi}\in (0,1)^p$ are fixed hyperparameters and define the prior variance of the coefficients and the prior inclusion probability for each frequency, respectively. The indicators $(\bm{\gamma}_e)_{1:N_e}$ are shared for $\bm{b}_{e}^{(1)}$ and $\bm{b}_{e}^{(2)}$ to enforce \eqref{eq:sparsity_condition}. 

Rather than approximating the posterior distribution $p((\bm{\gamma}_e)_{1:N_e}, (\bm{b}_e)_{1:N_e} | \mathbf{y}_{iI_j})$ directly, SuSiE employs a backfitting algorithm \citep{fri81} that solves a single effect model at a time. The peculiar feature is that instead of returning a variable to include at each step, it returns at each step a distribution that describes uncertainty on the selected variables.  \cite{wang2020simple} show that this constitutes a mean-field variational Bayes (VB) approximation of the posterior distribution of the form $\prod_{e=1}^{N_E} q_e(\bm{\gamma}_e, \bm{b}_e)$. By model construction, variables that have high posterior mass $q_e(\bm{\gamma}_e)$ are likely to capture similar effects, possibly due to the high correlation. This feature makes SuSiE particularly suitable for our case as $p$ increases: frequencies close to the true ones should be captured by the same effect because the corresponding Fourier basis functions are highly correlated. Figure \ref{fig:example_susie} depicts a numerical illustration of this effect. We simulate data from $\eqref{eq:model}$ with constant parameters ($d=1$, $L_{j}=2$, $\omega_1=1/30$, $\omega_2=1/15$) and fit SuSiE using equally spaced grids with varying $p$ (true frequencies not included in $\Omega$). With $p=250$ (coarser grid), adjacent columns in $\mathbf{X}^{(1)}$ and $\mathbf{X}^{(2)}$ are less correlated and $q(\bm{\gamma}_1)$ and $q(\bm{\gamma}_2)$ are concentrated on the frequency in $\Omega$ closest to the truth. As $p$ increases, correlation in $\mathbf{X}^{(1)}$ and $\mathbf{X}^{(2)}$  increases, and the posterior distributions remain centred around the true frequency, but more diffuse and incorporating more features.

Other Bayesian variable selection methods may yield the same information, but tend to be less computationally efficient. Fast methods and algorithms are available \citep{johnson2012bayesian,rovckova2014emvs}, but often provide only marginal posterior inclusion probability; see \cite{wang2020simple} for a further discussion. The approximate posterior distribution $\prod_{e=1}^{N_E} q_e(\bm{\gamma}_e, \bm{b}_e)$ allows us to quantify uncertainty both in the intensities and the frequencies. What we lose -- in comparison to a ``complete" Bayesian variable selection method or \cite{hadj2020bayesian} -- is the uncertainty quantification on $L_{ij}$: \cite{wang2020simple} provide a heuristic to estimate it but not to quantify uncertainty. A key to estimating model dimension is the parameter $N_E$, which serves as an upper bound on the model dimension, and as such, there is no need to tune it segment by segment. \cite{wang2020simple} argue that SuSiE, when used for variable selection, is robust even when $N_E$ is overstated. In our application of SuSiE within a change point detection framework, we found the method to be similarly robust when $N_E$ is understated when detecting the change points - in some simulations, this even led to improved segmentation performance - and then overstated when estimating the parameters within each segment. We revisit the topic in Supplementary D, where we describe an automatic procedure to select $N_E$.

\begin{figure}[!t]
	\centering
	\subfigure[$T=100$ and $p=250$]{\includegraphics[width=.3\textwidth]{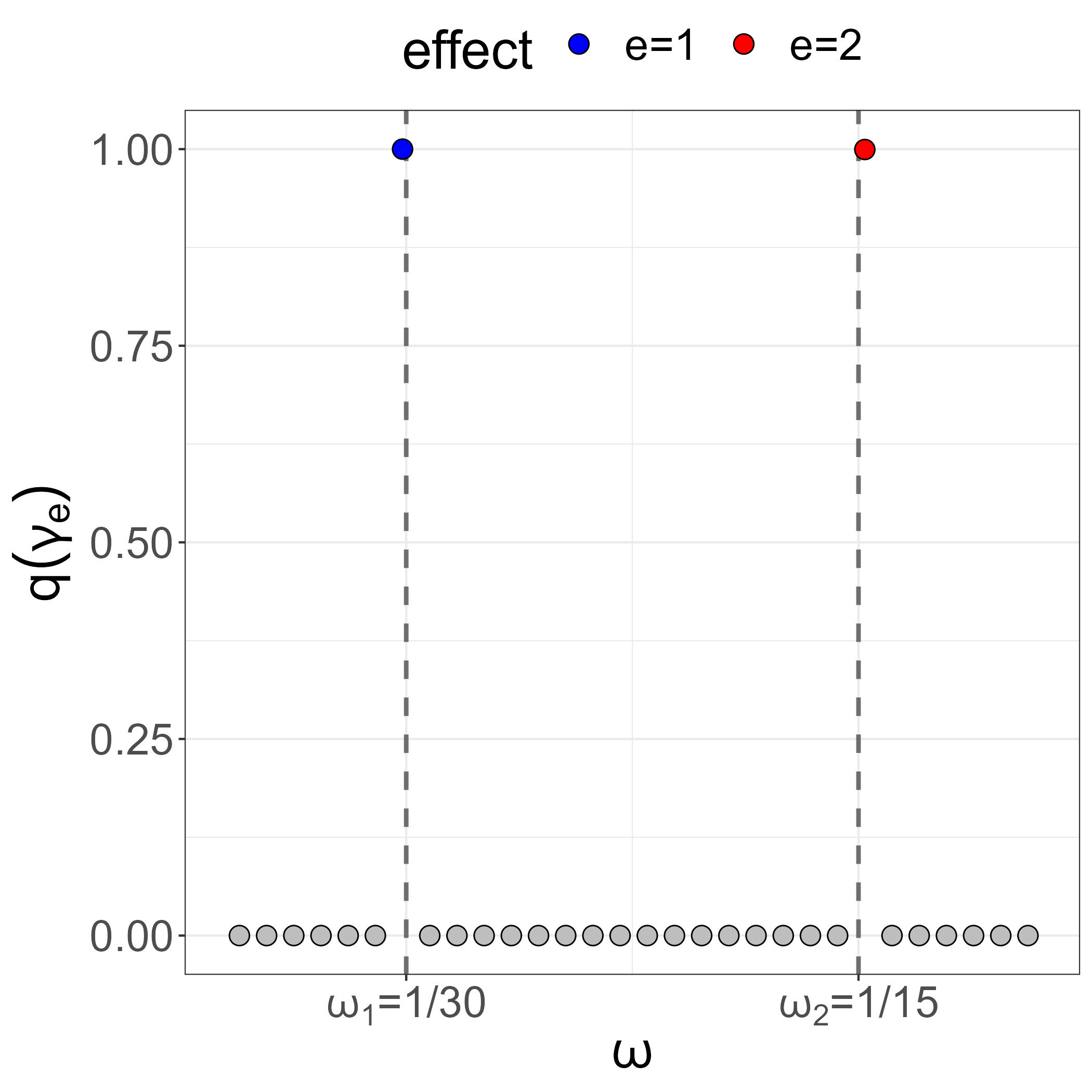}}
	\subfigure[$T=100$ and $p=500$]{\includegraphics[width=.3\textwidth]{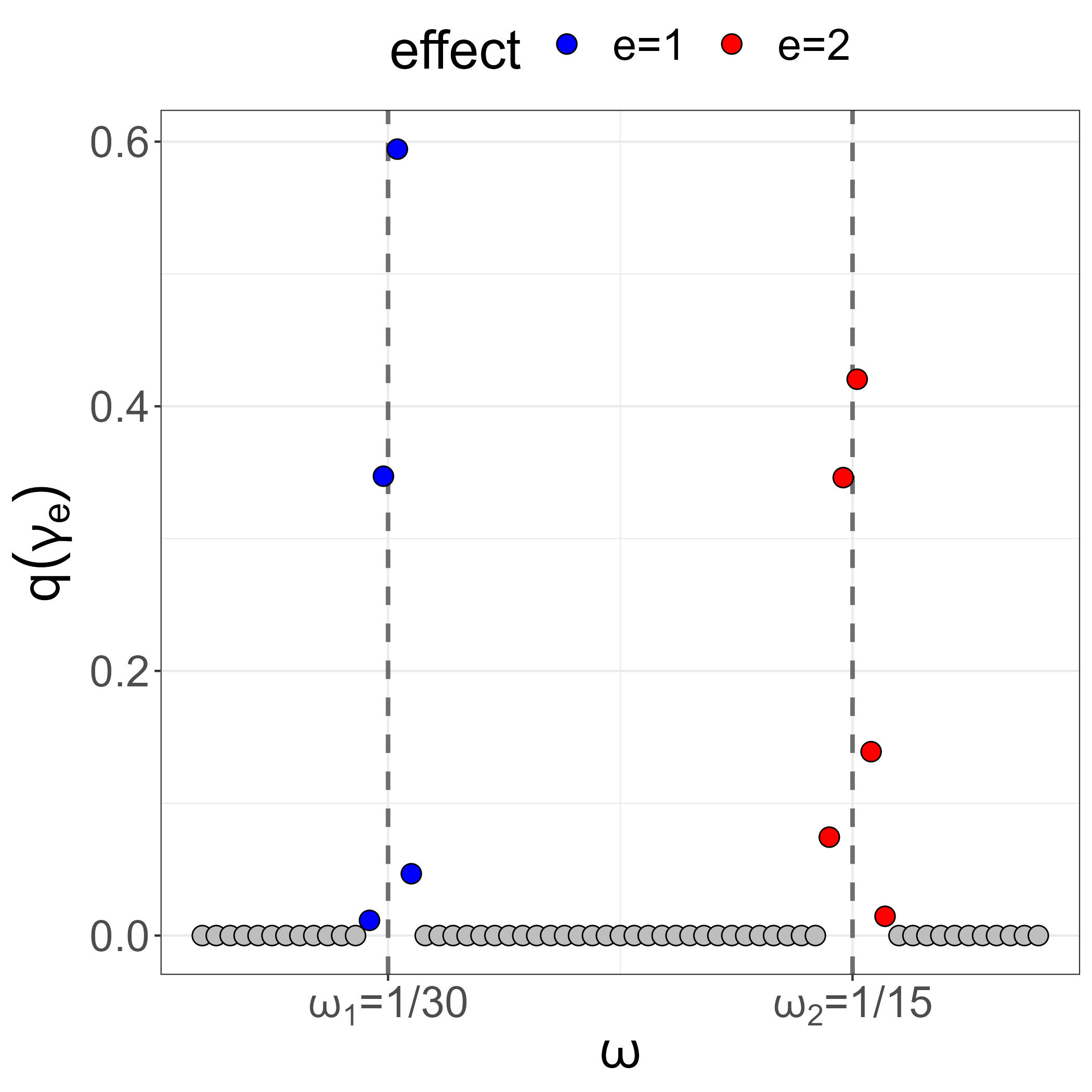}}
	\subfigure[$T=100$ and $p=1000$]{\includegraphics[width=.3\textwidth]{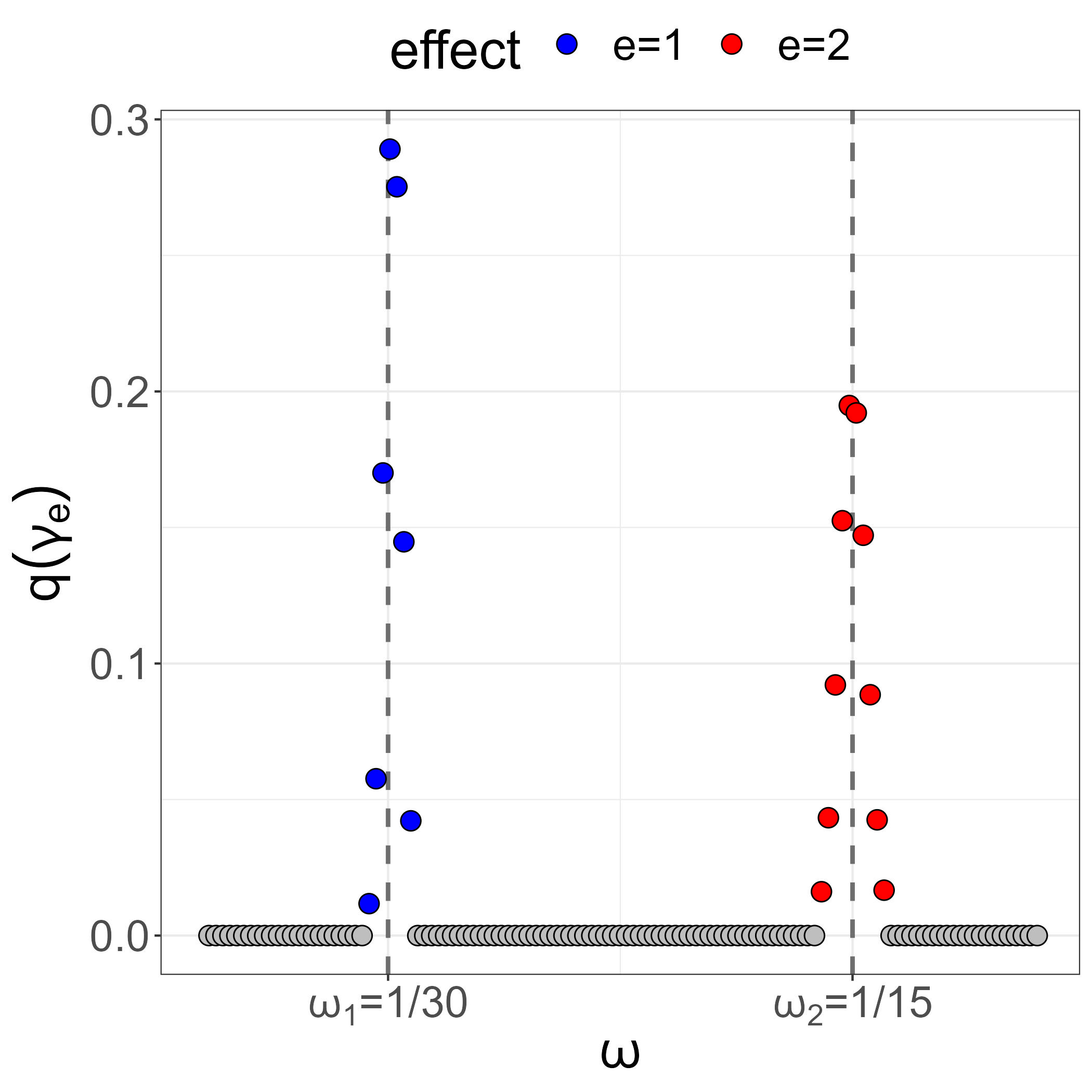}}
	\caption{{Feature selection uncertainty with SuSiE}. Uncertainty quantification around features selected using SuSiE ($N_E=2$). $T=100$ observations are generated from model \eqref{eq:model} with $d=1$,  no change points, and $\omega_1=1/30$ and $\omega_2=1/15$ ($L=2$). Panels depict $q(\gamma_e)$ obtained running the algorithms with equally spaced grid $\Omega$ for $p=250,500,1000$.}\label{fig:example_susie}
\end{figure}

We argued that SuSiE is particularly suitable for \eqref{eq:varsel}, but we reiterate that any method to perform variable selection can be used for within-segment inference. The end goal is to compute either $\max_{\bm{\theta}_{ij}} \log p (\mathbf{y}_{iI_j}| I_j,\bm{\theta}_{ij})$ needed in \eqref{eq:cp} or $p(\mathbf{y}_{I}| \mathcal{P})$ needed in \eqref{eq:marginal}; where we recall that, by independence, $p(\mathbf{y}_{I}| \mathcal{P})=\prod_{i=1}^d p(\mathbf{y}_{iI}| \mathcal{P})$. The advantage of using $p(\mathbf{y}_{I}| \mathcal{P})$ in lieu of $\max_{\bm{\theta}_{ij}} \log p (\mathbf{y}_{iI_j}| I_j,\bm{\theta}_{ij})$ is that the we are integrating over all possible models. Intuitively, this should make the procedure more robust to model selection mistakes.
On the other hand, while obtaining $\max_{\bm{\theta}_{ij}}  \log p(\mathbf{y}_{iI_j}| I_j,\bm{\theta}_{ij})$ is straightforward, computing the marginal likelihood $p(\mathbf{y}_{1:T}| \mathcal{P}) = \sum_{I\in\mathcal{P}} p(\mathbf{y}_{I}| I)$ is a notoriously difficult problem \citep{gelman2013bayesian}. SuSiE approximate posterior distribution is obtained maximizing an objective function that is a variational lower bound to the marginal likelihood. Such a relationship has justified using the lower bound as a proxy for the marginal likelihood in certain applications; for example, in mixture models \citep{ueda2002bayesian,mcgrory2007variational}. How tight the bound is depends on the specific problem, and using the objective function as a proxy for the marginal likelihood can be misleading and not justified in theory \citep{blei2017variational}, even if there are instances where theoretical support for such an approximation is available \citep{cherief2019consistency}, and the approximation can improved; for example \cite{burda2015importance}. However, as we will discuss in Section~\ref{sec:sim}, we find the algorithm robust to this approximation.

\subsection{Change point detection}\label{sec:cp}
There are various classes of algorithms to solve the optimization problem \eqref{eq:marginal} (or equivalently \eqref{eq:cp}), some based on dynamic programming \citep{auger1989algorithms,killick2012optimal}, others on approximate greedy search algorithms. Binary segmentation (BS) \citep{vostrikova1981detecting} is a widely used greedy search algorithm whose popularity is due to its inherent simplicity and the fact that it can be very fast. 
BS recursively splits an interval $(u,v]$ at a point $s\in (u,v]$ when it leads to an increase in a chosen score function called gain function, which should provide a measure of benefit of adding a split at point $s$. An example is the generalized likelihood ratio. In our problem, we will consider the following gain function:
 \begin{equation}\label{eq:gain}
 	G_{(u,v]}(s) =  \frac{ \log p(\mathbf{y}_{(u,s]}| I=(u,s])+\log  p(\mathbf{y}_{(s,v]}| I=(s,v]) }{ \log p(\mathbf{y}_{(u,v]}| I=(u,v])},
 \end{equation}
where we are rescaling the gain in marginal log-likelihood obtained from splitting the model at time $s$. The algorithm is initialized with $(u,v]=(0, T]$ and proceeds recursively until a stopping rule is met. Generally, such stopping rule is of the form $G_{(u,v]}(s) \leq \delta$, for some $\delta>1$. 

BS typically requires $\mathcal{O}(T \log T)$ evaluations of the gain function, which is not a major issue in change point questions when $G_{(u,v]}(s)$ is available in closed form (\textit{e.g.} CUSUM statistics, \cite{page1954continuous}) but it is quite costly in a set-up like ours where each marginal likelihood is approximated via a computational expensive procedure. \cite{kovacs2020optimistic} noted that the gain function in many parametric settings has a piecewise quasi-convex structure, and proposed a novel BS segmentation procedure called optimistic search (OS) that reduces the number of gain function evaluations to $\mathcal{O}(\log T)$. We deem this version of BS particularly suitable to our context. 

The stopping rule $\delta$ is crucial and it can be easily seen that the number of change point detected varies as a function of $\delta$. In our context, the gain $G_{(u,v]}(s)$ can be interpreted as a Bayes factor between the models with and without change point, and the Bayesian literature has discussion on how to set it, see for example \cite{kass1995bayes}. In the change point literature, another strategy is choose $\delta$ that optimize a certain objective function, as described by \cite{fryzlewicz2014wild} Section 3.3. Define the set of change points obtained for a given threshold $\delta$ as $\mathcal{S}(\delta)=(\hat{t}_1,\ldots,\hat{t}_{\hat{m}_\delta})$, and let $\hat{m}_\delta$ be its cardinality. By increasing the threshold $\delta^1 < \delta^2$ we have that $\hat{m}_{\delta^1} \geq \hat{m}_{\delta^2}$ and, as a consequence, $\mathcal{S}(\delta^2) \subseteq \mathcal{S}(\delta^1)$. Therefore, choosing the optimal threshold can be seen as choosing the optimal number of change points, which be done using an appropriate information criterion; for example, MBIC \citep{zhang2007modified}, Minimum Description Length \citep{rissanen1978modeling}, or Strengthened Schwarz criterion \citep{fryzlewicz2014wild}.
 
Supplementary D includes a guideline to select the algorithm parameters of Sections~\ref{sec:within} and \ref{sec:cp}.

\subsection{Theory}
\label{subsec:theory}
We study the properties of the change point estimator $\hat{t} =\max_t G_{(0,T) (t)}$ in the presence of a single change point at $t_0$. We determine a form of signal-to-noise ratio used in the context of change point detection methods and the localization error, which is a sequence  $(\epsilon_T)_{T\geq 1}$ such that
\begin{equation}\label{eq:localization_error}
\lim_{T\to \infty} \mathbb{P} (\vert \hat{t}-t_0 \vert< \epsilon_T)= 1.
\end{equation} 
See \cite{yu2020review} for a review of state-of-the-art results. Let $m_{01}$ and $m_{02}$ denote the set of true frequencies in the left and right segments, respectively, and let $\mathbf{X_{m_{01}}}$ and  $\mathbf{X_{m_{02}}}$ be  the corresponding design matrices. These matrices are constructed by merging the sine and cosine basis functions matrix as described in Section~\ref{sec:within}, keeping only the columns corresponding to the true frequencies. Without loss of generality, we assume that columns are sorted in such a way that we can define a column index set $I_0$, corresponding to the shared frequencies $m_{01}\cap m_{02}$, and index sets $I_1$ and $I_2$, corresponding to the unique frequencies $m_{01}\setminus m_{02}$ and $m_{02}\setminus m_{01}$ respectively. While any of these sets may be empty, at least one of them is guaranteed to be non-empty.
The main result is based on the following assumption.

\begin{assumption} \label{ass1_s}
	Let $(Y_t)_{1:T}$ be a sequence of independent random variables sub-Gaussian with parameters $(\sigma_t)_{1:T}$ satisfying $\sup_t \sigma_t<\infty$. Let $t_0$ be a time instance such that $Y_{t} \sim  F_1$ for $t \leq t_0$ and $Y_{t} \sim F_2$ for $t > t_0$, where $F_1$ and $F_2$ are such that 
	    \begin{align*}
		\mathbb{E}(Y_t) &= 
		\begin{cases}
 \mathbf{x}_{t01}^\intercal\bm{\beta}_{1} & \text{if } t \leq t_0, \\
	 \mathbf{x}_{t02}^\intercal\bm{\beta}_{2} , & \text{if } t > t_0.
		\end{cases} 
	\end{align*}
 In addition assume that: 
	\begin{enumerate}[(a)]
		\item (minimum spacing) $\Delta:=\min\{t_0,T-t_0+1\}> \sqrt{T \log T}$.
		\item  (signal-to-noise ratio) Let $\kappa:=\bbe^{\intercal}_{I_1 1} \bbe_{I_1 1} +\bbe^{\intercal}_{I_22} \bbe_{I_22} + (\bbe_{I_01}-\bbe_{I_02})^\intercal(\bbe_{I_01}-\bbe_{I_02})$, for some $\epsilon>0$, it holds that $\kappa \Delta>\sqrt{T \log^{1+\epsilon}T}$.
	\item (bounded signal)  $0<c<\kappa <\infty$. 
	\item (correct model mispecification) $m_{01},m_{02} \subseteq \Omega$.

        \item The hyperparameters are $0 < \sigma^2_0 < +\infty$, $N_e \geq \max\{|m_{01}|,|m_{02}|\}$, and $\pi_k>0$ for all $k$. 
	\end{enumerate}
\end{assumption}

The method was developed under Gaussian error terms. However, Assumption~\ref{ass1_s} simply requires the data to be sub-Gaussian with uniformly bounded parameters. Assumption~\ref{ass1_s} \textit{(a)} corresponds to a minimum spacing condition. $\Delta$ becomes meaningful when considered alongside the signal strength, denoted by $\kappa$ in Assumption~\ref{ass1_s} \textit{(b)}, where the signal-to-noise ratio relevant for change point detection is also defined. As a reference point, $\sqrt{T\log T}$ corresponds to the minimax detection bound for change point problems involving changes expressible by polynomial functions \citep{yu2020review}. To our knowledge, no results are currently available on the minimax rate associated with the signal structure in Assumption~\ref{ass1_s}. $\kappa$ captures, in a single quantity, three possible type of changes: the intensities can change, the frequencies can change, and the model dimension can change. Assumption~\ref{ass1_s} \textit{(d)} requires the true frequencies to be part of the discrete set $\Omega$. We make no assumptions on model dimensions, $|m_{01}|$ and $|m_{02}|$, but require $N_E$ to be sufficiently large to enable successful model selection (Assumption~\ref{ass1_s} \textit{(e)}). In addition, Assumption~\ref{ass1_s} \textit{(e)} requires that the prior distributions are well-defined, and a positive inclusion probability is assigned to every model. We are ready to state the main result of the section.

\begin{theorem}\label{th:loc_rate}
	Suppose that Assumption~\ref{ass1_s} holds. Then for any $\epsilon>0$, there exists $C^*>0$ such that, with probability approaching one, we have that 
		\begin{equation}
		\max_{t \in \{1,\ldots,T\}: |t-t_1|>C^* \log T} G_{(0,T]}(t) < G_{(0,T]}(t_1).
		\end{equation}
\end{theorem}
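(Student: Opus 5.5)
The plan is to reduce the gain to a penalized least-squares contrast and then use a standard single-change-point argument. First I will fix $(u,v]=(0,T]$. The denominator $\log p(\mathbf{y}_{(0,T]}\mid I=(0,T])$ does not depend on $t$, and it is negative with high probability, so it is of order $-T$. Maximizing $G_{(0,T]}(t)$ is therefore the same as maximizing the numerator $M(t):=\log p(\mathbf{y}_{(0,t]})+\log p(\mathbf{y}_{(t,T]})$, so the comparison of $G_{(0,T]}(t)$ with $G_{(0,T]}(t_1)$ reduces to comparing $M(t)$ with $M(t_1)$ (here $t_1$ denotes the true change point $t_0$).

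For each segment I will use the linear-regression reformulation \eqref{eq:varsel}, together with Assumption~\ref{ass1_s}(d) and (e). Under the SuSiE prior \eqref{eq:prior_susie}, the marginal likelihood of a segment $I$ is a finite mixture over supports $S\subseteq\Omega$ with $|S|\le N_E$. Each term is a Gaussian marginal with conjugate normal coefficients. A Laplace-type expansion (I would also treat the variational lower bound as equal to the marginal likelihood up to $O(\log n)$) should give the following for fixed $\sigma^2$:
\begin{equation*}
\log p(\mathbf{y}_I\mid I) = -\frac{1}{2\sigma^2}\min_{|S|\le N_E}\|\mathbf{y}_I-P_{S,I}\mathbf{y}_I\|^2 + O_P(\log T),
\end{equation*}
where $P_{S,I}$ is the projection onto the sine/cosine columns of $S$ restricted to rows $I$. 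The $O(\log T)$ absorbs the $\tfrac{|S|}{2}\log(1+n\sigma_0^2)$ determinant terms and the prior weights $\log\pi_k$. Both are bounded because $p$, $N_E$, $\sigma_0^2$ and the $\pi_k$ are fixed. The set of supports is finite, so the minimum over $S$ costs only a constant. This gives $M(t)=-\tfrac{1}{2\sigma^2}\mathrm{RSS}(t)+O_P(\log T)$, uniformly in $t$, where $\mathrm{RSS}(t)$ is the best two-segment residual sum of squares.

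Next I will decompose $\mathrm{RSS}(t)-\mathrm{RSS}(t_0)$ for $t>t_0$; the case $t<t_0$ is symmetric. Consider the segment $(0,t]$, which contains $t-t_0$ contaminated points. It cannot be fit exactly by any $S$. The misfit is bounded below by $c'\kappa(t-t_0)\,\Delta/(\Delta+t-t_0)$, because the mean change is $\mathbf{x}_{t01}^\intercal\bm\beta_1$ versus $\mathbf{x}_{t02}^\intercal\bm\beta_2$ on the union of supports. The Fourier columns are asymptotically orthogonal over long windows: $n^{-1}\mathbf{X}^\intercal\mathbf{X}\to\tfrac12\mathbf{I}$ uniformly once $n\gtrsim$ const, since the frequencies in $\Omega$ are distinct and lie in $(0,1/2)$. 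Using this, the squared norm of the mean difference over $s$ points is at least $c\kappa s - O(1)$, and projecting onto a single support loses at most a constant fraction. Since $\Delta\ge\sqrt{T\log T}$ and the boundedness in (c) holds, the deterministic part satisfies $\mathrm{RSS}(t)-\mathrm{RSS}(t_0)\gtrsim \kappa\min\{|t-t_0|,\Delta\}$.

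The stochastic cross terms are of the form $\langle\bm\varepsilon,(P-P')\bm\mu\rangle$ and $\bm\varepsilon^\intercal(P-P')\bm\varepsilon$, with projections of bounded rank. I will control them by sub-Gaussian and Hanson--Wright bounds, together with a union bound over the $O(T)$ values of $t$ and the finitely many supports. This gives the bound $O_P(\sqrt{\kappa|t-t_0|\log T}+\log T)$. The first term is dominated by $\kappa|t-t_0|$ as soon as $|t-t_0|>C^*\log T$ with $C^*$ large. For far $t$, the dominating condition is $\kappa\Delta\gg\sqrt{T\log^{1+\epsilon}T}$ from (b). Combining the steps gives $M(t)<M(t_0)$ uniformly for $|t-t_0|>C^*\log T$, and hence the claim.

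I expect the main obstacle to be the uniform Laplace/approximation step combined with the unknown variance. The marginal likelihood is computed with $\sigma^2$ estimated or integrated out. The argument has to hold for segments as short as $C^*\log T$, where the Fourier Gram matrix is not yet near-orthogonal. For these short lengths I would first try to show that the contaminated portion is too small to change the optimal support, so the comparison reduces to a fixed-design regression with known columns.
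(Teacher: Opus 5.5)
Your argument is correct in substance and takes a genuinely different route from the paper's, though the reduction from $G$ to $M$ contains a sign slip that needs fixing.

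\emph{How the paper proceeds.} It factors $p(\mathbf y|t_0)/p(\mathbf y|t)$ into the ratio under the true pair $\mathbf m_0$ times a ratio of $\bigl(1+\sum_{\mathbf m\neq\mathbf m_0}\cdots\bigr)$ terms. Step 1 expands $\log p(\mathbf y|t,\mathbf m_0)$ explicitly, using a Neumann series for $(\mathbf X_m^\intercal\mathbf X_m)^{-1}$ around $\frac{2}{t}\mathbf I$, the trigonometric sums of Lemma 1, and the partial-sum events $\Omega_1$--$\Omega_3$. This yields a gain of order $\kappa(t_0-t)(T-t_0+1)/(T-t+1)$. Step 2 then sends the model-comparison ratios to zero by citing Bayes-factor consistency.

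\emph{How you proceed.} You use log-sum-exp over the finitely many supports to reduce each segment's marginal likelihood to $-\frac{1}{2\sigma^2}\min_S\mathrm{RSS}_S$ up to $O(\log T)$. You then lower-bound the misfit of the contaminated segment uniformly over supports.

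\emph{What your route buys.}
\begin{itemize}
\item You never need Step 2. The paper justifies that step by correct specification, but for $t\neq t_0$ one segment is contaminated and the true support need not be favoured there. For example, take $N_E=1$, $m_{01}=\{w_1\}$, $m_{02}=\{w_2\}$, comparable amplitudes and $t_0-t>T-t_0$. Then $\{w_1\}$ beats $\{w_2\}$ on $(t,T]$ by a factor exponential in $(t_0-t)-(T-t_0)$. Your max-over-supports comparison covers exactly this case.
\item Your bookkeeping shows the $O(\log T)$ error has constants independent of $C^*$. It consists of log-determinants of at most a multiple of $N_E\log T$, log prior weights, and the log of the number of supports. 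So a large $C^*$ wins.
\item Bound the cross term by $\|v\|\sqrt{C\log T}$ with $v=(\mathbf I-P_S)\bm\mu$, rather than by $\sqrt{\kappa|t-t_0|\log T}$. Then you only need $\Delta\gtrsim\log T$ and $\kappa>c$, not the full strength of (b).
\end{itemize}
The paper's route, in exchange, makes $\kappa$ appear directly from explicit cross terms of the true designs, without handling arbitrary supports.

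\emph{Three corrections.}

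First, the sign slip. You assert $\log p(\mathbf y_{(0,T]})<0$ and then say that maximizing $G_{(0,T]}$ is maximizing $M$. Dividing by a negative constant reverses the order, so under your own claim the displayed inequality would be equivalent to $M(t)>M(t_0)$. The paper simply asserts equivalence with $\log p(\mathbf y|t)<\log p(\mathbf y|t_0)$. State the reduction that way, since it is what both arguments actually prove. Note that it presumes $G$ is oriented so that a larger numerator means a larger gain.

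Second, drop the parenthetical about the variational bound and the worry about $\sigma^2$. The theorem and the paper's proof concern the exact marginal likelihood with $\sigma^2$ fixed (the paper sets $\sigma^2=1$). Your $O(\log n)$ ELBO gap is also unsupported.

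Third, your stated main obstacle is not one.
\begin{itemize}
\item By Lemma 1, off-diagonal entries of any window Gram matrix are bounded by a constant depending only on $\Omega$. Hence a window of length $L$ has Gram matrix $\succeq(L/2-K_\Omega)\mathbf I$, and the contaminating sub-window has length at least $C^*\log T\to\infty$.
\item Minimize over all $b\in\mathbb R^{2p}$ rather than over one support. Let $\bm\theta_1,\bm\theta_2\in\mathbb R^{2p}$ be the regime coefficients padded with zeros, so that $\|\bm\theta_1-\bm\theta_2\|_2^2=\kappa$. This gives, for $t>t_0$ and $s=t-t_0$, a misfit of at least $(1+o(1))\frac{\kappa}{2}\frac{t_0s}{t_0+s}$ for every support simultaneously. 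You therefore never need to identify the optimal support.
\item Very short windows occur only at the ends, where the segment is pure. There, $\log p_S\le-\frac n2\log(2\pi\sigma^2)-\frac{1}{2\sigma^2}\mathrm{RSS}_S$ holds exactly, because the log-determinant and ridge corrections are nonnegative. Also $\|P_S\bm\varepsilon\|^2=O(\log T)$ follows from Hanson--Wright for bounded-rank projections, with no conditioning needed.
\end{itemize}
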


The proof is given in Supplementary C. Theorem~\ref{th:loc_rate} establishes that the estimator attains a localization error of order $\log T$. This is comparable to most state-of-the-art change point detection methods \citep{yu2020review}. Notably, $\epsilon_T/\Delta \to 0$ as $T\to\infty$, which is the notion of consistency for change point detection methods.

The statistic $G_{(0,T]}(t)$ based on the marginal likelihood studied in this paper is novel in the context of change point detection. The key idea of the proof is to decompose $G_{(0,T]}(t)$ in a change point detection component and a variable selection one. Conditionally on correct model recovery (\text{i.e.}, conditionally on the true frequencies), the estimator achieves the desired localization error. The proof builds on ideas in \cite{berlind2025}, but the argument differs by \cite{berlind2025} because they do not deal with linear regression. Then, we prove that the variable selection is consistent (\text{i.e.}, the true frequencies are selected in the linear model) and does not affect the rate of the estimator.

Here, the assumption that the true frequencies are in $\Omega$ appears important, and dropping this assumption should be subject of future work. Examining the proof, it seems possible to drop the Assumption that $N_E$ is larger or equal than the true model dimension. However, the argument appears more convoluted. Similarly, one may be able to weaken the signal-to-noise ratio considering one type of change at the time, in lieu of the three changes jointly.

\section{Simulation studies}
\label{sec:sim}
In this section, we present a simulation study to evaluate the performance of the proposed method, referred to as \algonamesp\ (Change Point detection based on Variable Selection within-segment). We examine different scenarios, varying $T$, $m$, change point locations, the variance, and the noise process. In three of the considered settings, the data generating process is correctly specified (\textit{i.e.}, \eqref{eq:model}), in six other is misspecified.

To assess segmentation quality, we compute the bias in the estimated number of change points as $B({m},\widehat{m})=|m-\widehat{m}|$, and two measures to describe the segmentation accuracy. We consider the coverage of the estimated partition $\hat{\mathcal{P}}=\{[0,\hat{t}_0],\ldots,[\hat{t}_{\hat{m}}+1,T]\}$ with respect to the true one $\mathcal{P}=\{[0,{t}_0],\ldots,[{t}_{{m}}+1,T]\}$ \citep{van2020evaluation}:
\begin{equation*}
	C(\mathcal{P},\hat{\mathcal{P}})=\frac{1}{T}\sum_{A\in\mathcal{P}}|A|\max_{\hat{A}\in\hat{\mathcal{P}}}\mathsf{J}(A,\hat{A}), \qquad J(A,\hat{A})=\frac{\mid A \cap \hat{A}\mid}{\mid A \cup \hat{A}\mid}.
\end{equation*}
Second, we compute the Hausdorff distance:
\begin{equation*}
H({\mathcal{P}},\hat{\mathcal{P}})=\frac{1}{T}\max\left\{dH({\mathcal{P}},\hat{\mathcal{P}}),dH(\hat{\mathcal{P}},{\mathcal{P}})\right\}, \qquad dH({\mathcal{P}},\hat{\mathcal{P}}) = \max_{p_1\in {\mathcal{P}}} \min_{p_2\in\hat{\mathcal{P}}} \Vert p_1-p_2\Vert,
\end{equation*}
where $dH({\mathcal{P}},\hat{\mathcal{P}})$ is the non-symmetric  Hausdorff distance that computes the distance from $p_1\in\mathcal{P}$ to its nearest neighbour in $\hat{\mathcal{P}}$. Finally, we report the runtime. To assess the parameter estimation accuracy, we consider the in-sample root mean squared error $RMSE(y_t,\hat{y}_t)$ and the one between the true and estimated signal $RMSE(\mu_t,\hat{\mu}_t)$. For each metric, we report the mean of $R=50$ replicates for each scenario. 

We compare three parameterizations of CPVS: $N_E=\{1,2\}$, denoted by \algoname($N_E$), and \algonamesp with the automatic procedure to set $N_E$ ($\bar{N}_E=2$, denoted by \algoname); see Supplementary D.
The grid $\Omega$ contains the $p=50$ frequencies with the highest power spectrum in the periodogram and we assign equal prior inclusion probability to each, namely $\pi_k=1/50$. Prior variance of the intensities $\sigma^2_0$ is set equal to $1$. The threshold $\delta$ is set to $1.01$ and the number of change points is estimated with the minimum description length (MDL) criterion.

We compare our method with the Automatic Nonstationary Oscillatory Modelling (ANOM) algorithm \citep{hadj2020bayesian}. We consider three parametrizations, varying the prior mean on the number of change points $\lambda_{m}=\{0.1,1,10\}$, and we set the prior mean on the number of frequencies $\lambda_\omega=2$. We denote the algorithm with a specific $\lambda_m$ as ANOM($\lambda_m$). We draw 20000 samples from the posterior, discarding the first 5000 as a burn-in period. These parameters' choice follows the authors' recommendation. We also include the piecewise autoregressive process of \cite{davis2006structural} (denoted as AP) and the adaptive spectral estimation of \cite{rosen2012adaptspec} for two different specifications of basis functions, $J=7$ and $J=15$, referred to as ADA(7) and ADA(15), respectively. We also consider the approach of \cite{korkas2017multiple} based on Locally Stationary Wavelets (hereafter denoted with LSW), and the method of \cite{wu2024frequency} that uses overlapping-block multiplier bootstrap (OBMB). AP is misspecified in the first four scenarios, ADA, LSW, and OBMB are nonparametric. For the multivariate setting, we compare CPVS with ADA-X \citep{bertolacci2022adaptspec}.

\subsection{Correctly specified model}

\noindent\textbf{Scenario 1.a (varying standard deviation).} \citep{hadj2020bayesian}. We set $T=900$, $m=2$, and change points at $\{300,650\}$. Figure E.1 in the Supplementary right column panels depicts $\mu_t$ in the three segments ($L_1=3$, $L_2=1$, $L_3=2$). We consider two standard deviations ($\sigma_j=1$, $\sigma_j=3$) to simulate different noise levels. Results are given in Supplementary E Table E.1. The segmentation performance is almost perfect for \algonamesp (coverage above $0.98$, Hausdorff distance close to $0$, and bias equal to $0$), ANOM has a comparable performance but with higher bias in the low noise case (bias bigger than $0.22$). OBMB provides good segmentation results for $\sigma=1$, while its performance deteriorates as the magnitude of the noise increases. OBMB appears to be sensitive to the choice of hyperparameters. We run it under several settings and reported the best performance. In addition, OBMB's running time is 600 times higher than \algonamesp and 8.5 times higher than ANOM. For this reason, OBMB is not considered in the next scenarios where $T$ is bigger. AP, ADA, and LSW have poorer performances, in particular, they cannot detect almost any change point in the higher noise regime. For this reason, we don't consider AP, ADA, and LSW in the following scenario, where we present a more complex simulation study.\\

\vspace{-0.3cm}\noindent\textbf{Scenario 2.a (varying mean and change point locations).} The locations of the change point and the mean functions within each segment are resampled at each replicate. See Supplementary E.1 for more details on the data generating mechanism. Here, no continuity in $\mathbb{E}(Y_t)$ at the change points is imposed (as in \eqref{eq:model}); next subsection, we consider a scenario where the mean is continuous. Within the same set-up, we explore two different cases. The first one assumes increasing both sample size and the number of change points $(T,m)=\{(500,2),(1000,4),(5000,8),(10000,12)\}$, while the second one fixes the sample size $T=1000$ and gradually increases the number of change points $m=\{2,3,4,5,6\}$. Standard deviation is constant ($\sigma=3$). Table \ref{tab:results1} presents metrics of segmentation accuracy, Table E.2 in the Supplementary summarizes parameter estimation accuracy, and Figure \ref{fig:results_time} depicts the runtime. 

In the case where both $T$ and $m$ increase, the performance \algonamesp is similar for different parameter specifications, with coverage around 0.89, Hausdorff distance around $0.05$, and a bias that varies depending on on $(T,m)$. ANOM performance is affected by the hyperparameter specification: it outperforms \algonamesp in terms of coverage and Hausdorff distance (average around $0.95$ and $0.03$) with $\lambda_m=10$, while its performance is much worse than \algonamesp when $\lambda_m=0.1$. This effect is particularly evident when looking at the bias, with ANOM failing to detect approximately half of the change points with $(T=10000,m=12)$, while \algonamesp misses one.

The case $T$ fixed and varying $m$ confirms the observed pattern: ANOM can be slightly more accurate for certain parameter specifications, but its performance deteriorates severely for other parametrizations. In addition, accuracy (bias, Hausdorff distance, and coverage) decreases as $m$ increases for both methods. However, \algoname's decay is less pronounced than ANOM. The coverage of \algonamesp is below 0.80 only in one scenario ($m=6$ and $N_E=1$) and its Hausdorff distance never exceeds 0.15, while ANOM shows coverages below 0.60 and Hausdorff distance higher than $0.3$ when $m=\{5,6\}$ and $\lambda_m=\{0.1,1\}$. The bias exhibits a similar trend.

A similar analysis applies to parameter estimation accuracy (Table E.2): \algonamesp performs comparably to ANOM across several scenarios. While ANOM—which does not rely on discretisation—can achieve higher accuracy, its performance is highly sensitive to the choice of hyperparameter; for example, when $T=10000,m=12$, ANOM(10) is the best performing model in terms of RMSE, ANOM(0.1) the worst one.

As expected, \algonamesp has a much lower average runtime than ANOM (Figure \ref{fig:results_time}). The advantage increases with the sample size (Figure \ref{fig:results_time} left panel). For instance, when considering $(T,m)=(10000,12)$, \algonamesp requires on average 7 seconds for $N_E=1$, and 20 seconds for $N_E=2$, while ANOM takes about 15 minutes when $\lambda_m=10$. ANOM's runtime is not affected by the number of change points (Figure \ref{fig:results_time} right panel), while \algonamesp's runtime increases with $m$. The advantage remains striking. As mentioned above, it is unfeasible to include OBMB for computational reasons. From preliminary results on a limited number of replicates, OBMB' runtime for a single replicate is about 8 minutes when $T=1000$, while it rapidly increases to $\sim 3$ hours and $\sim 12$ hours when $T=5000$ and $T=10000$, respectively.

% Table generated by Excel2LaTeX from sheet 'Foglio1'
\begin{table}[!t]
  \centering
    \resizebox{0.8\textwidth}{!}{
    \begin{tabular}{clrrrrrrrrrrr}
    \toprule
          &      &       & \multicolumn{4}{c}{$(T,\,m)$}        &       & \multicolumn{5}{c}{$(T=1000,\,m)$} \\
          &   Method    &       & $(500,2)$ & $(1000,4)$ & $(5000,8)$ & $(10000,12)$ &       & $2$     & $3$     & $4$     & $5$     & $6$ \\
              \cmidrule{2-2}\cmidrule{4-7}\cmidrule{9-13}
    \multirow{6}[0]{*}{\begin{turn}{90}$C(\mathcal{P},\hat{\mathcal{P}})$\end{turn}} & CPVS(1) &       & \cellcolor[rgb]{ .984,  .871,  .882}0,86 & \cellcolor[rgb]{ .984,  .851,  .859}0,84 & \cellcolor[rgb]{ .984,  .898,  .91}0,89 & \cellcolor[rgb]{ .984,  .902,  .914}0,89 &       & \cellcolor[rgb]{ .98,  .839,  .851}0,82 & \cellcolor[rgb]{ .984,  .898,  .91}0,89 & \cellcolor[rgb]{ .984,  .851,  .859}0,84 & \cellcolor[rgb]{ .98,  .831,  .843}0,82 & \cellcolor[rgb]{ .98,  .78,  .792}0,76 \\
          & CPVS(2) &       & \cellcolor[rgb]{ .984,  .914,  .922}0,90 & \cellcolor[rgb]{ .984,  .886,  .898}0,88 & \cellcolor[rgb]{ .984,  .89,  .902}0,88 & \cellcolor[rgb]{ .984,  .867,  .875}0,85 &       & \cellcolor[rgb]{ .984,  .894,  .906}0,89 & \cellcolor[rgb]{ .984,  .922,  .933}0,92 & \cellcolor[rgb]{ .984,  .886,  .898}0,88 & \cellcolor[rgb]{ .984,  .847,  .855}0,83 & \cellcolor[rgb]{ .984,  .843,  .855}0,83 \\
          & CPVS  &       & \cellcolor[rgb]{ .984,  .929,  .941}0,92 & \cellcolor[rgb]{ .984,  .894,  .906}0,88 & \cellcolor[rgb]{ .984,  .918,  .929}0,91 & \cellcolor[rgb]{ .984,  .91,  .918}0,90 &       & \cellcolor[rgb]{ .984,  .933,  .941}0,92 & \cellcolor[rgb]{ .984,  .929,  .941}0,92 & \cellcolor[rgb]{ .984,  .894,  .906}0,88 & \cellcolor[rgb]{ .984,  .855,  .867}0,84 & \cellcolor[rgb]{ .98,  .82,  .831}0,80 \\
          & ANOM(0.1) &       & \cellcolor[rgb]{ .984,  .859,  .867}0,84 & \cellcolor[rgb]{ .98,  .741,  .753}0,72 & \cellcolor[rgb]{ .98,  .753,  .765}0,73 & \cellcolor[rgb]{ .976,  .631,  .639}0,60 &       & \cellcolor[rgb]{ .984,  .961,  .973}0,96 & \cellcolor[rgb]{ .984,  .906,  .918}0,90 & \cellcolor[rgb]{ .98,  .741,  .753}0,72 & \cellcolor[rgb]{ .976,  .616,  .624}0,58 & \cellcolor[rgb]{ .973,  .412,  .42}0,36 \\
          & ANOM(1) &       & \cellcolor[rgb]{ .984,  .918,  .929}0,91 & \cellcolor[rgb]{ .984,  .89,  .898}0,88 & \cellcolor[rgb]{ .984,  .961,  .973}0,96 & \cellcolor[rgb]{ .984,  .925,  .937}0,92 &       & \cellcolor[rgb]{ .984,  .984,  .996}0,98 & \cellcolor[rgb]{ .984,  .965,  .976}0,96 & \cellcolor[rgb]{ .984,  .89,  .898}0,88 & \cellcolor[rgb]{ .98,  .82,  .831}0,80 & \cellcolor[rgb]{ .976,  .604,  .612}0,57 \\
          & ANOM(10) &       & \cellcolor[rgb]{ .984,  .949,  .961}0,94 & \cellcolor[rgb]{ .984,  .937,  .949}0,93 & \cellcolor[rgb]{ .984,  .976,  .988}0,97 & \cellcolor[rgb]{ .984,  .984,  .996}0,98 &       & \cellcolor[rgb]{ .988,  .988,  1}0,98 & \cellcolor[rgb]{ .984,  .973,  .984}0,97 & \cellcolor[rgb]{ .984,  .937,  .949}0,93 & \cellcolor[rgb]{ .984,  .875,  .882}0,86 & \cellcolor[rgb]{ .98,  .784,  .796}0,77 \\
          &       &       &       &       &       &       &       &       &       &       &       &  \\
    \multirow{6}[0]{*}{\begin{turn}{90}$H(\mathcal{P},\hat{\mathcal{P}})$\end{turn}} & \multicolumn{1}{l}{CPVS(1)} &       & \cellcolor[rgb]{ .984,  .843,  .851}0,08 & \cellcolor[rgb]{ .984,  .769,  .78}0,11 & \cellcolor[rgb]{ .988,  .886,  .898}0,06 & \cellcolor[rgb]{ .988,  .914,  .925}0,05 &       & \cellcolor[rgb]{ .984,  .831,  .843}0,15 & \cellcolor[rgb]{ .988,  .906,  .918}0,09 & \cellcolor[rgb]{ .988,  .871,  .882}0,11 & \cellcolor[rgb]{ .988,  .882,  .894}0,11 & \cellcolor[rgb]{ .988,  .871,  .882}0,12 \\
          & \multicolumn{1}{l}{CPVS(2)} &       & \cellcolor[rgb]{ .988,  .894,  .906}0,06 & \cellcolor[rgb]{ .988,  .875,  .886}0,07 & \cellcolor[rgb]{ .988,  .906,  .918}0,05 & \cellcolor[rgb]{ .988,  .906,  .914}0,05 &       & \cellcolor[rgb]{ .988,  .902,  .914}0,09 & \cellcolor[rgb]{ .988,  .957,  .965}0,04 & \cellcolor[rgb]{ .988,  .929,  .941}0,07 & \cellcolor[rgb]{ .988,  .918,  .925}0,08 & \cellcolor[rgb]{ .988,  .925,  .937}0,07 \\
          & \multicolumn{1}{l}{CPVS} &       & \cellcolor[rgb]{ .988,  .898,  .91}0,05 & \cellcolor[rgb]{ .988,  .847,  .859}0,08 & \cellcolor[rgb]{ .988,  .918,  .929}0,05 & \cellcolor[rgb]{ .988,  .922,  .933}0,04 &       & \cellcolor[rgb]{ .988,  .941,  .953}0,05 & \cellcolor[rgb]{ .988,  .941,  .953}0,05 & \cellcolor[rgb]{ .988,  .914,  .925}0,08 & \cellcolor[rgb]{ .988,  .89,  .902}0,10 & \cellcolor[rgb]{ .988,  .898,  .91}0,09 \\
          & \multicolumn{1}{l}{ANOM(0.1)} &       & \cellcolor[rgb]{ .98,  .659,  .671}0,16 & \cellcolor[rgb]{ .973,  .412,  .42}0,28 & \cellcolor[rgb]{ .98,  .573,  .58}0,20 & \cellcolor[rgb]{ .976,  .482,  .49}0,25 &       & \cellcolor[rgb]{ .988,  .941,  .953}0,05 & \cellcolor[rgb]{ .988,  .882,  .894}0,10 & \cellcolor[rgb]{ .98,  .686,  .694}0,28 & \cellcolor[rgb]{ .98,  .588,  .596}0,36 & \cellcolor[rgb]{ .973,  .412,  .42}0,52 \\
          & \multicolumn{1}{l}{ANOM(1)} &       & \cellcolor[rgb]{ .984,  .843,  .851}0,08 & \cellcolor[rgb]{ .984,  .78,  .788}0,11 & \cellcolor[rgb]{ .988,  .941,  .953}0,04 & \cellcolor[rgb]{ .988,  .875,  .886}0,07 &       & \cellcolor[rgb]{ .988,  .988,  1}0,01 & \cellcolor[rgb]{ .988,  .961,  .973}0,04 & \cellcolor[rgb]{ .988,  .878,  .89}0,11 & \cellcolor[rgb]{ .984,  .816,  .824}0,17 & \cellcolor[rgb]{ .98,  .596,  .608}0,35 \\
          & \multicolumn{1}{l}{ANOM(10)} &       & \cellcolor[rgb]{ .988,  .925,  .937}0,04 & \cellcolor[rgb]{ .988,  .89,  .902}0,06 & \cellcolor[rgb]{ .988,  .969,  .98}0,02 & \cellcolor[rgb]{ .988,  .988,  1}0,01 &       & \cellcolor[rgb]{ .988,  .988,  1}0,01 & \cellcolor[rgb]{ .988,  .98,  .992}0,02 & \cellcolor[rgb]{ .988,  .937,  .949}0,06 & \cellcolor[rgb]{ .988,  .867,  .878}0,12 & \cellcolor[rgb]{ .984,  .8,  .812}0,18 \\
          &       &       &       &       &       &       &       &       &       &       &       &  \\
    \multirow{6}[0]{*}{\begin{turn}{90}$B(m,\widehat{m})$\end{turn}} & CPVS(1) &       & \cellcolor[rgb]{ .988,  .98,  .992}0,16 & \cellcolor[rgb]{ .988,  .937,  .949}0,62 & \cellcolor[rgb]{ .988,  .922,  .933}0,78 & \cellcolor[rgb]{ .988,  .89,  .902}1,12 &       & \cellcolor[rgb]{ .988,  .953,  .965}0,44 & \cellcolor[rgb]{ .988,  .965,  .976}0,32 & \cellcolor[rgb]{ .988,  .937,  .949}0,62 & \cellcolor[rgb]{ .988,  .937,  .949}0,60 & \cellcolor[rgb]{ .988,  .898,  .91}1,02 \\
          & CPVS(2) &       & \cellcolor[rgb]{ .988,  .984,  .996}0,12 & \cellcolor[rgb]{ .988,  .961,  .969}0,38 & \cellcolor[rgb]{ .988,  .918,  .929}0,82 & \cellcolor[rgb]{ .984,  .839,  .851}1,68 &       & \cellcolor[rgb]{ .988,  .973,  .984}0,22 & \cellcolor[rgb]{ .988,  .984,  .996}0,12 & \cellcolor[rgb]{ .988,  .961,  .969}0,38 & \cellcolor[rgb]{ .988,  .945,  .957}0,54 & \cellcolor[rgb]{ .988,  .937,  .949}0,60 \\
          & CPVS  &       & \cellcolor[rgb]{ .988,  .984,  .996}0,12 & \cellcolor[rgb]{ .988,  .957,  .969}0,42 & \cellcolor[rgb]{ .988,  .929,  .941}0,68 & \cellcolor[rgb]{ .988,  .898,  .91}1,02 &       & \cellcolor[rgb]{ .988,  .98,  .992}0,14 & \cellcolor[rgb]{ .988,  .973,  .984}0,22 & \cellcolor[rgb]{ .988,  .957,  .969}0,42 & \cellcolor[rgb]{ .988,  .937,  .949}0,60 & \cellcolor[rgb]{ .988,  .922,  .933}0,80 \\
          & ANOM(0.1) &       & \cellcolor[rgb]{ .988,  .953,  .965}0,46 & \cellcolor[rgb]{ .988,  .851,  .863}1,56 & \cellcolor[rgb]{ .98,  .694,  .702}3,26 & \cellcolor[rgb]{ .973,  .412,  .42}6,30 &       & \cellcolor[rgb]{ .988,  .98,  .992}0,16 & \cellcolor[rgb]{ .988,  .953,  .965}0,46 & \cellcolor[rgb]{ .988,  .851,  .863}1,56 & \cellcolor[rgb]{ .984,  .745,  .757}2,70 & \cellcolor[rgb]{ .98,  .573,  .58}4,58 \\
          & ANOM(1) &       & \cellcolor[rgb]{ .988,  .973,  .984}0,22 & \cellcolor[rgb]{ .988,  .941,  .953}0,56 & \cellcolor[rgb]{ .988,  .953,  .965}0,46 & \cellcolor[rgb]{ .988,  .867,  .875}1,40 &       & \cellcolor[rgb]{ .988,  .988,  1}0,04 & \cellcolor[rgb]{ .988,  .98,  .992}0,16 & \cellcolor[rgb]{ .988,  .941,  .953}0,56 & \cellcolor[rgb]{ .988,  .882,  .894}1,22 & \cellcolor[rgb]{ .984,  .714,  .722}3,06 \\
          & ANOM(10) &       & \cellcolor[rgb]{ .988,  .984,  .996}0,10 & \cellcolor[rgb]{ .988,  .965,  .976}0,32 & \cellcolor[rgb]{ .988,  .976,  .988}0,20 & \cellcolor[rgb]{ .988,  .976,  .988}0,18 &       & \cellcolor[rgb]{ .988,  .988,  1}0,04 & \cellcolor[rgb]{ .988,  .988,  1}0,08 & \cellcolor[rgb]{ .988,  .965,  .976}0,32 & \cellcolor[rgb]{ .988,  .925,  .933}0,76 & \cellcolor[rgb]{ .988,  .847,  .859}1,58 \\
          \bottomrule
    \end{tabular}%
  }
  \vspace{0.3em}
\caption{Scenario 2.a: segmentation performance. CPVS is our proposal, ANOM by \cite{hadj2020bayesian}. Three metrics: segmentation coverage $C(\mathcal{P},\hat{\mathcal{P}})\in (0,1)$ (the higher, the better), Hausdorff distance $H(\mathcal{P},\hat{\mathcal{P}})\in (0,1)$ (the lower, the better), and bias $B(m,\hat{m})\in [0,m]$ (the lower, the better). The table shows the results in both varying $T$ and fixed $T=1000$ scenarios. The red scale depicts worse performances.}\label{tab:results1}%
\end{table}%

\begin{figure}[!t]
\centering
\includegraphics[width=.45\textwidth]{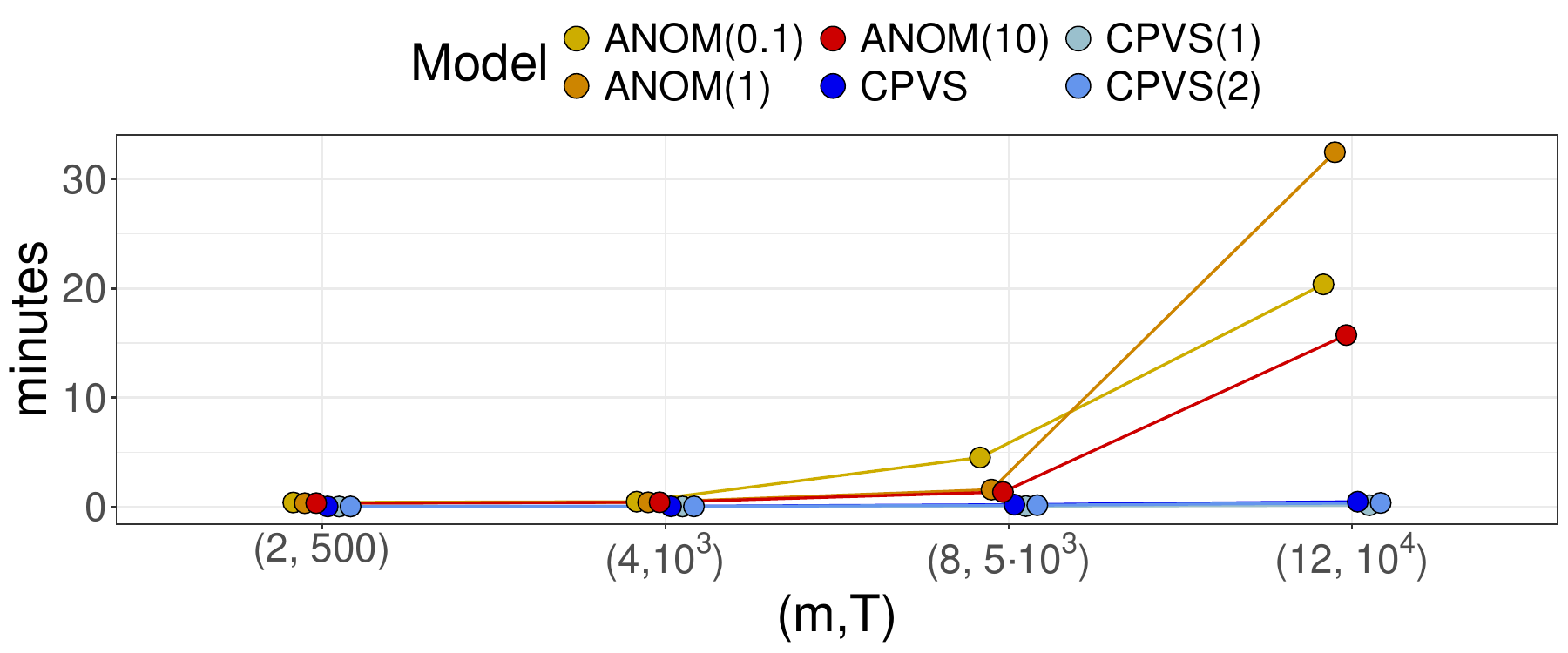}
\includegraphics[width=.45\textwidth]{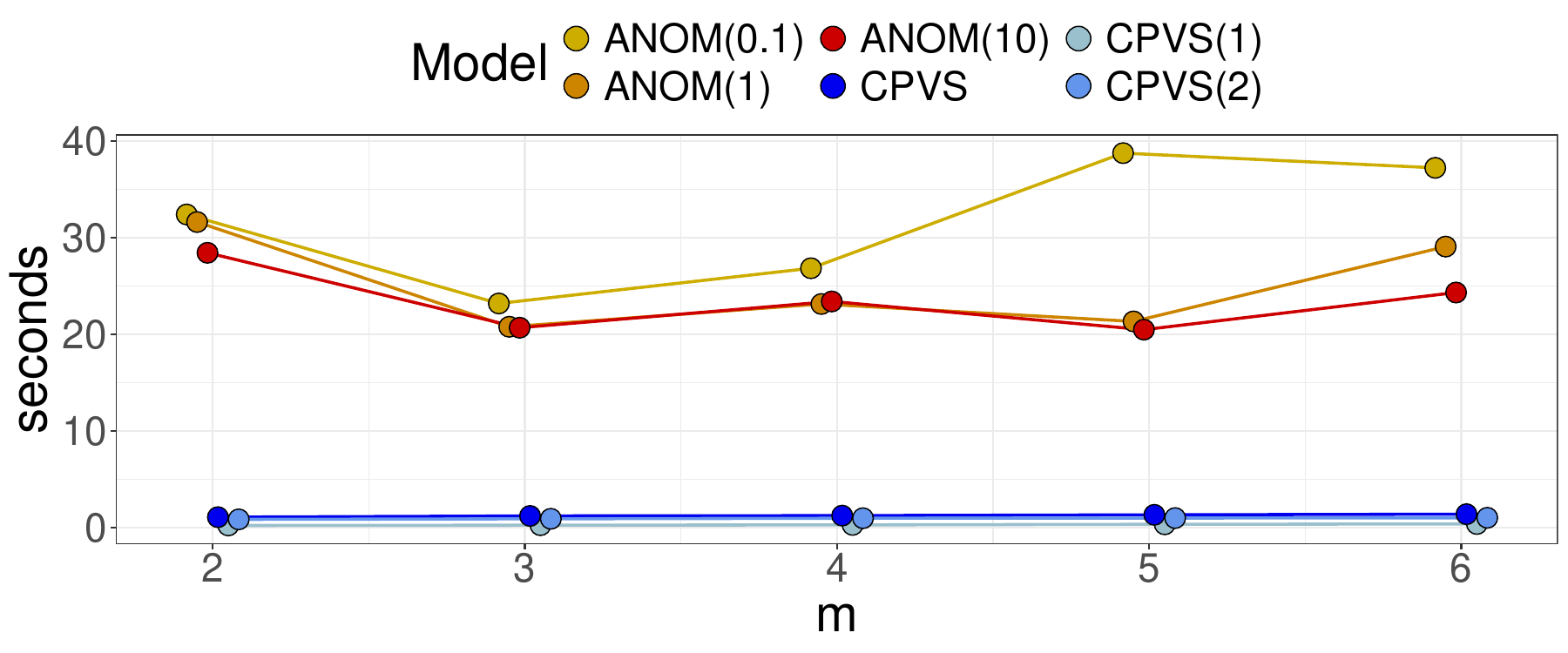}
\caption{{Scenario 2.a: Average runtime}. Left panel depicts the case with varying $(T,m)$, right panel the case with fixed $T=1000$ and varying $m$.}\label{fig:results_time}
\end{figure}

\vspace{-0.3cm}\noindent\textbf{Scenario 3 (multivariate, dense regime).} We replicate Scenario 2 for fixed $T=1000$ and $m=\{4,6\}$, now, varying dimension $d$. All series experience a change and $\sigma_i=3$ for all $i$. Table E.5 in the Supplementary presents results indicating that segmentation accuracy improves as the number of series increases. The multivariate version of \algoname, denoted by MCPVS, outperforms the multivariate nonparametric approach of \cite{bertolacci2022adaptspec} (ADA-X). We further investigate the performance of \algonamesp in presence of an increasing number of noisy series in the panel. We set $d=10$, $m=4$, and consider two noise regimes: $\sigma_{i}=3$ for $i=1,\ldots,d_1$ and $\sigma_{i}=9$ for $i=d_1+1,\ldots,10$, where $d_1\in\{9,7,4\}$. Table E.6 in the Supplementary reports the results. As expected, the performance deteriorates but not uniformly across metrics. Bias is the most severely affect and varies from $0.18$ ($d_1=9$) to $0.96$ ($d_1=4$). MCPVS outperforms ADA-X also in this case.\\

\vspace{-0.3cm}In summary, in the correctly specified scenarios, \algonamesp exhibits comparable performance to ANOM but it is order of magnitude of order faster. It is not surprising that ANOM can achieve a better performance, as it performs exact inference for \eqref{eq:model}, while the discretisation of \algonamesp introduces an approximation. However, ANOM's performance can be much worst if the RJ-MCMC parameters are not chosen well. This confirms knowns issues in the use of RJ-MCMC \citep{brooks2003efficient,fearnhead2006exact}. In these scenarios, AR and ADA are not competitive, as already illustrated by \cite{hadj2020bayesian}. 

\subsection{Misspecified model}

\noindent\textbf{Scenario 1.b (t-Student innovations).} Data-generating mechanism is identical to Scenario 1.a, except for $\varepsilon_t$ being Student-t distributed with $\nu=3$ degrees of freedom. \algoname, ANOM and OBMB segmentation coverage is essentially unaffected by the heavier tail signal (see Table E.1), while ANOM's and OBMB's Hausdorff and bias are increasing, with the latter being larger than $0.3$ (recall that $m=2$). Again, AP, ADA, and LSW exhibit the poorest performance.\\

\vspace{-0.3cm}\noindent\textbf{Scenario 1.c (non-stationary innovations).} Data-generating mechanism for the signal is the same as Scenario 1.a but $\varepsilon_t$ is a non-stationary process (defined by model M1 in Section 5 of  \cite{wu2024frequency}). Methods' performances are similar to the ones in Scenario 1.b (see Table E.1).\\

\vspace{-0.3cm}\noindent\textbf{Scenario 2.b (varying mean and change point locations).} We replicate Scenario 2.a, with the key difference that continuity in the mean signal is enforced; see Section E for details. Tables E.3 and E.4 in the Supplementary summarize the segmentation and parameter estimation accuracy. The relative performance between \algonamesp and ANOM remains consistent. Notably, segmentation accuracy is identical to that in Scenario 2.a—even though the model is effectively misspecified. The mean RMSE matches that of Scenario 2.a when the number of change points is low ($m \leq 3$), but deteriorates thereafter, with increases of up to $50\%$. Prediction RMSE remains unaffected.

In the next two scenarios, AP is correctly specified (the two scenarios are taken from \cite{davis2006structural}), while \algonamesp  and ANOM are misspecified. ADA and LSW are nonparametric.\\

\vspace{-0.3cm}\noindent\textbf{Scenario 4 (piece-wise autoregressive process).} \citep{davis2006structural}. The data-generating mechanism is detailed in Supplementary E.1 and the segmentation results are reported in Table E.7. In this setting, AP and ADA outperform \algoname, ANOM, and LSW in terms of coverage, Hausdorff distance and bias. However, \algonamesp performance is similar to LSW, only slightly worse than AP and ADA, and much better than ANOM (irrespective of the parametrization). \\

\vspace{-0.3cm}\noindent\textbf{Scenario 5 (slowly varying autoregressive process).} \citep{davis2006structural}
The time-varying parameter autoregressive process assumes no structural breaks, but a continuously changing spectrum over time. 
The data-generating mechanism is detailed in Supplementary E.1. As there are no change points, we compute the mean squared error between the continuous true log spectrum $f$ and the estimate $\hat{f}$ given by each model. \algonamesp approximates the time-varying spectrum better than ANOM and ADA, while the performance is comparable to AP.\\

\vspace{-0.3cm}\noindent\textbf{Scenario 6 (no-signal and no-change point).} The data-generating mechanism is a white noise process and no change-point are assumed. In this setting, \algonamesp estimates one false change-point in the one replicate out of 50 when $N_E=1$ and two replicates out of 50 when $N_E=2$.\\

\vspace{-0.3cm}In summary, in the misspecified scenarios, \algonamesp remains competitive. Its performance is slightly worse than AP and ADA in the piece-wise autoregressive scenario, but it is better than ADA with time-varying parameter. When only the noise process is misspecified, it remains the best performing model.

\section{Applications to real data}
\label{sec:appl}
\subsection{Multichannel EEG sleep data}\label{app:eeg_main}
Multichannel electroencephalogram (EEG) is often used to study brain activity during sleep stages, and it can provide insights into patients' neuropsychiatric conditions \citep{manoach2019abnormal}. The DREAMS database \citep{devuyst2005dreams} provides 8 excerpts of 30 minutes from the whole night recording for different subjects. The excerpts contain three EEG channels (FP1-A1, C3-A1 or CZ-A1, and O1-A1), and brain activities such as sleep spindles are annotated by two experts. Spindles are 
characterized by short rhythmic oscillations in the brain activity visible on an EEG during non-rapid eye movement sleep (NREM). They occur at an higher frequency range than normal NREM brain activity and are of interest in neuropsychiatric research since studies link their abnormal presence to neuropsychiatric conditions, such as schizophrenia, prion disease, and epilepsy \citep{ferrarelli2007reduced}. The development of automatic segmentation method is an active research area \citep{parekh2017multichannel}. 

In this analysis, we consider a sub-period of an excerpt of a patient chosen at random. Figure F.1 depicts the three series considered. We model the first-order differences to reduce the temporal trend, which is clearly visible. We consider MCPVS with the grid $\Omega$ defined according to the $100$ frequencies with the higher power spectrum in the periodogram, and $N_E=2$ (lowest MDL, comparing $N_E=1,2$ and the automatic procedure with $\bar{N}_E=6$). In addition, the prior probability for each frequency in the grid $\Omega$ is set to $\pi_k=0.01$, and the prior variance for $\beta_e^{(1)},\beta_e^{(2)}$ is set to $\sigma^2_0=1$. 
As a comparison, we also include in the analysis ADA-X \citep{bertolacci2022adaptspec} with $J=7$ basis functions. Finally, we also estimate the change points using the univariate methods (namely CPVS, AP, and ADA) independently on each series. AP is run with default parameters. Similar to the multivariate counterpart, $J=7$ basis functions are set for ADA and $N_E=2$ for CPVS. ANOM \citep{hadj2020bayesian} is not included as we were not able to get convergence runs ($100000$ iterations, $\lambda_{m}=10$, $\lambda_\omega=2$): every run resulted in a different segmentation. 

Figure~\ref{fig:sleep_cp} depicts the change points estimated by MCPVS (blue) and ADA-X (yellow). The segmentation estimated by MCPVS partially overlaps with the sleep spindles events marked by experts (see Figure \ref{fig:sleep_cp}), whereas ADA-X detects only one segment linked to spindle activity. Univariate methods (analysing each series separately) fail to consistently detect change points around the sleep spindles events (Figure F.2): AP doesn't detect any change point, while ADA can identify a single spindle activity only for the FP1-A1 channel.

Figure \ref{fig:sleep_spectra} illustrates the maximum a posteriori frequencies and relative intensities within each segment estimated by MCPVS. Both segments that overlap with the spindles have the same highest intensity frequency (in absolute value) across the three channels. This is consistent with the literature that suggests that spindles should occur at 12-16 Hz. Figure F.3 in the Supplementary depicts the posterior distribution $q(\gamma_e)$ in each segment. The figure suggests that there is little uncertainty around the selected frequency within each segment, with $95\%$ credible sets including at most four frequencies, with often a single frequency being selected with posterior probability larger that $0.95$.

\begin{figure}[!t]
\centering
\includegraphics[width=1\textwidth]{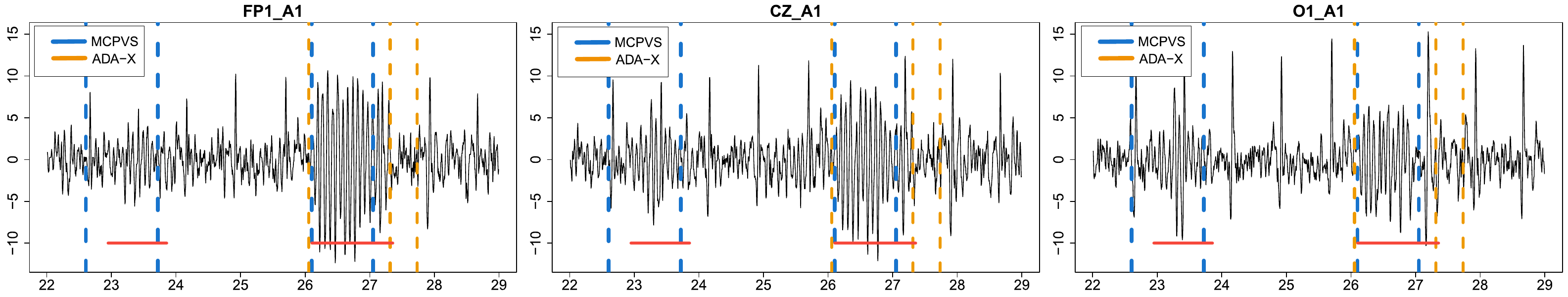}
\caption{{EEG sleep data: segmentation by MCPVS and ADA-X}. Segmentation provided by MCPVS and the ADA-X of \cite{bertolacci2022adaptspec}. Three panels refer to distinct EEG channels (FP1-A1, C3-A1, O1-A1). Black lines depict the first-order difference of EEG recordings. Vertical lines represent the change point estimated by MCPVS (blue) and ADA-X (yellow). Sleep spindles annotated by the experts are highlighted in red.}\label{fig:sleep_cp}
\end{figure}

\begin{figure}[!t]
\centering
{\includegraphics[width=.45\textwidth]{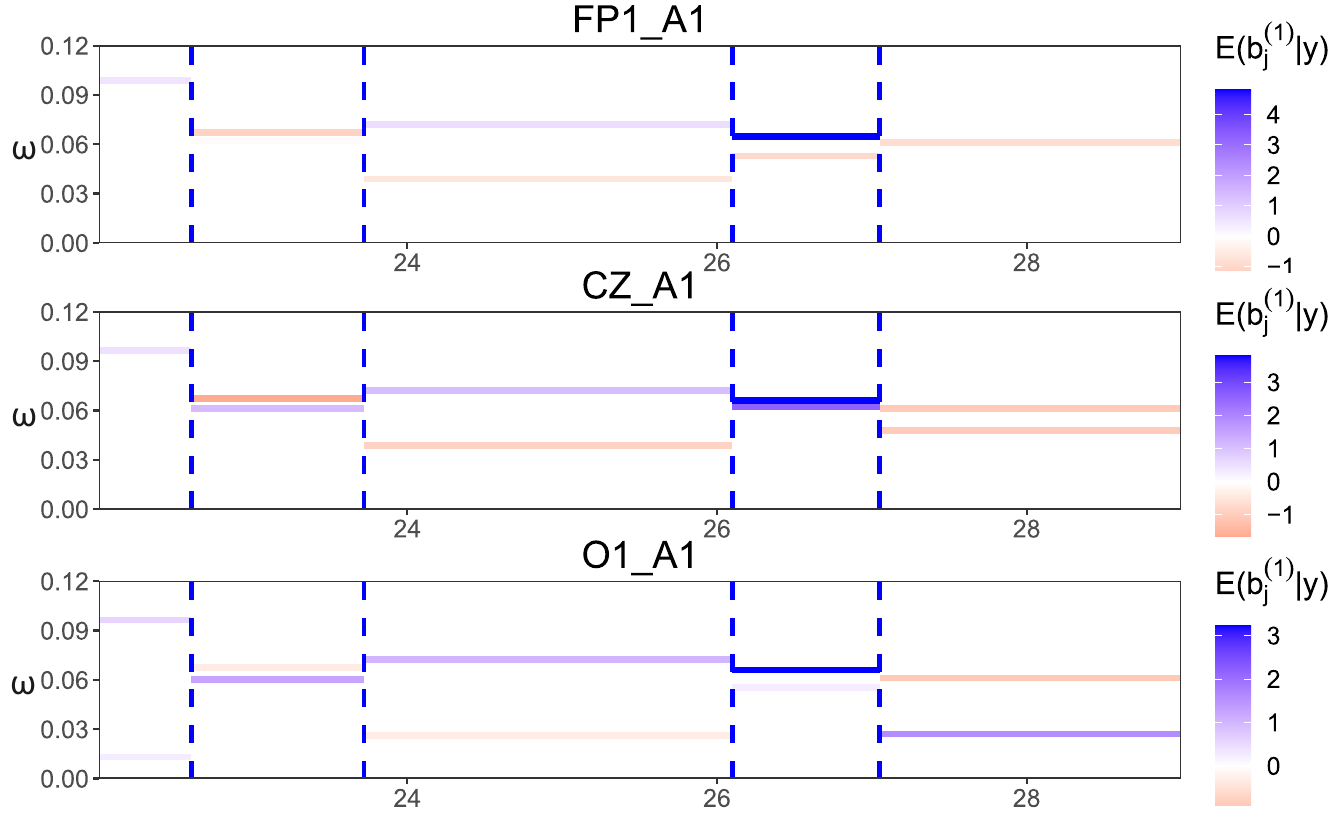}}
{\includegraphics[width=.45\textwidth]{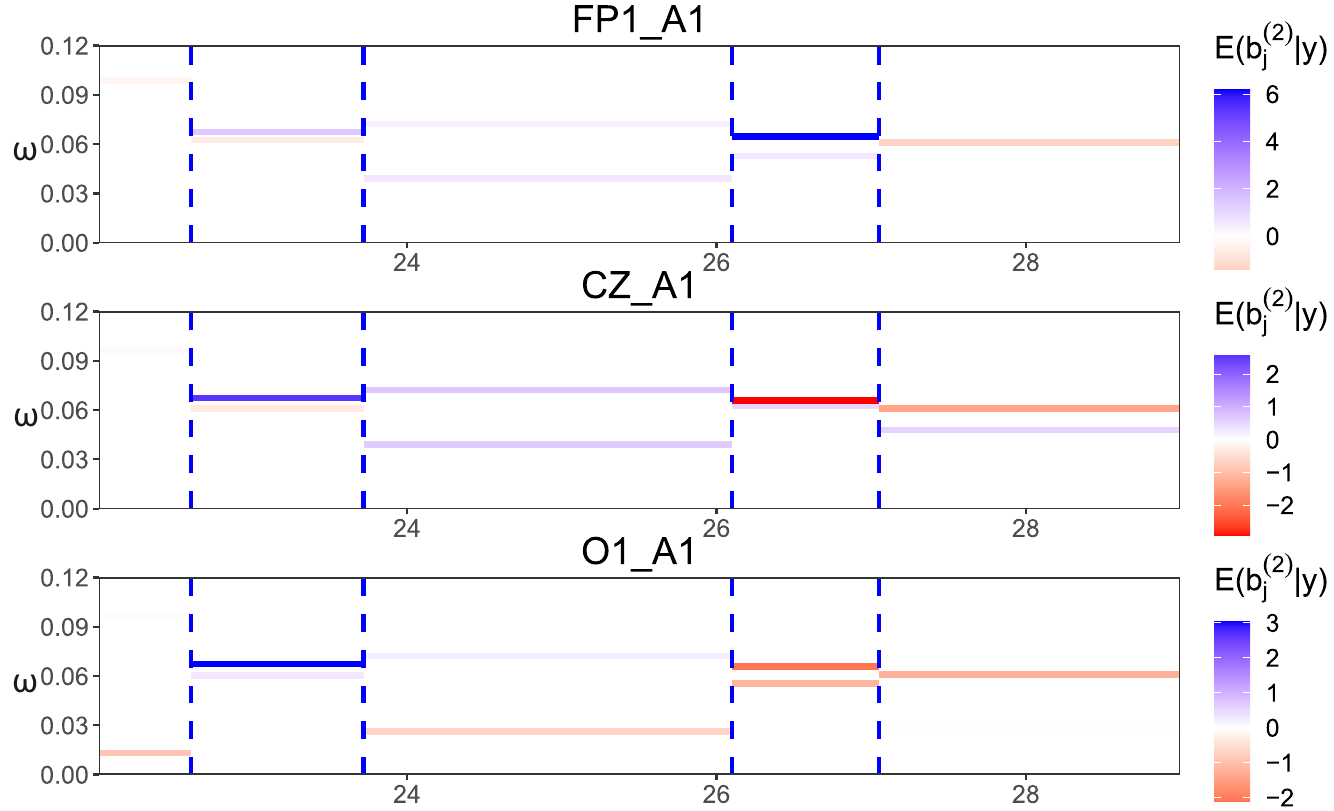}}
\caption{{EEG sleep data: estimated frequencies and intensities by segment ($N_E=2$)}. Vertical lines depict the change points estimated by MCPVS. Within each segment, horizontal bars depict the maximum a posteriori frequency, the colour gradient depicts the corresponding intensity.}\label{fig:sleep_spectra}
\end{figure}

\subsection{El Ni\~no-Southern Oscillation (ENSO)}\label{app:nino}
The El Ni\~no-Southern Oscillation (ENSO) is a periodic climate pattern linked to changes in the eastern tropical Pacific Ocean water temperature, and it affects the global weather. Detecting possible changes in ENSO dynamics is an active research topic. The National Oceanic and Atmospheric Administration (NOAA) and other researchers claim that changes in ENSO characteristics of ENSO are expected over time \citep[see, e.g.,][]{wang2017continued}, while other studies do not show evidence for that \citep[see][]{nicholls2008recent}. 

Part of the difficulty in studying ENSO is the large availability of different indexes that track ENSO dynamics. \cite{rosen2012adaptspec} studied El Ni\~no3.4 index, a commonly used index for studying ENSO measuring the sea surface temperature averaged across the region 5S-5N and 170-120W, and found no statistical evidence for a change in ENSO's frequency spectrum in the period from 1950 to 2011. However, \cite{trenberth2001indices} suggests that for a comprehensive understanding of the ENSO, El Ni\~no3.4 should be analysed jointly with a second index, the Trans-Ni\~no Index (TNI), which is given by the difference of the sea surface temperature between Ni\~no-1+2 and Ni\~no-4 regions. 
Here, we apply MCPVS to study the two indexes jointly.

The data is provided by the NSF National Center for Atmospheric Research (available for download at \url{https://climatedataguide.ucar.edu/climate-data}). The series consists of monthly observations spanning a period between January 1950 and November 2022. Moreover, we use the variables in the first difference. Data are depicted in Figure F.4. For the MCPVS setting, we consider both $N_E=1$ and $N_E=2$, and build $\Omega$ based on the $100$ frequencies with the higher power spectrum according to the periodogram. The prior probability for each frequency is set to $\pi_k=0.01$. We then apply the automatic procedure for the choice of $N_E$, which suggests $N_E=2$; therefore the following results are shown for this setting. Finally, the prior variance for the coefficients $\beta_e^{(1)},\beta_e^{(2)}$ is set to $\sigma^2_0=1$.

Figure~\ref{fig:nino_spectra2} depicts the estimated change points (dashed vertical line), the maximum a posteriori frequencies (horizontal bars) and intensities (colour gradient) within each segment. Our method estimates a change point in March 1996. TNI shows a change in the frequencies before and after March 1996, namely $\hat{\bm{\omega}}_1=(0.055,0.061)$ and $\hat{\bm{\omega}}_2 = (0.068,0.083)$. In contrast, the maximum a posteriori frequencies for El Ni\~no3.4 index do not change in the two segments ($\widehat{\bm{\omega}}_1 = \widehat{\bm{\omega}}_2 = (0.083,0.166)$). This is consistent with \cite{rosen2012adaptspec}, who analysed this index alone. However, we can observe a change in the posterior distribution of the intensities, see Figure F.5 where we compare the SuSiE posteriors $q(b_{je}^{(1)})$ and $q(b_{je}^{(2)})$ between the two segments $j=1,2$ (before and after the change point) and for $e=1,2$ (each component in SuSiE). The change in the intensity of the process is evident for $b_{j1}^{(1)}$ and $b_{j2}^{(2)}$ in the two segments $j=1,2$.

Figures F.6 and F.7 depict the posterior distribution $q(\gamma_{je})$ for the two segments of the estimated partition. For both El Ni\~no3.4 index and TNI, the plot suggests that \algonamesp selects one frequency for each effect with inclusion probability larger than $0.95$.

\begin{figure}[!ht]
\centering
{\includegraphics[width=.45\textwidth]{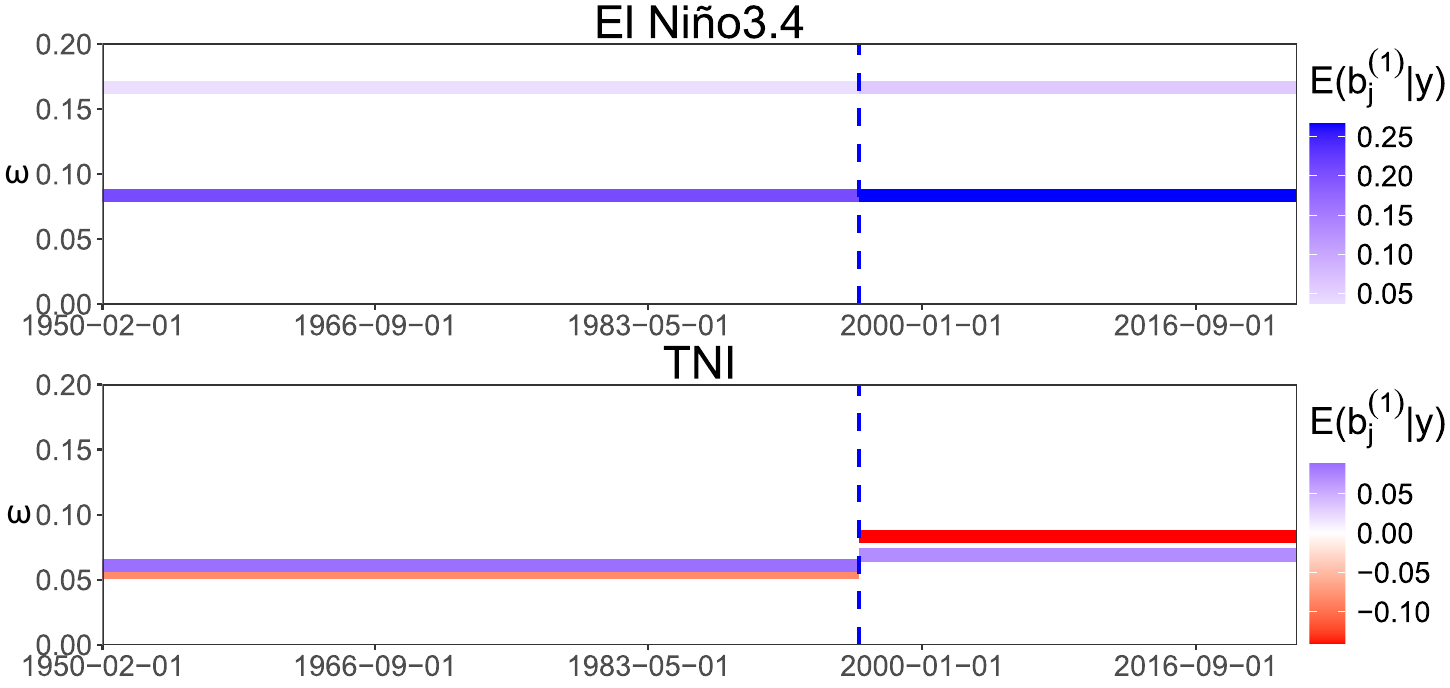}}
{\includegraphics[width=.45\textwidth]{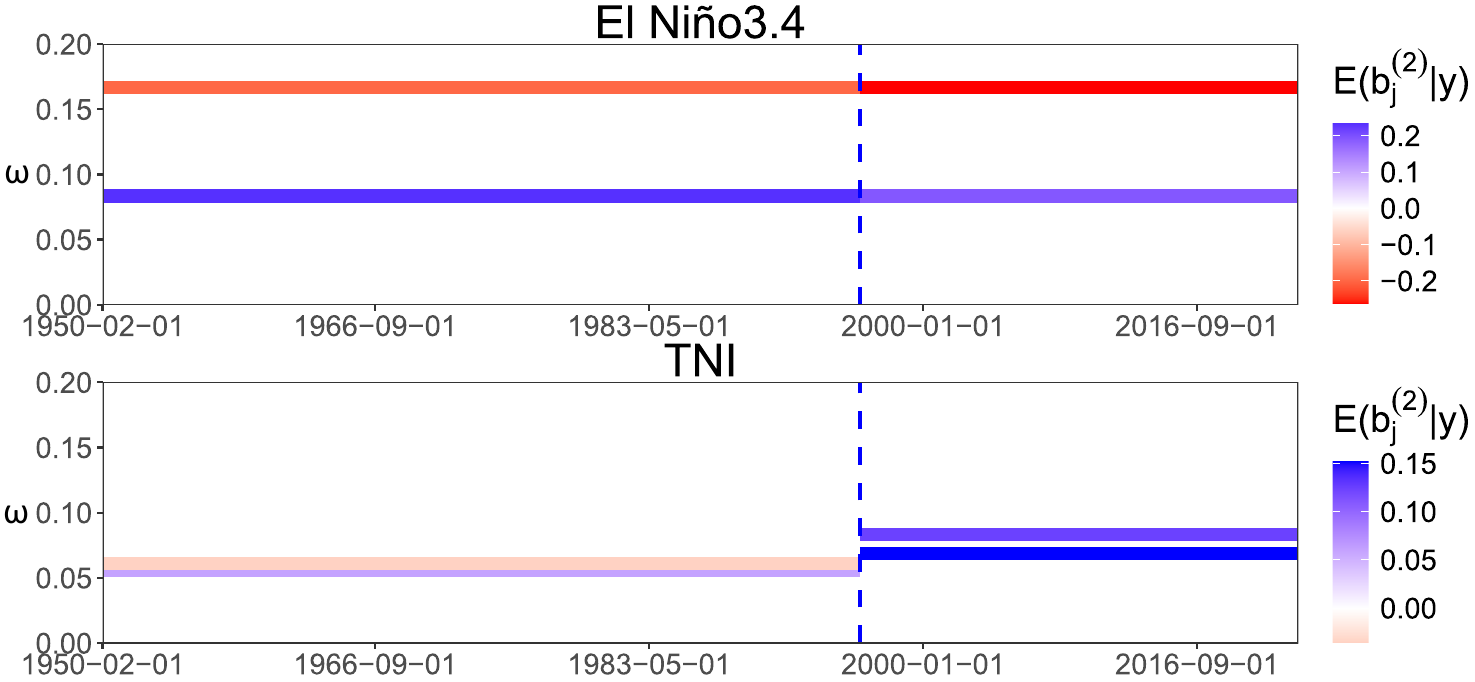}}
\caption{{ENSO data: estimated frequencies and intensities ($N_E=2$)}. Vertical lines depict the change points estimated by MCPVS. Within each segment, horizontal bars depicts the maximum a posterior frequency, the color gradient depicts the corresponding intensity.}\label{fig:nino_spectra2}
\end{figure}

\section{Continuous mean signal}\label{sec:cont}

Model \eqref{eq:model} does not assume continuity of $\mu_{it}$ at change points. In Section~\ref{sec:sim}, we showed that CVPS is capable of recovering the true change points with high accuracy, at least numerically, when continuity is enforced — the accuracy is in line with the simulation where we sample from model \eqref{eq:model} and do not impose this constraint. Both the sleep spindle and ENSO applications appear to exhibit continuous mean functions, suggesting it is natural to consider whether a change point method tailored to oscillatory series can be developed to explicitly enforce continuity.

Change point detection with continuity constraints has only recently gained attention, with studies focusing on continuous piecewise linear \citep{maidstone2017optimal,baranowski2019narrowest} and polynomial trends \citep{yu2022localising}. A key challenge is that continuity induces parameter dependencies across segments, potentially imposing constraints on the parameter space—as is the case in our model, which we illustrate below using a simplified version of model~\eqref{eq:model}.

\begin{proposition}\label{prop:continuity} Let $(Y_t)_{t\geq 1}$ be a sequence of independent random variables with $\mathbb{E}(Y_t)= (\beta^{(1)}\sin (2 \pi \omega_1 t) + \beta^{(2)} \cos (2 \pi\omega_1 t)) 1(t \leq t_0) + (\beta^{(1)}\sin (2 \pi\omega_2 t) + \beta^{(2)}\cos (2 \pi \omega_2 t)) 1(t > t_0)$. Then, if $\omega_1 \in (0,1/2)$, we have that $\mu_{t_0^-}=\mu_{t_0^+}$ if and only if $\omega_2\in \Omega_A \cup \Omega_B$ where
\begin{align*}
    \Omega_A &:= \left\{ \omega_1 + \frac{k}{t_0} : k \in \mathbb{Z},\ 0 < \omega_1 + \frac{k}{t_0} < \frac{1}{2} \right\}, \\ 
    \Omega_B&:=\left\{ \frac{1}{2} - \omega_1 + \frac{k}{t_0} : k \in \mathbb{Z},\ 0 < \frac{1}{2} - \omega_1 + \frac{k}{t_0} < \frac{1}{2} \right\}.
\end{align*}
\end{proposition}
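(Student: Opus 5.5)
The plan is to reduce the continuity condition to an equality between two values of a single phase-shifted sinusoid, and then solve that trigonometric equation for $\omega_2$. Under the model in Proposition~\ref{prop:continuity}, continuity at the change point means that the left and right mean functions agree when both are evaluated at $t_0$:
\begin{equation*}
\beta^{(1)}\sin(2\pi\omega_1 t_0)+\beta^{(2)}\cos(2\pi\omega_1 t_0)=\beta^{(1)}\sin(2\pi\omega_2 t_0)+\beta^{(2)}\cos(2\pi\omega_2 t_0).
\end{equation*}
First I will dispose of the degenerate case $\beta^{(1)}=\beta^{(2)}=0$, where continuity holds trivially, and assume $R:=\{(\beta^{(1)})^2+(\beta^{(2)})^2\}^{1/2}>0$. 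Using the harmonic-addition identity, choose $\phi$ with $\cos\phi=\beta^{(1)}/R$ and $\sin\phi=\beta^{(2)}/R$. Then $\beta^{(1)}\sin x+\beta^{(2)}\cos x=R\sin(x+\phi)$. Because the amplitudes are shared across the two segments, the condition becomes $\sin(2\pi\omega_1 t_0+\phi)=\sin(2\pi\omega_2 t_0+\phi)$.

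Next I will apply the elementary fact that $\sin a=\sin b$ if and only if $b=a+2\pi k$ or $b=\pi-a+2\pi k$ for some $k\in\mathbb{Z}$. The first branch gives $2\pi\omega_2 t_0=2\pi\omega_1 t_0+2\pi k$, that is $\omega_2=\omega_1+k/t_0$. Intersecting this with the admissible range $\omega_2\in(0,1/2)$ yields exactly $\Omega_A$. The second branch gives $2\pi\omega_2 t_0+\phi=\pi-2\pi\omega_1 t_0-\phi+2\pi k$. Solving for $\omega_2$ and restricting again to $(0,1/2)$ should produce the reflected family $\Omega_B$. Both directions of the ``if and only if'' follow together, since each step is an equivalence once $R>0$.

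The step I expect to be the main obstacle is this second branch. Solving it literally gives $\omega_2=\frac{1/2-\phi/\pi}{t_0}-\omega_1+\frac{k}{t_0}$, which still depends on the phase $\phi$, whereas the stated $\Omega_B$ does not. I will first try to reconcile the two by a careful change of variable. One option is to rewrite the reflection using $\sin(\pi-x)=\sin x$ in the form $\omega\mapsto\frac12-\omega$, exploiting the symmetry of the grid $(0,1/2)$. Another is to absorb $\phi$ into the integer shift, checking whether the intended convention implicitly fixes the phase, for example a pure-sine or pure-cosine term, or evaluation at a time where $\phi$ contributes a multiple of $\pi$.

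If neither reconciliation works in general, I will state the result for the phase convention under which the $\phi$ terms cancel, and show that the general case gives the same set up to this phase-dependent offset. In both versions the qualitative conclusion is unchanged: continuity restricts $\omega_2$ to a discrete, $1/t_0$-spaced set determined by $\omega_1$. Finally, I will check the endpoints, where $\omega_2$ hits $0$ or $1/2$, to confirm that the strict inequalities in the definitions of $\Omega_A$ and $\Omega_B$ are the right ones.
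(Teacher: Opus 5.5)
Your route is the same as the paper's: write $\beta^{(1)}\sin x+\beta^{(2)}\cos x=R\sin(x+\phi)$, reduce continuity to $\sin(2\pi\omega_1t_0+\phi)=\sin(2\pi\omega_2t_0+\phi)$, and split into a translation branch and a reflection branch. The translation branch correctly gives $\Omega_A$. The paper's proof simply asserts that the reflection $\sin\theta=\sin(\pi-\theta)$ produces $\Omega_B$. Your explicit computation $\omega_2=-\omega_1+(k+\tfrac12-\phi/\pi)/t_0$ is correct and shows that it does not.

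\textbf{The gap.} You treat this as a bookkeeping mismatch to be reconciled. It cannot be reconciled, because the statement is false as written, in both directions. Take $\beta^{(1)}=1$, $\beta^{(2)}=0$ (so $\phi=0$), $t_0=2$ and $\omega_1=0.1$. Then $\Omega_A=\{0.1\}$ and $\Omega_B=\{0.4\}$.
\begin{itemize}
\item For $\omega_2=0.4\in\Omega_B$, $\sin(2\pi\cdot0.4\cdot2)=\sin(1.6\pi)=-\sin(0.4\pi)\neq\sin(0.4\pi)$, so continuity fails.
\item For $\omega_2=0.15$, $\sin(0.6\pi)=\sin(0.4\pi)$, so continuity holds, yet $0.15\notin\Omega_A\cup\Omega_B$.
\end{itemize}

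\textbf{Why your two reconciliations fail.} The discrepancy is not only the phase. Even with $\phi=0$, the constant in your formula is $\tfrac{1}{2t_0}$, not $\tfrac12$, and these agree modulo $1/t_0$ only when $t_0$ is odd. In general, your set coincides with the stated $\Omega_B$ if and only if $\tfrac12-\phi/\pi\equiv t_0/2 \pmod 1$. That means $\beta^{(2)}=0$ with $t_0$ odd, or $\beta^{(1)}=0$ with $t_0$ even.
\begin{itemize}
\item No fixed phase convention rescues the statement, since the admissible convention depends on the parity of $t_0$.
\item The substitution $\omega\mapsto\tfrac12-\omega$ fails because $\pi t_0-x\equiv\pi-x \pmod{2\pi}$ only for odd $t_0$. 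For even $t_0$ it gives $-x$, which flips the sign of the sine. It also ignores $\phi$.
\end{itemize}

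\textbf{The degenerate case.} The case $\beta^{(1)}=\beta^{(2)}=0$ cannot simply be disposed of. There every $\omega_2$ gives continuity, which is a further counterexample to the \emph{only if} direction. It must be excluded by hypothesis.

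\textbf{What you can prove.} Your argument proves the corrected statement: for $(\beta^{(1)},\beta^{(2)})\neq(0,0)$, replace $\Omega_B$ by $\{-\omega_1+(k+\tfrac12-\phi/\pi)/t_0:k\in\mathbb{Z}\}\cap(0,\tfrac12)$. This version still supports the qualitative point the paper draws from the proposition, namely that the admissible $\omega_2$ form a discrete, $1/t_0$-spaced set that depends on $t_0$. You should present it as a correction, backed by the counterexample above, rather than as a proof of the stated $\Omega_B$.
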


Proposition~\ref{prop:continuity} examines a simplified version of model~\eqref{eq:model} in which the signal is one-dimensional and undergoes a single change in frequency. If we fix an arbitrary frequency in the left segment, only a finite set of frequencies in the right segment are compatible with the assumption of continuity in the mean signal. The same conclusion holds when the roles of the left and right segments are reversed. A further undesirable feature is that the cardinality of $\Omega_A \cup \Omega_B$ depends on the change point $t_0$; i.e., the state space of the parameter is a function of the change point. It is easy to see that this constraint can be reversed: if $\omega_1$ and $\omega_2$ were known, certain times $t$ could be ruled out as candidate change points. All these limitations are in place without even requiring smoothness at the point of change: no conditions are imposed on its first or second derivatives.

These observations suggest that enforcing continuity would necessitate a model other than \eqref{eq:model}, which is at the moment the standard choice for oscillatory signals in the literature~\citep{shumway2019time,hadj2020bayesian,wu2024frequency}. We leave this future work.

\section{Discussion}
\label{sec:concl}

The proposed procedure requires choosing a variable selection method and change point detection algorithm. Although we argue that SuSiE \citep{wang2020simple} is particularly suitable for the former task and OS \citep{kovacs2020optimistic} for the latter, the whole method is flexible to these choices. We settled on SuSiE mostly for its computational efficiency. However, we highlighted that the segmentation is sensitive to the number of variables to include in each segment. Future research will involve assessing robust strategies for this selection, or the use of alternative methods. For change point detection, we considered exact methods like dynamic programming, but there is a lack of pruning rules  \citep[\textit{e.g.},][]{killick2012optimal} when working with the marginal likelihood. Developing functional pruning for marginal likelihood seems a promising research question and could help in studying how much the localization rate can be improved for this class of models since dynamic programming can attain optimal rates.

There are important generalizations to consider. The first case, involving continuous signals, was discussed in the previous section. Our preliminary results suggest that model~\eqref{eq:model} exhibits undesirable features when continuity is assumed. Nevertheless, segmentation remains accurate in the continuous setting, at least numerically. One potential strategy to address this limitation is to use \algonamesp for segmentation, followed by estimation of a continuous mean signal given the detected change points. A second one is to consider generalized linear models. The extension is trivial as one would simply substitute the algorithm used for variable selection -- we are not aware of an implementation for SuSiE in GLMs, but there are alternatives \citep{chen2008bayesian,johnson2012bayesian,rossell2013high}.

Other classes of problems mimic the structure considered here: change point detection plus challenging (or computationally expensive) within-segment inference. For example, soil moisture dynamics exhibits recurring patterns of exponential decay \citep{gong2023changepoint}. In all these examples, researchers are interested both in the segmentation and the parameters within each segment. We believe that the ideas in this paper, in particular the simplification of the within-segment inference, will be valuable in those contexts as well.

\bibliographystyle{agsm.bst}
\bibliography{biblio_CPVS}

\clearpage
\appendix

\renewcommand\thefigure{\thesection.\arabic{figure}}    
\renewcommand\thetable{\thesection.\arabic{table}}    
\numberwithin{equation}{section}
\numberwithin{figure}{section}
\numberwithin{table}{section}

\begin{center}
    {\Large \textbf{Supplementary material for:\\}}
    \vspace{1em}
    {\LARGE \textbf{Computationally efficient segmentation for non-stationary time series with oscillatory patterns}}
\end{center}
\vspace{2em}

\noindent This supplementary material provides the derivation of the SuSiE posteriors under the sparsity constraints on the coefficients, the proof of Theorem 1, practical guidance for choosing the tuning parameters of the proposed procedure, some simulations' additional results and information, and more illustrations from the empirical analyses. An implementation of the proposed method is available as an R package for download at \url{https://github.com/whitenoise8/CPVS}

\setcounter{figure}{0}    
\setcounter{table}{0}

\section{Background: Bayesian linear regression}

Consider a Bayesian linear regression model 
\begin{align}
\mathbf{y}= \bm{X}\bm{b}  + \bm{\epsilon},
\end{align}
where $\bm{\epsilon}\sim N_n (\bm{0},\sigma^2 \bm{I}_n)$ and $\bm{b}\sim N_p (\bm{0},\sigma_0^2 \bm{I}_p)$. It is a standard result that the marginal likelihood can be written in closed form; see for example \cite{gelman2013bayesian}. We recall here the closed form expression as it will be used extensively. 

\begin{align} \label{eq:marg}
   p(\mathbf{y}|\mathbf{X},\sigma^2,\sigma^2_0)
    &= \int_{\mathbb{R}^{2}} (2\pi\sigma^2)^{-n/2}\exp\left\{-\frac{1}{2\sigma^2}(\mathbf{y}-\mathbf{X}\boldsymbol{b})^\intercal(\mathbf{y}-\mathbf{X}\boldsymbol{b})\right\} (2\pi\sigma^2_0)^{-1}\exp\left\{-\frac{1}{2\sigma^2_0}\boldsymbol{b}^\intercal\boldsymbol{b}\right\}\,d\boldsymbol{b} \nonumber \\
    &= (2\pi\sigma^2)^{-n/2}(\sigma^2_0)^{-1}\exp\left\{-\frac{1}{2\sigma^2}\mathbf{y}^\intercal\mathbf{y}\right\} \vert\mathbf{\Sigma}\vert^{1/2} \exp\left\{\frac{1}{2}\boldsymbol{\mu}^\intercal\mathbf{\Sigma}^{-1}\boldsymbol{\mu}\right\} \nonumber \\
    &= {(2\pi\sigma^2)^{-n/2}(\sigma^2_0)^{-1}\exp\left\{-\frac{1}{2\sigma^2}\mathbf{y}^\intercal\mathbf{y}\right\} \vert\mathbf{\Sigma}\vert^{1/2} \exp\left\{\frac{1}{2}\mathbf{y}^\intercal \, \mathbf{X} \, \mathbf{\Sigma}\, \mathbf{X}^\intercal \, \mathbf{y}\right\}},
\end{align}
where $\boldsymbol{\mu}$ and $\mathbf{\Sigma}$ are the posterior mean and covariance of $\boldsymbol{b}$, also available in closed form:
\begin{equation*}
    \mathbf{\Sigma}=\left(\frac{1}{\sigma^2}\mathbf{X}^\intercal\mathbf{X}+\frac{1}{\sigma^2_0}\mathbf{I}_p\right)^{-1}, \qquad \boldsymbol{\mu}=\frac{1}{\sigma^2}\mathbf{\Sigma}\mathbf{X}^\intercal\mathbf{y}.
\end{equation*}

\section{SuSiE posteriors under sparsity constraints}
\label{appendix:group_susie}

The Sum of Single Effect (SuSiE) model introduced by \cite{wang2020simple} is defined as the sum of $N_E$ baseline models, labelled Single Effect regression (SER), each of them designed to do variable selection of one regressor. In the paper, we described a simple variation of \cite{wang2020simple} SER model where variable selection involves two covariance matrices, $\bm{X}^{(1)}$ and $\bm{X}^{(2)}$. We describe it in more details below. Throughout the section, we ignore the change-point detection task: we assume to be in a generic segment as in (5) and in the following we drop the indices $i$, $j$, and $I_j$, and $n=|I_j|$ to lighten the notation.

The SER model corresponds to (7)--(9) with $N_E=1$. It is a linear regression model
\begin{equation*}
\mathbf{y}= \bm{X}^{(1)}\bm{b}^{(1)} + \bm{X}^{(2)}\bm{b}^{(2)}  + \bm{\epsilon},
\end{equation*}
with $\bm{\epsilon}\sim N_n (\bm{0},\sigma^2 \bm{I}_n)$ and the following prior distributions
\begin{alignat}{3}\label{eq:ser}
	\bm{b}^{(1)} &= \bm{\gamma}b^{(1)}, \qquad &&\bm{b}^{(2)} = \bm{\gamma}b^{(2)}, \\
	\bm{\gamma} &\sim Mult_p(1,\bm{\pi}), \qquad &&\bm{b} \sim N_2(0,\sigma^2_0\mathbf{I}_2),
\end{alignat}
where {$\bm{b} = (b^{(1)},b^{(2)})^\intercal$ and} $\bm{\gamma}\in\{0,1\}^p$ is p-vector indicator variable with one non-zero entry describing which regressor to include in the model,  $\sigma^2_0>0$ and $\bm{\pi}\in (0,1)^p$ are fixed hyperparameters and represent the prior variance of the coefficients and the prior inclusion probability for each covariate. The only difference with respect to \cite{wang2020simple} is that the latent indicators $\bm{\gamma}$ are the same for $b^{(1)}$ and $b^{(2)}$ to enforce {\it group} selection (see (6)). {Note that the two coefficients  $(b^{(1)},b^{(2)})$ are not necessarily identical.}

Let $\mathbf{x}^{(1)}_{k}$ and $\mathbf{x}^{(2)}_{k}$ denote the $k$-th column vector of $\mathbf{X}^{(1)}$ and $\mathbf{X}^{(2)}$ respectively, and $\mathbf{X}_k = [\mathbf{x}^{(1)}_{k},  \mathbf{x}^{(2)}_{k}]$. The posterior distribution $p(\bm{\gamma}, \bm{b}| \mathbf{y},\sigma^2_0,\sigma^2)$ factorizes as:
\begin{align}\label{eq:post1}
\bm{b}|\gamma_k=1,\mathbf{y},\mathbf{X}_k,\sigma^2,\sigma^2_0&\sim N_2(\boldsymbol{\mu}_k,\mathbf{\Sigma}_k),\\
 \boldsymbol{\gamma}|\mathbf{y},\mathbf{X}_k,\sigma^2,\sigma^2_0 &\sim Mult(1,\boldsymbol{\alpha}),
\end{align}
where:
\begin{equation*}
    \mathbf{\Sigma}_k=\left(\frac{1}{\sigma^2}\mathbf{X}_k^\intercal\mathbf{X}_k+\frac{1}{\sigma^2_0}\mathbf{I}_2\right)^{-1}, \qquad \boldsymbol{\mu}_k=\frac{1}{\sigma^2}\mathbf{\Sigma}_k\mathbf{X}_k^\intercal\mathbf{y}.
\end{equation*}
Then, the posterior inclusion probabilities (PIP) are defined as:
\begin{align}\label{eq:alpha_PIP}
\alpha_k=\mathbb{P}(\gamma_k=1|\mathbf{y},\mathbf{X}_{k},\sigma^2,\sigma^2_0 ) = \frac{\pi_k\,p(\mathbf{y}|\mathbf{X}_{k},\sigma^2,\sigma^2_0)}{\sum_{s=1}^p \pi_s \,p(\mathbf{y}|\mathbf{X}_{s},\sigma^2,\sigma^2_0)}.
\end{align}

The numerator is the marginal likelihood of the linear model with the $k$-th covariate and is available in closed form {(see \eqref{eq:marg}).}

The iterative algorithm proposed by \cite{wang2020simple} to fit SuSiE solves at each iteration $N_E$ SER model, fitting them with the model residuals from the previous steps rather than $\mathbf{y}$. {The modification described here does not affect the feasibility of the algorithm as the SER model remains conjugate.} We refer to \cite{wang2020simple} for more details on the construction of the residuals and the exact algorithm. 

\section{Proofs}
\label{app:theo_proof}

\subsection{Notation}

{\textit{Mathematical notation.}} For two functions $f$ and $h$, we will make use of the standard $\mathcal{O}$ and $o$ notation.  We write $f(T) = \mathcal{O}(h(T))$ to mean that there exists  $M > 0$ and $T^* > 0$ such that for all $T > T^*$:
\begin{equation*}
    |f(T)| \leq M h(T).
\end{equation*}
Similarly, we also use $f(T) = o(h(T))$ to mean:
\begin{equation*}
    \lim_{T\to\infty} \frac{f(T)}{h(T)} = 0.
\end{equation*}
Last, we use $f(T) \gtrsim h(T)$ to indicate that there exists $M > 0$ so that for $T$ large enough we have $h(T) \leq M f(T).$ For $\mathbf{x}\in\mathbb{R}^n$ and $p \geq 1$, we denote the $p$-norm $\lVert \mathbf{x} \rVert_p := (\sum_{i=1}^d |x_i|^p)^{1/p}$ and $\infty$-norm as $\lVert \mathbf{x} \rVert_\infty := \max_{1\leq i \leq n}  |x_i|$. For a matrix $A\in\mathbb{R}^{r\times c}$, we denote the the $\infty$-norm $\lVert A \rVert_\infty := \max_{1\leq i \leq r} \sum_{j=1}^c |a_{ij}|$, which is the maximum absolute row sum of the matrix. We write $A=\mathcal{O}(T)$ to say that the entries of $A$ are $\mathcal{O}(T)$.  We denote the indicator function as $1(\cdot)$.\\

\textit{Notations and definitions of terms used extensively in the proof.} Recall that $\Omega$ is the discrete set of frequencies we use as input for our method. Given $\Omega$, we defined $\mathbf{X}=[\mathbf{X}^{(1)},\mathbf{X}^{(2)}]$, the design matrix obtaining by merging the matrices with entries the Fourier basis defined in Section 2.1.

We denote by $\mathcal{M}=\mathcal{P}(\Omega)$ the set of the $2^{p}$ possible linear regression models that the frequencies in $\Omega$ induce, and let $m$ be an elements of $\mathcal{M}$, and $\bm{X}_{m}$ the design matrix corresponding to model $m$, \textit{i.e.}, it is obtained taking a subset of the columns of $\mathbf{X}$ corresponding to the frequencies $m$. We denote by $\mathbf{m}$ the pair $(m_1,m_2)\in \mathcal{M}\times \mathcal{M}$, where $m_1,m_2 \subseteq \Omega$ refer to a set of frequencies to consider in the left and right segments; \textit{i.e.}, which variables to include in each segment.

The true model is denoted by $\mathbf{m}_0=(m_{01},m_{02})$. Let $\mathbf{X}_{m_{01}}$ and $\mathbf{X}_{m_{02}}$ be the corresponding design matrices, $\mathbf{x}_{tm_{01}}$ and $\mathbf{x}_{tm_{02}}$ row vectors of these matrices. Without loss of generality, $\mathbf{X}_{m_{01}}$ and $\mathbf{X}_{m_{02}}$ are assumed to be constructed following a precise column ordering. The first $2|m_{01}\cap m_{02}|$ columns of  $\mathbf{X}_{m_{01}}$ and $\mathbf{X}_{m_{02}}$ are the sine and cosine basis of the frequencies in $m_{01}\cap m_{02}$. $I_0$ is the index set corresponding these columns and $\mathbf{X}_{I_0}$ the corresponding sub-matrix. %By construction  $\mathbf{X}_{I_0m_{01}} = \mathbf{X}_{I_0m_{02}}$. 
 Then, $\mathbf{X}_{m_{01}}$ has $2|m_{01}\setminus m_{02}|$ extra columns, corresponding to the sine and cosine basis with frequencies that are in $m_{01}$ but not in $m_{02}$. $I_1$ is the corresponding index set. Similarly, we can construct $I_2$. If $I_i=\emptyset$, for $i=0,1,2$, we set $\mathbf{X}_{I_i}=0$. Note that having same column ordering in $\mathbf{X}_{m_{01}}$ and $\mathbf{X}_{m_{02}}$ is not strictly necessary, it makes the notation more compact as there is a common index shared across  the two segments.

The true intensities are $\bbe_{1}$ and $\bbe_2$. The index sets $I_0,I_1$ and $I_2$ apply to these vectors as well: $\bbe_{I_1 1}$ is the vectors of frequencies associated to sine and cosine basis of frequencies unique to model $m_{01}$. Similarly, if $I_i=\emptyset$, $\bbe_{I_i}=0$. We denote $\beta_*:=\max\{||\bbe_1||_{\infty},||\bbe_2||_{\infty}\}$. $\beta_*<\infty$ by Assumption 1. We denote by $d:=\max\{2|m_{01}|, 2| m_{02}|\}$.

\subsection{Trigonometric identities}

The following lemma summarizes several trigonometric identities we use in the proofs. We believe most of them should be known, we added a proof when we could not find a reference for them.

\begin{lemma}\label{lemma:trig_approx} Let $\delta^-=\omega_1-\omega_2$ and $\delta^+=\omega_1+\omega_2$, the following holds:
\begin{align*}
    &1.\,\sum_{t=1}^n \sin^2(\omega t) = \frac{n}{2} + \frac{1}{4} - \frac{\sin ((2n+1)\omega)}{4\sin(\omega)} \\
    &2.\,\sum_{t=1}^n \cos^2(\omega t)  =\frac{n}{2} - \frac{1}{4} + \frac{\sin ((2n+1)\omega)}{4\sin(\omega)} \\
    &3.\,\sum_{t=1}^n \sin(\omega t)\cos(\omega t) = \frac{\cos(\omega)-\cos((2n+1)\omega)}{4\sin(\omega)} \\
    &4.\,\sum_{t=1}^n \sin  \omega_1 t \sin  \omega_2 t = \frac{1}{4}\left(\frac{\sin ((n+1/2)\delta^-)}{\sin(\delta^-/2)}-\frac{\sin ((n+1/2)\delta^+)}{\sin(\delta^+/2)}\right)\\
    &5.\,\sum_{t=1}^n \cos  \omega_1 t \cos  \omega_2 t = \frac{1}{4}\left(\frac{\sin ((n+1/2)\delta^-)}{\sin(\delta^-/2)}+\frac{\sin ((n+1/2)\delta^+)}{\sin(\delta^+/2)}-2\right) \\
    &6.\,\sum_{t=1}^n \sin  \omega_1 t \cos  \omega_2 t = \frac{1}{4}\left(\frac{\cos (\delta^+/2)-\cos ((n+1/2)\delta^+)}{\sin(\delta^+/2)}+\frac{\cos (\delta^-/2)-\cos ((n+1/2)\delta^-)}{\sin(\delta^-/2)}\right) \\
    &7.\,\sum_{t=1}^n \cos  \omega_1 t \sin  \omega_2 t = \frac{1}{4}\left(\frac{\cos (\delta^+/2)-\cos ((n+1/2)\delta^+)}{\sin(\delta^+/2)}-\frac{\cos (\delta^-/2)-\cos ((n+1/2)\delta^-)}{\sin(\delta^-/2)}\right).
\end{align*}
\end{lemma}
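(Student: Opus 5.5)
The identities all follow from product-to-sum formulas combined with the Dirichlet kernel. I would first establish, for any real $\theta$ with $\sin(\theta/2)\neq 0$,
\begin{equation*}
D_n(\theta):=\sum_{t=1}^n \cos(\theta t)=\frac{\sin((n+1/2)\theta)}{2\sin(\theta/2)}-\frac12,\qquad
S_n(\theta):=\sum_{t=1}^n \sin(\theta t)=\frac{\cos(\theta/2)-\cos((n+1/2)\theta)}{2\sin(\theta/2)} .
\end{equation*}
Both are proved by telescoping: multiply each sum by $2\sin(\theta/2)$ and use $2\sin(\theta/2)\cos(\theta t)=\sin((t+1/2)\theta)-\sin((t-1/2)\theta)$ and $2\sin(\theta/2)\sin(\theta t)=\cos((t-1/2)\theta)-\cos((t+1/2)\theta)$. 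Each sum then collapses to its endpoint terms.

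For items 4--7, I would apply the product-to-sum identities, taking care with signs and writing $\sin(\omega_2 t)=-\sin(-\omega_2 t)$ where needed:
\begin{align*}
\sin a\sin b&=\tfrac12[\cos(a-b)-\cos(a+b)], & \cos a\cos b&=\tfrac12[\cos(a-b)+\cos(a+b)],\\
\sin a\cos b&=\tfrac12[\sin(a+b)+\sin(a-b)], & \cos a\sin b&=\tfrac12[\sin(a+b)-\sin(a-b)].
\end{align*}
Summing over $t$ gives:
\begin{itemize}
\item item 4 equals $\tfrac12[D_n(\delta^-)-D_n(\delta^+)]$, where the two $-1/2$ constants cancel;
\item item 5 equals $\tfrac12[D_n(\delta^-)+D_n(\delta^+)]$, where the constants add up to the stated $-2/4$;
\item items 6 and 7 equal $\tfrac12[S_n(\delta^+)\pm S_n(\delta^-)]$.
\end{itemize}
Substituting the closed forms of $D_n$ and $S_n$ yields exactly the displayed expressions.

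Items 1--3 are the diagonal case $\omega_1=\omega_2=\omega$. Here I would not take the limit $\delta^-\to 0$; instead I would use the identities directly:
\begin{itemize}
\item $\sin^2(\omega t)=\tfrac12-\tfrac12\cos(2\omega t)$ gives $\tfrac n2-\tfrac12 D_n(2\omega)=\tfrac n2+\tfrac14-\frac{\sin((2n+1)\omega)}{4\sin\omega}$;
\item item 2 is then $n$ minus item 1;
\item $\sin(\omega t)\cos(\omega t)=\tfrac12\sin(2\omega t)$ gives $\tfrac12 S_n(2\omega)$, which is item 3.
\end{itemize}

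The only real obstacle is bookkeeping: the argument convention and the nondegeneracy conditions. The lemma writes $\sin(\omega t)$ rather than $\sin(2\pi\omega t)$, so I would state that the angles are the rescaled $2\pi\omega$. The denominators require $\sin(\delta^\pm/2)\neq0$ and $\sin\omega\neq0$. For frequencies in $(0,1/2)$ with $\omega_1\neq\omega_2$, after rescaling by $2\pi$, all of $\delta^-/2$, $\delta^+/2$ and $\omega$ lie strictly inside $(-\pi,\pi)\setminus\{0\}$. I would record this explicitly, since these identities will later be used to show that the off-diagonal Gram entries of $\mathbf{X}_{m}^\intercal\mathbf{X}_{m}$ are $\mathcal{O}(1)$ while the diagonal entries are $n/2+\mathcal{O}(1)$.
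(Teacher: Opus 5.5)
Your proposal is correct and follows the same route as the paper: power-reduction and double-angle formulas for items 1--3, product-to-sum formulas for items 4--7, and in both cases the closed forms of $\sum_{t=1}^n\cos(\theta t)$ and $\sum_{t=1}^n\sin(\theta t)$ (Lagrange's identities, which the paper simply quotes and you derive by telescoping). Your explicit check that $\sin\omega$ and $\sin(\delta^\pm/2)$ are nonzero for $2\pi$-rescaled frequencies in $(0,1/2)$ with $\omega_1\neq\omega_2$ is a worthwhile addition that the paper leaves implicit.
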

To prove Lemma \ref{lemma:trig_approx}, we need some well-known trigonometric results. In particular, recall the power-reduction formulas (a-b) and the double-angle formula (c):
\begin{align*}
    a)\,\sin^2(\omega t) = \frac{1-\cos(2\omega t)}{2}, \qquad b)\,\cos^2(\omega t) = \frac{1+\cos(2\omega t)}{2}, \qquad
    c)\,\sin(2\omega t) = 2\sin(\omega t)\cos(\omega t),
\end{align*}
and Lagrange's trigonometric identities:
 \begin{align*}
     \sum_{t=0}^n \sin(\omega t) &= \frac{\cos(\omega/2)-\cos((n+1/2)\omega)}{2\sin(\omega/2)}, \qquad
     \sum_{t=0}^n \cos(\omega t) = \frac{\sin(\omega/2)+\sin((n+1/2)\omega)}{2\sin(\omega/2)},
 \end{align*}
 or, analogously, starting from $t=1$:
 \begin{align*}
     \sum_{t=1}^n \sin(\omega t) &= \frac{\cos(\omega/2)-\cos((n+1/2)\omega)}{2\sin(\omega/2)}, \qquad
     \sum_{t=1}^n \cos(\omega t) = \frac{\sin(\omega/2)+\sin((n+1/2)\omega)}{2\sin(\omega/2)} -1.
 \end{align*}
\begin{proof} We start with 1-2:
\begin{align*}
    1.\quad \sum_{t=1}^n \sin^2(\omega t) &= \sum_{t=1}^n \frac{1-\cos(2\omega t)}{2} = \frac{n}{2}-\frac{1}{2}\sum_{t=1}^n \cos(2\omega t) \qquad\text{(power-reduction formula)}\\
    &=\frac{n}{2}-\frac{1}{2}\left( \frac{\sin(\omega)+\sin((2n+1)\omega)}{2\sin(\omega)} -1\right) \qquad\text{(Lagrange's trigonometric identity)}\\
    &=\frac{n}{2}+\frac{1}{4}-\frac{\sin((2n+1)\omega)}{4\sin(\omega)} \\
    2.\quad \sum_{t=1}^n \cos^2(\omega t) &= \sum_{t=1}^n \frac{1+\cos(2\omega t)}{2} = \frac{n}{2}+\frac{1}{2}\sum_{t=1}^n \cos(2\omega t) \qquad\text{(power-reduction formula)}\\
    &=\frac{n}{2}+\frac{1}{2}\left( \frac{\sin(\omega)+\sin((2n+1)\omega)}{2\sin(\omega)} -1\right) \qquad\text{(Lagrange's trigonometric identity)}\\
    &=\frac{n}{2}-\frac{1}{4}+\frac{\sin((2n+1)\omega)}{4\sin(\omega)},
\end{align*}
and 3:
\begin{align*}
    3.\quad \sum_{t=1}^n \sin(\omega t)\cos(\omega t) &= \frac{1}{2}\sum_{t=1}^n \sin(2\omega t) &&\qquad\text{(double-angle formula)}\\
    &=\frac{1}{2}\left(\frac{\cos(\omega)-\cos((2n+1)\omega)}{2\sin(\omega)} \right) &&\qquad\text{(Lagrange's trigonometric identity)}\\
    &=\frac{\cos(\omega)-\cos((2n+1)\omega)}{4\sin(\omega)}.
\end{align*}
Now, in order to prove 4-7, we need some additional identities, known as product-to-sum formulas. Denote with $\delta^-=\omega_1-\omega_2$ and $\delta^+=\omega_1+\omega_2$, it holds:
 \begin{align*}
     (i)\quad\sum_{t=1}^n \sin  \omega_1 t \sin  \omega_2 t &= \sum_{t=1}^n \frac{\cos((\omega_1 -\omega_2) t)-\cos((\omega_1+\omega_2) t)}{2} = \sum_{t=1}^n \frac{\cos(\delta^- t)-\cos(\delta^+ t)}{2} \\
    (ii)\quad\sum_{t=1}^n \cos  \omega_1 t \cos  \omega_2 t &= \sum_{t=1}^n \frac{\cos((\omega_1 -\omega_2) t)+\cos((\omega_1+\omega_2) t)}{2} = \sum_{t=1}^n \frac{\cos(\delta^- t)+\cos(\delta^+ t)}{2} \\
       (iii)\quad\sum_{t=1}^n \sin  \omega_1 t \cos  \omega_2 t &= \sum_{t=1}^n \frac{\sin((\omega_1 +\omega_2) t)+\sin((\omega_1-\omega_2) t)}{2} = \sum_{t=1}^n \frac{\sin(\delta^+ t)+\sin(\delta^- t)}{2}\\
    (iv)\quad\sum_{t=1}^n \cos  \omega_1 t \sin  \omega_2 t &= \sum_{t=1}^n \frac{\sin((\omega_1 +\omega_2) t)-\sin((\omega_1-\omega_2) t)}{2} = \sum_{t=1}^n \frac{\sin(\delta^+ t)-\sin(\delta^- t)}{2}.
 \end{align*}
We are now ready to prove the identities 4-7 in Lemma \ref{lemma:trig_approx}:
 \begin{align*}
     4.\quad\sum_{t=1}^n \sin  \omega_1 t \sin  \omega_2 t &= \sum_{t=1}^n \frac{\cos(\delta^- t)-\cos(\delta^+ t)}{2} \hspace{3.5cm}\text{(product-to-sum formula $(i)$)}\\
     &= \frac{1}{2}\underbrace{\left(\frac{\sin(\delta^-/2)+\sin((n+1/2)\delta^-)}{2\sin(\delta^-/2)}-\frac{\sin(\delta^+/2)+\sin((n+1/2)\delta^+)}{2\sin(\delta^+/2)}\right)}_{\text{(Lagrange's trigonometric identity)}} \\
     &= \frac{1}{4}\left(\frac{\sin((n+1/2)\delta^-)}{\sin(\delta^-/2)}-\frac{\sin((n+1/2)\delta^+)}{\sin(\delta^+/2)}\right);\\
     5.\quad\sum_{t=1}^n \cos  \omega_1 t \cos  \omega_2 t &= \sum_{t=1}^n \frac{\cos(\delta^- t)+\cos(\delta^+ t)}{2} \hspace{3.5cm}\text{(product-to-sum formula $(ii)$)}\\
     &= \frac{1}{2}\underbrace{\left(\frac{\sin(\delta^-/2)+\sin((n+1/2)\delta^-)}{2\sin(\delta^-/2)}+\frac{\sin(\delta^+/2)+\sin((n+1/2)\delta^+)}{2\sin(\delta^+/2)}-2\right)}_{\text{(Lagrange's trigonometric identity)}} \\
     &= \frac{1}{4}\left(\frac{\sin((n+1/2)\delta^-)}{\sin(\delta^-/2)}+\frac{\sin((n+1/2)\delta^+)}{\sin(\delta^+/2)}-2\right);\\
     6.\quad\sum_{t=1}^n \sin  \omega_1 t \cos  \omega_2 t &= \sum_{t=1}^n \frac{\sin(\delta^+ t)+\sin(\delta^- t)}{2} \hspace{3.25cm}\text{(product-to-sum formula $(iii)$)}\\
     &= \frac{1}{4}\underbrace{\left(\frac{\cos(\delta^+/2)-\cos((n+1/2)\delta^+)}{\sin(\delta^+/2)}+\frac{\cos(\delta^-/2)-\cos((n+1/2)\delta^-)}{\sin(\delta^-/2)}\right)}_{\text{(Lagrange's trigonometric identity)}}; \\
     7.\quad\sum_{t=1}^n \cos  \omega_1 t \sin  \omega_2 t &= \sum_{t=1}^n \frac{\sin(\delta^+ t)-\sin(\delta^- t)}{2} \hspace{3.25cm}\text{(product-to-sum formula $(iv)$)}\\
     &= \frac{1}{4}\underbrace{\left(\frac{\cos(\delta^+/2)-\cos((n+1/2)\delta^+)}{\sin(\delta^+/2)}-\frac{\cos(\delta^-/2)-\cos((n+1/2)\delta^-)}{\sin(\delta^-/2)}\right)}_{\text{(Lagrange's trigonometric identity)}}.
 \end{align*}
\end{proof}

Given $\Omega$, we will define 
	\begin{align}\label{eq:c_omega}
	C_\Omega:=\max_{\omega\in \Omega, (\omega_1, \omega_2)\in \Omega \times\Omega}&\left\{ \sum_{j=1}^t \sin^2  \omega j -\frac{t}{2}, \sum_{j=1}^t \cos^2  \omega j -\frac{t}{2}, \sum_{j=1}^t \sin  \omega j \cos  \omega j, \right\}  \cup \\
&\left\{	\sum_{j=1}^t \sin  \omega_1 j \cos  \omega_2 j  ,\sum_{j=1}^t \cos  \omega_1 j \cos  \omega_2 j ,\sum_{j=1}^t \sin  \omega_1 j \sin  \omega_2 j \right\}.
	\end{align} 
Under Assumption 1, $|\Omega|<\infty$ and so it is easy to check that $C_\Omega<\infty$. 

\subsection{Sub-Gaussian Random Variables}

\begin{definition}\label{def:subgaus}
	A random variable $X\in\mathbb{R}$ is sub-Gaussian with mean $\mu=\mathbb{E}(X)$  if there exists a positive number $\sigma$ such that the following holds:
	\begin{equation*}
		\mathbb{E}(\exp(tX)) \leq \exp(\sigma^2t^2/2), \forall t\in\mathbb{R}.
	\end{equation*}
\end{definition}

We say that $X$ sub-Gaussian with parameter $\sigma$ and denote it by $X\sim SG(\sigma)$. Following Definition~\ref{def:subgaus}, if $X\sim SG(\sigma)$, then $X\sim SG(\sigma')$ for all $\sigma'>\sigma$. There are a few results for sub-Gaussian random variables that we will use extensively. First, if $X\sim SG(\sigma)$, then $a X\sim SG(|a| \sigma)$ with $a\in\mathbb{R}$. Also, sub-Gaussianity is preserved in linear combinations. Let $X_1\sim SG(\sigma_1)$ and $X_2 \sim SG(\sigma_2)$ be independent random variables, then
\begin{equation}\label{eq:sub_combination}
X_1 + X_2 \sim SG\left(\sqrt{\sigma_1^2+\sigma_2^2}\right).
\end{equation}
Last, for a sub-Gaussian random variable, we can use the Chernoff bound:
\begin{equation}\label{eq:chernoff}
	\mathbb{P}\left(|X-\mu| >t \right)\leq \exp \left\{-\frac{t^2}{2  \sigma^2}\right\}.
\end{equation}
See \cite{Boucheron_etal2013book} for a reference on the topic.

\begin{lemma}\label{lemma:event_bounds} Suppose that Assumption 1 holds and you are given $\Omega$ and the corresponding $\mathbf{X}$. Let $(\epsilon_t)_{t \geq 1}$ defined by $\epsilon_t=Y_t - \mathbb{E}(Y_t)$. Then for some constant $C>0$ for all $i=1,\ldots,2p$  we have that 
	\[
	\mathbb{P} \left( \left\vert \sum_{j=1}^t \epsilon_j x_{ji} \right\vert < \sqrt{C t \log T} \,\,\,\, \forall 1\leq t \leq T \right) \geq  1- \frac{2}{T},
	\] 
		\[
	\mathbb{P} \left( \left\vert \sum_{j=t+1}^T \epsilon_j x_{ji} \right\vert < \sqrt{C (T-t+1) \log T} \,\,\,\, \forall 1\leq t \leq T \right) \geq  1- \frac{2}{T},
	\]
	and 
		\[
	\mathbb{P} \left( \left\vert \sum_{j=\min\{t_0,t\}}^{\max\{t_0,t\} -1}  \epsilon_j x_{ji} \right\vert < \sqrt{C| t -t _0| \log T} \,\,\,\, \forall 1\leq t \leq T \right) \geq  1- \frac{2}{T}. 
	\]
\end{lemma}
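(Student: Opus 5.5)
We prove the three bounds of Lemma~\ref{lemma:event_bounds} by the same argument: a Chernoff bound for each fixed $t$, followed by a union bound over $t$ and over the columns. Fix a column index $i\in\{1,\ldots,2p\}$ of $\mathbf{X}=[\mathbf{X}^{(1)},\mathbf{X}^{(2)}]$.

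\textbf{Variance proxy for a fixed $t$.} Under Assumption~\ref{ass1_s}, the $\epsilon_j$ are independent, centred and sub-Gaussian with parameters $\sigma_j\le\bar\sigma:=\sup_t\sigma_t<\infty$. Every entry $x_{ji}$ is a sine or cosine, so $|x_{ji}|\le 1$ and $\epsilon_j x_{ji}\sim SG(\bar\sigma)$. By \eqref{eq:sub_combination}, for each fixed $t$,
\[
S_t:=\sum_{j=1}^t\epsilon_j x_{ji}\sim SG\bigl(\bar\sigma\sqrt{t}\bigr).
\]
We use the crude bound $x_{ji}^2\le 1$ rather than the trigonometric identities of Lemma~\ref{lemma:trig_approx}; those only sharpen the constant, replacing $t$ by roughly $t/2+C_\Omega$.

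\textbf{Single-$t$ tail.} The Chernoff bound \eqref{eq:chernoff} gives
\[
\mathbb{P}\bigl(|S_t|\ge\sqrt{Ct\log T}\bigr)\le \exp\Bigl\{-\tfrac{C\log T}{2\bar\sigma^2}\Bigr\}=T^{-C/(2\bar\sigma^2)}.
\]
The radius $\sqrt{Ct\log T}$ scales with $\sqrt{t}$, exactly matching the standard deviation of $S_t$. So no maximal inequality is needed and the exponent does not depend on $t$.

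\textbf{Union bound and choice of $C$.} A union bound over $t=1,\ldots,T$ gives failure probability at most $T^{1-C/(2\bar\sigma^2)}$. If the statement is meant for all columns $i$ simultaneously, we also union over $2p$ columns; $p$ is fixed, so this costs only a constant factor. Choosing $C\ge 4\bar\sigma^2+c'$, with $c'$ absorbing $\log(4p)/\log T$ for $T$ large, makes the total failure probability at most $2/T$. The factor $2$ in the statement is consistent with the two-sided sub-Gaussian tail.

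\textbf{The other two bounds.} The backward sums $\sum_{j=t+1}^T\epsilon_j x_{ji}$ have $T-t\le T-t+1$ terms, so they are $SG(\bar\sigma\sqrt{T-t+1})$, and the same computation applies. The sums between $\min\{t_0,t\}$ and $\max\{t_0,t\}-1$ have $|t-t_0|$ terms and are $SG(\bar\sigma\sqrt{|t-t_0|})$. For $t=t_0$ this sum is empty and the inequality $|0|<0$ fails, so that case must be excluded (or the inequality read as $\le$). Otherwise the proof is identical: Chernoff for each $t$, then a union bound over the at most $T$ values of $t$.

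\textbf{Main obstacle.} Nothing here is deep. The only care needed is in the bookkeeping: pick $C$ depending on $\bar\sigma$ and $p$ so that, after the union bound, the failure probability fits inside $2/T$, and handle the empty sum at $t=t_0$.
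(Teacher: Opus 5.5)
Your proposal is correct and follows essentially the same route as the paper: a sub-Gaussian Chernoff bound for each fixed $t$, whose radius $\sqrt{Ct\log T}$ matches the $\sqrt{t}$ scale of the variance proxy, followed by a union bound over the at most $T$ values of $t$. The only substantive difference is that you bound $\sum_j x_{ji}^2\le t$ via $|x_{ji}|\le 1$, whereas the paper uses Lemma~\ref{lemma:trig_approx} to get roughly $t/2+C_\Omega$; this only changes the constant $C$.

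Your extra remarks are sound refinements that the paper glosses over. The third bound genuinely fails at $t=t_0$ (empty sum, strict inequality $0<0$), which is harmless because the proof of Theorem~\ref{th:loc_rate} only uses $t\in\mathcal{T}_{C^*}$. The union over the $2p$ columns only inflates $C$. You also write the union-bound step in the correct direction, whereas the paper's displayed inequality has it reversed.
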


\begin{proof}
By Assumption 1, $(Y_t)_{t \geq 1}$ is a sub-Gaussian sequence of random variables with parameters $(\sigma_t)_{t\geq 1}$. Let $K:=\sup_t \sigma_t.$ By Assumption 1, $K<\infty$. By property of sub-Gaussian random variables, $(\epsilon_t)_{t \geq 1}$ is also a sub-Gaussian sequence of random variables with $\mathbb{E}[\epsilon_t]=0$ and parameters $(\sigma_t)_{t\geq 1}$. In addition, for all $t\geq 1$, it holds that: $\epsilon_t \sim SG(K)$, $\epsilon_t x_{ji} \sim SG(K |x_{ji}| )$, by~\eqref{eq:sub_combination}, $ \sum_{j=1}^t \epsilon_j x_{ji} \sim SG\left(K \sqrt{\sum_{i=1}^t x_{ji}^2}\right)$, and $\mathbb{E}(\sum_{j=1}^t \epsilon_j x_{ji})=0$. For $C>0$, we apply Chernoff bound
\[
\mathbb{P}\left(\left\vert \sum_{j=1}^t \epsilon_j x_{j}  \right\vert  > \sqrt{C t \log T}\right) \leq 2 \exp\left\{ - \frac{C t  \log T }{2  K^2 \sum_{i=1}^t x_j^2 }  \right \}\leq   2 \exp\left\{ - \frac{C   \log T }{ K^2 (1+2 C_{\Omega}) } \right \},
\]
where the last equality follows by Lemma~\ref{lemma:trig_approx} and $t \geq 1$. 
Then for $C>2K^2 (1+2C_{\Omega})$, we have that
\begin{align*}
	\mathbb{P} \left( \left\vert \sum_{j=1}^t \epsilon_j x_{ji} \right\vert < \sqrt{C t \log T} \,\,\,\, \forall 1\leq t \leq T \right) &= \mathbb{P} \left(\bigcap_{1\leq t \leq T} \left\{\left\vert \sum_{j=1}^t \epsilon_j x_{ji} \right\vert < \sqrt{C t \log T} \right\}\right) \\ 
	&\leq  \sum_{1\leq t \leq T} \mathbb{P} \left( \left\{\left\vert \sum_{j=1}^t \epsilon_j x_{ji} \right\vert < \sqrt{C t \log T} \right\}\right)\\
	&\leq \sum_{1\leq t \leq T}  \frac{2}{T^2} = \frac{2}{T}.
\end{align*}
The argument is identical for the other two events. 
\end{proof}

\subsection{Closed-form marginal likelihoods}
\label{subsec:marglik}

	In the proof we will make ample use of $\log p(\mathbf{y}|t,{\mathbf{m}})$, the marginal log-likelihood where we condition on a single candidate change point $t$ and a pair of models $\mathbf{m}$. We omit the conditioning on $\sigma^2$ and $\sigma^2_0$ and assume they are given. It decomposes as 
\begin{align*}
\log p(\mathbf{y}|t,{\mathbf{m}})=&\log p(\mathbf{y}_{1:t}|{m_1})+\log p(\mathbf{y}_{t+1:T}|{m}_2)\\
\end{align*}
where $\log p(\mathbf{y}_{1:t}|{m_1})$ and $\log p(\mathbf{y}_{t+1:T}|{m_2})$ can be computed in closed-form through \eqref{eq:marg}, {as $\log p(\mathbf{y}_{1:t}|{m_1})$ stands for $\log p(\mathbf{y}_{1:t}|{X_{m_1}})$}.\\

The marginal likelihood for the model considered here can be further simplified. Let us consider a generic model $m$ (design matrix $\mathbf{X}_m$). We can further simplify $\mathbf{\Sigma}_m$ in \eqref{eq:marg}. As $\sigma_0^2 \to \infty$, $\mathbf{\Sigma}_m \to \left(\frac{1}{\sigma^2}\mathbf{X}_m^\intercal\mathbf{X}_m\right)^{-1}$. Now, $\mathbf{X}_m^\intercal\mathbf{X}_m$ is a square matrix with diagonal entries equal to
\[
\text{diag}(\mathbf{X}_m^\intercal\mathbf{X}_m) =\left(\sum_{s=1}^t \sin^2 (2 \pi \omega_{1} s),\ldots, \sum_{s=1}^t \sin^2 (2 \pi \omega_{|m|} s), \sum_{s=1}^t\cos^2 (2 \pi \omega_{1} s),\ldots, \sum_{s=1}^t \cos^2 (2 \pi \omega_{|m|} s)\right),
\]
and off diagonal entries for $i\neq j$
\begin{align*}
[\mathbf{X}_m^\intercal\mathbf{X}_m]_{ij} &= \sum_{s=1}^t \sin (2 \pi \omega_{i} s) \sin (2 \pi \omega_{j} s) \,\ \,\  i\neq j  \,\ \,\ \text{and} \,\ \,\  i,j=1,\ldots |m|\\
[\mathbf{X}_m^\intercal\mathbf{X}_m]_{ij} &= \sum_{s=1}^t \cos (2 \pi \omega_{i} s) \cos (2 \pi \omega_{j} s) \,\ \,\  i\neq j  \,\ \,\ \text{and} \,\ \,\ i,j=|m|+1,\ldots 2|m| \\
[\mathbf{X}_m^\intercal\mathbf{X}_m]_{ij} &= \sum_{s=1}^t \sin (2 \pi \omega_{i} s) \cos (2 \pi \omega_{j} s) \,\ \,\ \text{otherwise}.
\end{align*}

By Lemma \ref{lemma:trig_approx}, the diagonal elements are $\mathcal{O}(t)$, while the off-diagonal ones $\mathcal{O}(1)$. Let $D=\mathbf{X}_m^\intercal\mathbf{X}_m-\frac{t}{2}\mathbf{I}$. $D$ is a symmetric matrix with entries $\mathcal{O}(1)$. We rewrite $\mathbf{\Sigma}_m$ as 
\[
\mathbf{\Sigma}_m=\sigma^2\left(\mathbf{X}_m^\intercal \mathbf{X}_m\right)^{-1}= \sigma^2 \left(\frac{t}{2}\mathbf{I} +D\right)^{-1}= \sigma^2 \left( \frac{t}{2}\mathbf{I} \left(\mathbf{I}  +\frac{2}{t}\mathbf{I} D\right)\right)^{-1}= \sigma^2 \frac{2}{t}  \left(\mathbf{I}  +\frac{2}{t} D\right)^{-1}.
\]
Now, the Neumann series of the matrix above is
\[
\left(\mathbf{I}  +\frac{2}{t} D\right)^{-1}=\sum_{k=0}^{\infty} \left(-\frac{2}{t} D\right)^k = \mathbf{I} -\frac{2}{t} D + Re(2),
\]
where $Re(2)$ is the remainder of order two of the Neumann series and $Re(2)=\mathcal{O}(t^{-2})$. For large $t$ enough and finite, $\lVert \frac{2}{t} D\rVert_{\infty}<1$, which implies convergence of the Neumann series. We rewrite $\Sigma_m$ as
\[
\mathbf{\Sigma}_m= \sigma^2  \frac{2}{t}\left(\mathbf{I} -\frac{2}{t} D + Re(2) \right) =  \sigma^2  \left( \frac{2}{t}\mathbf{I}-\frac{2}{t^2} D+\frac{2}{t}Re(2)\right)=   \frac{2\sigma^2}{t}\mathbf{I}+R_m,
\]
where $R_m=\mathcal{O}(t^{-2})$ by construction. The determinant can be rewritten as 
\[
|\mathbf{\Sigma}_m|^{1/2}=\left \vert\frac{2\sigma^2}{t}\mathbf{I}+R_m\right\vert^{1/2}= \left(\frac{2\sigma^2}{t}\right)^{|m_k|/2}\left \vert \mathbf{I}+\frac{2}{t} R_m\right\vert^{1/2}.
\]

The marginal likelihood can be rewritten as:
\begin{align*}
\log p(\mathbf{y}|t,{\mathbf{m}) \propto}& {    -\frac{|m_1|}{2}\log t  + \frac{1}{2} \log \left\vert \mathbf{I}+\frac{2}{t} R_{m_1}\right\vert +  \frac{\sigma^2}{t}\left\lVert \sum_{j=1}^{t} y_j \mathbf{x}_{j}\right\rVert_2^2 + \left(\sum_{j=1}^{t} y_j \mathbf{x}^\intercal_{j}\right) R_{m_1} \left(\sum_{j=1}^{t} y_j \mathbf{x}_{j}\right) } \nonumber \\
&-\frac{|m_2|}{2}\log (T-t+1)  + \frac{1}{2} \log \left\vert \mathbf{I}+\frac{2}{t} R_{m_2}\right\vert + \frac{\sigma^2}{T-t+1}\left\lVert\sum_{j=t+1}^{T} y_j \mathbf{x}_{j}\right\rVert_2^2 \nonumber \\
& + \left(\sum_{j=t+1}^{T} y_j \mathbf{x}_{j}^\intercal\right) R_{m_2} \left(\sum_{j=t+1}^{T} y_j \mathbf{x}_{j}\right).  
\end{align*}
where the above holds up to proportionality constant that do not depend on $t$ and $\mathbf{m}$. Note that, $|m_1|$ is not necessarily equal to $|m_2|$, hence, the dimension of vectors and matrices in the first row is not necessarily equal to that in the second row. In addition, the last terms in each row are dominated for large $t$ by the terms that precede them. Last, as $t\to \infty$, $\left \vert \mathbf{I}+\frac{2}{t} R\right\vert^{1/2}\to 1$, making those terms negligible for large $t$. In the proof, we will use the following:
\begin{align}\label{eq:marg_like}
	\log p(\mathbf{y}|t,{\mathbf{m}) \propto}& {    -\frac{|m_1|}{2}\log t  +  \frac{\sigma^2}{t}\left\lVert \sum_{j=1}^{t} y_j \mathbf{x}_{j}\right\rVert_2^2 -\frac{|m_2|}{2}\log (T-t+1)   + \frac{\sigma^2}{T-t+1}\left\lVert\sum_{j=t+1}^{T} y_j \mathbf{x}_{j}\right\rVert_2^2, }
\end{align}
{where we ignore lower order terms.}

\subsection{Proof of Theorem 1}

The statement
\[
	\max_{t {\in \{1,\ldots,T\}:} |t-t_0|> {C^*\log T}} G_{(0,T]}(t) < G_{(0,T]}(t_0) ,
\]
is equivalent to 
\begin{equation}\label{eq:statement}
	\max_{t {\in \{1,\ldots,T\}:} |t-t_0|> {C^*\log T}} \log p(\mathbf{y}|t) < \log p(\mathbf{y}|t_0),
\end{equation}
where 
\[
\log p(\mathbf{y}|t)=\sum_{{\mathbf{m}\in \mathcal{M} \times \mathcal{M}}}\log p(\mathbf{y}|t,{\mathbf{m}}) p({\mathbf{m}}), 
\]
where {$\log p(\mathbf{y}|t,\mathbf{m})$} is given in Section~\ref{subsec:marglik}, and $p({\mathbf{m}})$ is the prior probability of the model {implied by SuSiE}. SuSiE does not assign equal probability to each model, but since $|\mathcal{M}\times\mathcal{M}|$ is finite, we have that $p({\mathbf{m}})=\mathcal{O}(1)$. This follows directly by Assumption 1 \textit{(e)}.

The statement of theorem is thus equivalent to showing that that there exists a constant $C^*>0$ such that it holds that
	\[
	\lim_{T\to \infty} \mathbb{P}\left( \left\{ t \in\left\{1 \leq t \leq T : |t-t_0|> C^* \log T \right\}:   \frac{p(\mathbf{y}|t_0)}{p(\mathbf{y}|t)} > 0 \right\}	\right) =1,
	\] 
	which implies \eqref{eq:statement}.

The ratio of marginal likelihoods can be rewritten as follows:
\begin{equation}\label{eq:deco}
	\frac{p(\mathbf{y}|t_0)}{p(\mathbf{y}|t)}=\frac{\sum_{{\mathbf{m}}} p({\mathbf{m}}) p(\mathbf{y}|t_0,{\mathbf{m}})}{\sum_{{\mathbf{m}}} p({\mathbf{m}}) p(\mathbf{y}|t,{\mathbf{m}})}= \frac{ p(\mathbf{y}|t_0,{\mathbf{m}}_0)}{p(\mathbf{y}|t,{\mathbf{m}}_0)}\frac{\left(1+\sum_{{\mathbf{m}}\neq {\mathbf{m}}_0}\frac{ p(\mathbf{y}|t_0,{\mathbf{m}})p({\mathbf{m}})}{ p(\mathbf{y}|t_0,{\mathbf{m}}_0)p({\mathbf{m}}_0)}\right)}{\left(1+\sum_{{\mathbf{m}}\neq {\mathbf{m}}_0}\frac{ p(\mathbf{y}|t,{\mathbf{m}})p({\mathbf{m}})}{ p(\mathbf{y}|t,{\mathbf{m}}_0)p({\mathbf{m}}_0)}\right)},
\end{equation}
where ${\mathbf{m}}_0$ is the true model, $\sum_{{\mathbf{m}}\neq {\mathbf{m}}_0}$  denotes the sum over all pairs in $\mathcal{M}\times\mathcal{M}$ except ${\mathbf{m}}_0$. By Assumption 1 \textit{(d)}, ${\mathbf{m}}_0 \in \mathcal{M}\times\mathcal{M}$.

In equation \eqref{eq:deco}, we decomposed the ratio $p(\mathbf{y}|t_0)/p(\mathbf{y}|t)$ into two components: the first is the ratio of marginal likelihoods under the correct model, conditional on different candidate change points; the second involves sums of ratios where we fix the change point and compare alternative models to the true one in both the numerator and denominator. Inference for the change point appears only in the first term, which can therefore be interpreted as reflecting change point detection consistency. The second term can be seen as a kind of variable selection consistency ratio, where models are compared conditional on the same change point but under different models.

Define the set 
\[
\mathcal{T}_{C^*}:=\left\{1 \leq t \leq T : |t-t_0|> C^* \log T \right\}. 
\]
Consider the event $\Omega=\Omega_1\cap\Omega_2\cap\Omega_3$ defined by
\begin{align*}
	\Omega_1:=& \bigcap_{t\in \mathcal{T}_{C^*}} \left\{\left\vert\sum_{j=1}^t \epsilon_j x_{ji} \right\vert < \sqrt{C t \log T} \right\} \\
	\Omega_2:=& \bigcap_{t\in \mathcal{T}_{C^*}}\left\{ \left\vert \sum_{j=t+1}^T \epsilon_j x_{j} \right\vert < \sqrt{ C(T-t+1) \log T }\right\} \\
	\Omega_3:=& \bigcap_{t\in \mathcal{T}_{C^*}} \left\{ \left\vert \sum_{j=\min\{t_0,t\}}^{\max\{t_0,t\} -1} \epsilon_j x_{j}\right\vert < \sqrt{ C|t_0-t| \log T }\right\}
\end{align*}
By union bound and Lemma~\ref{lemma:event_bounds}:
\begin{equation*}
	\mathbb{P}(\Omega) \geq  1- \frac{2}{T}- \frac{2}{T}- \frac{2}{T}.
\end{equation*}

Suppose that $\Omega$ holds, and assume $\sigma^2=1$.

\noindent \textbf{Step 1:} Fix $t \in \{1,\ldots,T\}$, we want to show $\delta :=\log P(\mathbf{y}|t_0,\mathbf{m}_0)- \log P(\mathbf{y}|t,\mathbf{m}_0) > 0 $ for $|t-t_0|>C^*  \log T $. Let's assume without loss of generality that $t<t_0$. Since in $\delta$ we are conditioning on the same true model, throughout this step we drop the subscript $0$ in $\mathbf{m}_0$ to make the notation lighter (the subscript will be necessary in Step 2 when comparing different model). Hence,  $\mathbf{m}=(m_1,m_2)$, with $m_1$ and $m_2$ denoting the true left and right models in lieu of $m_{01}$ and $m_{02}$, and $\mathbf{X}_1$ and $\mathbf{X}_2$ the corresponding design matrices in lieu of $\mathbf{X}_{m_{01}}$ and $\mathbf{X}_{m_{02}}$ respectively.

Using \eqref{eq:marg_like}, $\delta$ can be rewritten as :

\begin{align} \label{eq:delta1_step1_new}
	\delta =& \frac{1}{t_0} \left\lVert\sum_{j=1}^{t_0} y_j \mathbf{x}_{j1}\right\rVert_2^2 - \frac{2|m_{1}|}{2} \log t_0 + \frac{1}{T-t_0+1}\left\lVert\sum_{j=t_0+1}^{T} y_j \mathbf{x}_{j2}\right\rVert_2^2 -\frac{2|m_{2}|}{2} \log (T-t_0) \nonumber \\
	& - \frac{1}{t}\left\lVert\sum_{j=1}^{t} y_j \mathbf{x}_{j1}\right\rVert_2^2 + \frac{2|m_{1}|}{2}\log t - \frac{1}{T-t+1}\left\lVert\sum_{j=t+1}^{T} y_j \mathbf{x}_{j2}\right\rVert_2^2 + \frac{2|m_{2}|}{2}\log (T-t+1).
\end{align}
We follow the decomposition of inner products suggested by \cite{berlind2025} -- the argument itself is completely different as that reference deals with a high-dimensional change in mean. The first as 
\begin{align*}
    \left\lVert\sum_{j=1}^{t_0} y_j \mathbf{x}_{j1}\right\rVert_2^2&=\left\lVert\sum_{j=1}^{t_0} (y_j - \bm{\beta}^\intercal_{1} \mathbf{x}_{j1} + \bbe^\intercal_{1} \bx_{j1}) \bx_{j1} \right\rVert_2^2 \\ &=  \left\lVert\sum_{j=1}^{t_0} \epsilon_j \bx_{j1} \right\rVert_2^2+ \left\lVert\sum_{j=1}^{t_0} (\bbe^\intercal_{1} \bx_{j1}) \bx_{j1} \right\rVert_2^2 + 2 \left(\sum_{j=1}^{t_0} \epsilon_j \bx_{j1}\right)^\intercal \left(\sum_{j=1}^{t_0} (\bbe^\intercal_{1} \bx_{j1}) \bx_{j1}\right).
\end{align*}
Now, note that for $i=1,\ldots, 2 |m_1|$
\begin{align*}
	\left(\sum_{j=1}^{t_0} (\bbe^\intercal_{1} \bx_{j1}) x_{ji1} \right)^2&= 	\left(\sum_{j=1}^{t_0} \beta_{i1} x_{ji1}^2 +\sum_{z\neq i} \beta_{z1} \sum_{j=1}^{t_0}  x_{ji1} x_{jz1} \right)^2 \\
    &\geq \beta_{i1}^2\left(\sum_{j=1}^{t_0}  x_{ji1}^2\right)^2 - 2 \left\vert\sum_{z\neq i} \beta_{z1} \sum_{j=1}^{t_0}  x_{ji1} x_{jz1} \right\vert  \left\vert \sum_{j=1}^{t_0} \beta_{i1} x_{ji1}^2 \right\vert \\
	&\geq \beta_{i1}^2 \left(\frac{t_0}{2}-C_{\Omega} \right)^2 -2 C_{\Omega}2|m_1| \beta_* \beta_{i1} \left(\frac{t_0}{2}+C_{\Omega} \right)\\
	&\geq \beta_{i1}^2\frac{t_0^2}{4} + C_{\Omega}^2 \beta_{i1}^2 - 2  \beta_{i1}^2\frac{t_0}{2} - 2 C_{\Omega}d \beta_*^2 \left(\frac{t_0}{2}+C_{\Omega} \right).
\end{align*}
Similarly,
\begin{align*}
\left(\sum_{j=1}^{t_0} \epsilon_j x_{ji1}\right) \left(\sum_{j=1}^{t_0} (\bbe^\intercal_{1} \bx_{j1}) x_{ji1}\right)=& \left(\sum_{j=1}^{t_0} \epsilon_j x_{ji1}\right) \left(\sum_{j=1}^{t_0} \beta_{i1} x_{ji1}^2 +\sum_{z\neq i} \beta_{z1} \sum_{j=1}^{t_0}  x_{ji1} x_{jz1} \right) \\ \geq&   \left(\sum_{j=1}^{t_0} \epsilon_j x_{ji1}\right)\beta_{i1} \frac{t_0}{2}-\left \vert\sum_{j=1}^{t_0} \epsilon_j x_{ji1}\right\vert \left(C_{\Omega}2|m_{1}| \beta_* + C_{\Omega} \right),
\end{align*}
 Putting everything together, the following lower bound holds
\[
\left\lVert\sum_{j=1}^{t_0} y_j \mathbf{x}_{j1}\right\rVert_2^2 \geq \left\lVert\sum_{j=1}^{t_0} \epsilon_j \mathbf{x}_{j1}\right\rVert_2^2 + \bbe^\intercal_{1}\bbe_{1} \frac{t_0^2}{4} + R_1(t)+ 2 \sum_{i=1}^{2|m_{1}|}\left[\left(\sum_{j=1}^{t_0} \epsilon_j x_{ji1}\right)\beta_{i1} \frac{t_0}{2}-\left \vert\sum_{j=1}^{t_0} \epsilon_j x_{ji1}\right\vert C_{\Omega}2|m_{1}| \beta_*  \right].
\]
where $R_1(t_0)= 2 d[ -C_{\Omega}^2 \beta_*^2 - 2  \beta_*^2\frac{t_0}{2} - 2 C_{\Omega}2|m_{1}| \beta_*^2 \left(\frac{t_0}{2}+C_{\Omega} \right)]$.  By a similar argument, we can lower bound the second and third inner products in \eqref{eq:delta1_step1_new}
\begin{align*}
\left\lVert\sum_{j=t_0+1}^{T} y_j \mathbf{x}_{j2}\right\rVert_2^2 \geq& \left\lVert\sum_{j=t_0+1}^{T} \epsilon_j \mathbf{x}_{j2}\right\rVert_2^2 + \bbe^\intercal_{2}\bbe_{2} \frac{(T-t_0+1)^2}{4} + R_2(T-t_0+1) \\ &+ 2\sum_{i=1}^{2|m_{2}|}\left(\sum_{j=t_0+1}^{T} \epsilon_j x_{ji2}\right) \beta_{i2}\frac{T-t_0+1}{2} \\
& - 2\sum_{i=1}^{2|m_{2}|}\left\vert \sum_{j=t_0+1}^{T} \epsilon_j x_{ji2}\right\vert C_{\Omega}2|m_{2}| \beta_*  ,
\end{align*}
\[
\left\lVert\sum_{j=1}^{t} y_j \mathbf{x}_{j1}\right\rVert_2^2 \geq \left\lVert\sum_{j=1}^{t} \epsilon_j \mathbf{x}_{j1}\right\rVert_2^2 + \bbe^\intercal_{1}\bbe_{1} \frac{t^2}{4} + R_3(t)+ 2\sum_{i=1}^{2|m_{1}|} \left[\left(\sum_{j=1}^{t} \epsilon_j x_{ji1}\right) \beta_{i1}\frac{t}{2} - \left\vert\sum_{j=1}^{t} \epsilon_j x_{ji1}\right\vert C_{\Omega}2|m_{1}| \beta_* \right],
\]
where $R_2$ and $R_3$ can be defined analogously. Recalling that $t<t_0$, the fourth inner product requires an extra step because they depend on the true parameters of models $m_1$ and $m_2$. We rewrite it as 
\begin{align*}
\left\lVert\sum_{j=t+1}^{T} y_j \mathbf{x}_{j2}\right\rVert_2^2 =&\left\lVert\sum_{j=t+1}^{t_0} (y_j - \bm{\beta}^\intercal_{1} \mathbf{x}_{j1} + \bbe^\intercal_{1} \bx_{j1}) \bx_{j2}+\sum_{j=t_0+1}^{T} (y_j - \bm{\beta}^\intercal_{2} \mathbf{x}_{j2} + \bbe^\intercal_{2} \bx_{j2}) \bx_{j2}\right\rVert_2^2\\
=&\left\lVert\sum_{j=t+1}^{T} \epsilon_j  \mathbf{x}_{j2}\right\rVert_2^2 + \left\lVert\sum_{j=t+1}^{t_0} (\bbe^\intercal_{1} \bx_{j1})  \mathbf{x}_{j2}\right\rVert_2^2 +  \left\lVert\sum_{j=t_0+1}^{T} (\bbe^\intercal_{2} \bx_{j2})  \mathbf{x}_{j2}\right\rVert_2^2 \\&+2\left(\sum_{j=t+1}^{T} \epsilon_j  \mathbf{x}_{j2}\right)^\intercal \left(\sum_{j=t_0+1}^{T} (\bbe^\intercal_{2} \bx_{j2})  \mathbf{x}_{j2}\right)  \\
&+2\left(\sum_{j=t+1}^{T} \epsilon_j  \mathbf{x}_{j2}\right)^\intercal\left(\sum_{j=t+1}^{t_0} (\bbe^\intercal_{1} \bx_{j1})  \mathbf{x}_{j2}\right) \\&+ 2\left(\sum_{j=t+1}^{t_0} (\bbe^\intercal_{1} \bx_{j1})  \mathbf{x}_{j2}\right)^\intercal\left(\sum_{j=t_0+1}^{T} (\bbe^\intercal_{2} \bx_{j2})  \mathbf{x}_{j2}\right).
\end{align*}
The first, third and fourth terms on the second row can be lower bounded via arguments identical to the ones used so far. The second term on the second row and the terms in the third row include cross-terms between models $m_{1}$ and $m_{2}$.  We start by the second row term
\begin{align*}
  \left\lVert\sum_{j=t+1}^{t_0} (\bbe^\intercal_{1} \bx_{j1})  \mathbf{x}_{j2}\right\rVert_2^2 &= \sum_{i=1}^{2|m_{2}|} \left(\sum_{j=t+1}^{t_0} \left(\sum_{z=1}^{2|m_{1}|}\beta_{z1} x_{jz1}\right)    x_{ji2}\right)^2 = \sum_{i=1}^{2|m_{2}|} \left( \sum_{z=1}^{2|m_{1}|}\beta_{z1} \sum_{j=t+1}^{t_0} x_{jz1}   x_{ji2}\right)^2\\
  &=\sum_{i\in I_0} \left(\beta_{i1} \sum_{j=t+1}^{t_0}   x^2_{ji2} +  \sum_{z\neq i} \beta_{z1} \sum_{j=t+1}^{t_0} x_{jz1}   x_{ji2}\right)^2 + \sum_{i\in I_2} \left(\sum_{z=1}^{2|m_{1}|} \beta_{z1} \sum_{j=t+1}^{t_0} x_{jz1}   x_{ji2}\right)^2\\
  &\geq \sum_{i\in I_0} \left(\beta_{i1} \frac{t_0-t}{2} -C_{\Omega} 2 |m_1|\beta_*\right)^2\\
  &\geq \bbe_{I_01}^\intercal \bbe_{I_01}\frac{(t_0-t)^2}{4} +H_1(t_0-t),
\end{align*}
where $H_1(t_0-t)=4C_{\Omega}|m_1|\beta_* \sum_{i\in I_0}\beta_{i1} (t_0-t)$. The second equality follows from noticing that in $I_0$ (if not empty), there is a covariate $x_{jz1}$ that is equal to $x_{ji2}$. If empty, the sum is equal to zero. The first inequality follows by Lemma~\ref{lemma:trig_approx} and removing the positive terms. The other terms in the third row that include parameters from $m_1$ and $m_2$ can be bound similarly
\begin{align*}
\left(\sum_{j=t+1}^{T} \epsilon_j  \mathbf{x}_{j2}\right)\left(\sum_{j=t+1}^{t_0} (\bbe^\intercal_{1} \bx_{j1})  \mathbf{x}_{j2}\right) \geq& \sum_{i=1}^{2|m_{2}|} \left(\sum_{j=t+1}^{T} \epsilon_j  x_{ji2}\right)\left(\sum_{z=1}^{2|m_{1}|} \beta_{z1} \sum_{j=t+1}^{t_0} x_{jz1} x_{ji2} \right)\\
\geq& \sum_{i \in I_0} \left(\sum_{j=t+1}^{T} \epsilon_j  x_{ji2}\right)\left(\beta_{i1} \frac{t_0-t}{2} - 2|m_1| C_{\Omega} \beta_*  \right) \\&- \sum_{i \in I_2} \left\vert\sum_{j=t+1}^{T} \epsilon_j  x_{ji2}\right\vert2|m_1| C_{\Omega} \beta_*\\
\geq&\sum_{i \in I_0} \left(\sum_{j=t+1}^{T} \epsilon_j  x_{ji2}\right)\left(\beta_{i1} \frac{t_0-t}{2} \right) \\&-\sum_{i=1}^{2|m_{2}|} \left\vert\sum_{j=t+1}^{T} \epsilon_j  x_{ji2}\right\vert 2|m_1| C_{\Omega} \beta^*, 
\end{align*}
where in the second inequality we split the summation into indexes $I_0$ and $I_2$ and used Lemma~\ref{lemma:trig_approx}. Similarly
\begin{align*}
\left(\sum_{j=t+1}^{t_0} (\bbe^\intercal_{1} \bx_{j1})  \mathbf{x}_{j2}\right)\left(\sum_{j=t_0+1}^{T} (\bbe^\intercal_{2} \bx_{j2})  \mathbf{x}_{j2}\right)\geq& \sum_{i=1}^{2|m_{2}|} \left(\sum_{z=1}^{2|m_{1}|} \beta_{z1} \sum_{j=t+1}^{t_0} x_{jz1} x_{ji2} \right) \left(\beta_{i2}\frac{T-t_0+1}{2}-C_{\Omega}2|m_{2}| \beta_*  \right) \\
\geq& \sum_{i\in I_0} \left(\beta_{i1} \frac{t_0-t}{2} - 2|m_1| C_{\Omega} \beta_* \right) \left(\beta_{i2}\frac{T-t_0+1}{2}-C_{\Omega}2|m_{2}| \beta_* \right)\\
&  - \sum_{i\in I_2} 2|m_1| C_{\Omega} \beta^*  \left(\beta_{i2}\frac{T-t_0+1}{2}-C_{\Omega}2|m_{2}| \beta_*  \right)\\
\geq & \sum_{i\in I_0} \beta_{i1} \frac{t_0-t}{2} \beta_{i2}\frac{T-t_0+1}{2} + H_2(t_0-t,T-t_0+1),
\end{align*}
where the second inequality holds because the quantities inside the round bracket are positive for $T$ large enough, and
 $H_2(t_0-t,T-t_0+1)$ is defined collecting all the remaining term. Note that $H_2(t_0-t,T-t_0+1)/(T-t+1)=\mathcal{O}(1)$ since it includes all lower order terms. 
 
 \noindent Putting everything together, we obtain the following lower bound for the fourth inner product
\begin{align*}
\left\lVert\sum_{j=t+1}^{T} y_j \mathbf{x}_{j2}\right\rVert_2^2 \geq& \left\lVert\sum_{j=t+1}^{T} \epsilon_j  \mathbf{x}_{j2}\right\rVert_2^2+ \bbe^\intercal_{I_01}\bbe_{I_01} \frac{(t_0-t)^2}{4} + H_1(t_0-t) +\bbe^\intercal_{2}\bbe_{2} \frac{(T-t_0+1)^2}{4} + R_2(T-t_0+1) \\
&+ 2\sum_{i=1}^{2|m_{2}|}\left[\left(\sum_{j=t_0+1}^{T} \epsilon_j x_{ji2}\right) \beta_{i2}\frac{T-t_0+1}{2}- \left\vert\sum_{j=t_0+1}^{T} \epsilon_j x_{ji2}\right\vert C_{\Omega}2|m_{2}| \beta_* \right] \\&+2\sum_{i \in I_0} \left(\sum_{j=t+1}^{T} \epsilon_j  x_{ji2}\right)\left(\beta_{i1} \frac{t_0-t}{2} \right)\\
&-2\sum_{i=1}^{2|m_{2}|} \left\vert\sum_{j=t+1}^{T} \epsilon_j  x_{ji2}\right\vert 2|m_1| C_{\Omega} \beta_*  \\&+  2\sum_{i\in I_0} \beta_{i1} \frac{t_0-t}{2} \beta_{i2}\frac{T-t_0+1}{2} + 2H_2(t_0-t,T-t_0+1).
\end{align*}
Let $C_1=C_{\Omega}2 d \beta_*$, it holds the following lower bound for $\delta$
\begin{align}\label{eq:delta_2}
	\delta \geq&\frac{\left\lVert\sum_{j=1}^{t_0} \epsilon_j \mathbf{x}_{j1}\right\rVert_2^2}{t_0}-\frac{\left\lVert\sum_{j=1}^{t} \epsilon_j \mathbf{x}_{j1}\right\rVert_2^2}{t}+ \frac{\left\lVert\sum_{j=t_0+1}^{T} \epsilon_j \mathbf{x}_{j2}\right\rVert_2^2}{T-t_0+1}-\frac{\left\lVert\sum_{j=t+1}^{T} \epsilon_j \mathbf{x}_{j2}\right\rVert_2^2}{T-t+1}\nonumber \\
&	+\frac{ \bbe^\intercal_{1}\bbe_{1} \frac{t_0^2}{4}}{t_0}  - \frac{\bbe^\intercal_{1}\bbe_{1} \frac{t^2}{4}}{t} + \frac{\bbe^\intercal_{2}\bbe_{2} \frac{(T-t_0+1)^2}{4} }{T-t_0+1}- \frac{\bbe^\intercal_{I_01}\bbe_{I_01} \frac{(t_0-t)^2}{4}  +\bbe^\intercal_{2}\bbe_{2} \frac{(T-t_0+1)^2}{4} }{T-t+1} \nonumber\\
&	+  \sum_{i=1}^{2|m_{1}|}\left(\sum_{j=1}^{t_0} \epsilon_j x_{ji1}\right) \beta_{i1}- \sum_{i=1}^{2|m_{1}|}\left(\sum_{j=1}^{t} \epsilon_j x_{ji1}\right) \beta_{i1}  - \frac{t_0-t}{T-t+1} \sum_{i \in I_0} \left(\sum_{j=t+1}^{t_0} \epsilon_j  x_{ji2}+\sum_{j=t_0+1}^{T} \epsilon_j  x_{ji2} \right)\beta_{i1}   \nonumber \\
&+ \sum_{i=1}^{2|m_{2}|}\left(\sum_{j=t_0+1}^{T} \epsilon_j x_{ji2}\right) \beta_{i2} - \frac{T-t_0+1}{T-t+1}\sum_{i=1}^{2|m_{2}|}\left(\sum_{j=t_0+1}^{T} \epsilon_j x_{ji2}\right) \beta_{i2} -  \frac{2 C_1}{t_0} \sum_{i=1}^{2|m_{1}|}\left\vert \sum_{j=1}^{t_0} \epsilon_j x_{ji1}\right\vert   \\
&- \frac{2 C_1}{t}\sum_{i=1}^{2|m_{1}|}\left\vert\sum_{j=1}^{t} \epsilon_j x_{ji1}\right\vert- \frac{2 C_1}{T-t_0+1}\sum_{i=1}^{2|m_{2}|}\left\vert \sum_{j=t_0+1}^{T} \epsilon_j x_{ji2}\right\vert \nonumber\\
&- \frac{2 C_1}{T-t+1}\sum_{i=1}^{2|m_{2}|}\left[\left\vert \sum_{j=t_0+1}^{T} \epsilon_j x_{ji2}\right\vert+ \left\vert\sum_{j=t+1}^{T} \epsilon_j  x_{ji2}\right\vert  \right]  \nonumber \\
&-\frac{2}{T-t+1}  \left[\sum_{i\in I_0} \beta_{i1} \frac{t_0-t}{2} \beta_{i2}\frac{T-t_0+1}{2} + H_2(t_0-t,T-t_0+1) \right] \nonumber \\
&+ \frac{R_1(t_0)}{t_0} -\frac{R_3(t)}{t} +\frac{R_2(T-t_0+1)}{T-t_0+1} -\frac{ H_1(t_0-t)+ R_2(T-t_0+1)}{T-t+1} \nonumber \\ &+\frac{2|m_{1}|}{2} \log \frac{t}{t_0} + \frac{2|m_{2}|}{2}\log \frac{T-t+1}{T-t_0+1}. \nonumber
\end{align}
Now, note that $\bbe_{I_01}^\intercal \bbe_{I_01} \geq 0$ and $-\bbe_{I_01}^\intercal \bbe_{I_01} \geq -\bbe_{1}^\intercal \bbe_{1}$, so we can substitute this lower bound in the second row of \eqref{eq:delta_2}. We consider the second and second-last rows in \eqref{eq:delta_2}, which include terms quadratic in $\bbe_{1}$ and $\bbe_{2}$.  They can be rewritten as follows
\begin{align*}
&\frac{T-t_0+1}{T-t+1} \frac{(t_0-t)}{4} \left(\bbe_{1}^\intercal \bbe_{1} +\bbe_{2}^\intercal \bbe_{2} -2 \sum_{i\in I_0} \beta_{i1}  \beta_{i2}  \right) \\
&\qquad\qquad=\frac{T-t_0+1}{T-t+1} \frac{t_0-t}{4} [\bbe_{I_11}^\intercal \bbe_{I_11} +\bbe_{I_22}^\intercal \bbe_{I_22} + (\bbe_{I_01}-\bbe_{I_02})^\intercal(\bbe_{I_01}-\bbe_{I_02})] \\
&\qquad\qquad=\frac{t_0-t}{ T-t+1}\frac{T-t_0+1}{4} \kappa.
\end{align*}
In the rest of proof, we will lower bound the remaining terms, collect $(t_0-t)/(T-t+1)$, and finally show that $\delta>0$.

The first two terms  in the first row of \eqref{eq:delta_2} can be lower bounded as follows
\begin{align*}
	&\frac{\left\lVert\sum_{j=1}^{t_0} \epsilon_j \mathbf{x}_{j1}\right\rVert_2^2}{t_0}-\frac{\left\lVert\sum_{j=1}^{t} \epsilon_j \mathbf{x}_{j1}\right\rVert_2^2}{t} \\
    &\qquad=	\frac{\left\lVert\sum_{j=1}^{t} \epsilon_j \mathbf{x}_{j1}\right\rVert_2^2 + \left\lVert\sum_{j=t+1}^{t_0} \epsilon_j \mathbf{x}_{j1}\right\rVert_2^2 + 2\left(\sum_{j=1}^{t} \epsilon_j \mathbf{x}_{j1}\right)^\intercal\left(\sum_{j=t+1}^{t_0} \epsilon_j \mathbf{x}_{j1}\right)}{t_0}-\frac{\left\lVert\sum_{j=1}^{t} \epsilon_j \mathbf{x}_{j1}\right\rVert_2^2}{t} \\
	&\qquad=-\frac{(t_0-t)}{t_0}\left[\frac{\left\lVert\sum_{j=1}^{t} \epsilon_j \mathbf{x}_{j1}\right\rVert_2^2}{t} -\frac{\left\lVert\sum_{j=t+1}^{t_0} \epsilon_j \mathbf{x}_{j1}\right\rVert_2^2}{t_0-t} - 2 \frac{\sqrt{t}}{\sqrt{t_0-t}} \frac{\left(\sum_{j=1}^{t} \epsilon_j \mathbf{x}_{j1}\right)^\intercal\left(\sum_{j=t+1}^{t_0} \epsilon_j \mathbf{x}_{j1}\right)}{\sqrt{t_0-t}\sqrt{t}} \right].
\end{align*}	
In $\Omega$, there exists some $C \geq 1$ such that
\begin{align*}
\left\vert \frac{\left\lVert\sum_{j=1}^{t} \epsilon_j \mathbf{x}_{j1}\right\rVert_2^2}{t} \right \vert=\left \vert \frac{\sum_{i=1}^{2|m_{1}| }(\sum_{j=1}^{t} \epsilon_j x_{ji1})^2}{t} \right\vert <& 2|m_{1}|  C \log T\\
\left\vert \frac{\left\lVert\sum_{j=t+1}^{t_0} \epsilon_j \mathbf{x}_{j1}\right\rVert_2^2}{t_0-t} \right \vert=\left \vert \frac{\sum_{i=1}^{2|m_{1}| }(\sum_{j=t+1}^{t_0} \epsilon_j x_{ji1})^2}{t_0-t} \right\vert <& 2|m_{1}|  C \log T\\
\left\vert \frac{\left(\sum_{j=1}^{t} \epsilon_j \mathbf{x}_{j1}\right)^\intercal\left(\sum_{j=t+1}^{t_0} \epsilon_j \mathbf{x}_{j1}\right)}{\sqrt{t_0-t}\sqrt{t}}\right\vert=\left\vert \frac{\sum_{i=1}^{2|m_{1}| } \left(\sum_{j=1}^{t} \epsilon_j x_{ji1}\right)\left(\sum_{j=t+1}^{t_0} \epsilon_j x_{ji1}\right)}{\sqrt{t_0-t}\sqrt{t}}\right\vert <& 2|m_{1}|  C \log T
\end{align*}
which gives 
\begin{align*}
	&\frac{\left\lVert\sum_{j=1}^{t_0} \epsilon_j \mathbf{x}_{j1}\right\rVert_2^2}{t_0}-\frac{\left\lVert\sum_{j=1}^{t} \epsilon_j \mathbf{x}_{j1}\right\rVert_2^2}{t} \geq -2|m_{1}|  (t_0-t) C\left[  \frac{ \log T}{t_0} + \frac{ \log T \sqrt{t} }{t_0 \sqrt{t_0-t}} \right]\\
	&\qquad\geq - \frac{t_0-t}{T-t+1}2|m_{1}| C\left\{(T-t_0+1)\left[ \frac{ \log T}{t_0} + \frac{ \log T \sqrt{t} }{t_0 \sqrt{t_0-t}} \right] +(t_0-t)\left[ \frac{ \log T}{t_0} + \frac{ \log T \sqrt{t} }{t_0 \sqrt{t_0-t}} \right]  \right\}\\
	&\qquad\geq  - \frac{t_0-t}{T-t+1}2|m_{1}| C\left[ \frac{2(T-t_0+1)}{t_0}  +\frac{(T-t_0+1) \log T}{\sqrt{t_0} \sqrt{C^*\log T}}\frac{\sqrt{t}}{\sqrt{t_0}}  +\log T  +\frac{\sqrt{t}}{\sqrt{t_0}} \frac{\sqrt{t_0-t}\log T}{\sqrt{t_0}} \right]\\
	&\qquad\geq  - \frac{t_0-t}{T-t+1}2|m_{1}| C\left[ \frac{2(T-t_0+1)}{t _0} + 2 \log T \right],
\end{align*}
where, in the third equality we collected a common term $(t_0-t)/(T-t+1)$, in the third inequality we used $t/t_0<1$, $(t_0-t)/t_0<1$, $\log T/t_0 <1$, and $(t_0-t)>C^* \log T$ with $C^*>1$. We will be use these identities extensively below.

Similarly, we can lower bound the last two terms of \eqref{eq:delta_2} first row:
\begin{align*}
&\frac{\left\lVert\sum_{j=t_0+1}^{T} \epsilon_j \mathbf{x}_{j2}\right\rVert_2^2}{T-t_0+1}-\frac{\left\lVert\sum_{j=t+1}^{T} \epsilon_j \mathbf{x}_{j2}\right\rVert_2^2}{T-t+1} =\\
&\quad=\frac{\left\lVert\sum_{j=t_0+1}^{T} \epsilon_j \mathbf{x}_{j2}\right\rVert_2^2}{T-t_0+1} \\
&\qquad\qquad- \frac{\left\lVert\sum_{j=t+1}^{t_0} \epsilon_j \mathbf{x}_{j2}\right\rVert_2^2 + \left\lVert\sum_{j=t_0+1}^{T} \epsilon_j \mathbf{x}_{j2}\right\rVert_2^2 + 2\left(\sum_{j=s}^{t_0} \epsilon_j \mathbf{x}_{j2}\right)^\intercal\left(\sum_{j=t_0+1}^{T} \epsilon_j \mathbf{x}_{j2}\right)}{T-t+1} \\
&\quad=\frac{t_0-t}{T-t+1}\Bigg[\frac{\left\lVert\sum_{j=t_0+1}^{T} \epsilon_j \mathbf{x}_{j2}\right\rVert_2^2}{T-t_0+1} \\
&\qquad\qquad-\frac{\left\lVert\sum_{j=t+1}^{t_0} \epsilon_j \mathbf{x}_{j2}\right\rVert_2^2}{t_0-t}- 2 \frac{\sqrt{T-t_0+1}}{\sqrt{t_0-t}} \frac{\left(\sum_{j=t_0+1}^{T} \epsilon_j \mathbf{x}_{j2}\right)^\intercal\left(\sum_{j=t+1}^{t_0} \epsilon_j \mathbf{x}_{j2}\right)}{\sqrt{t_0-t}\sqrt{T-t_0+1}} \Bigg].
\end{align*}	
Applying identical bounds to before, we have that in $\Omega$ there exists $C \geq 1$ such that
\begin{align*}
	\frac{\left\lVert\sum_{j=t_0+1}^{T} \epsilon_j \mathbf{x}_{j2}\right\rVert_2^2}{T-t_0+1}-\frac{\left\lVert\sum_{j=t+1}^{T} \epsilon_j \mathbf{x}_{j2}\right\rVert_2^2}{T-t+1}  \geq& -   \frac{t_0-t}{T-t+1} 2 |m_{2}| C\left[  \log T  + \frac{  \log T \sqrt{T-t_0+1}}{\sqrt{t_0-t}} \right]\\
	\geq& -  \frac{t_0-t}{T-t+1} 2 |m_{2}| C \left[  \log T  + \frac{  \sqrt{\log T} \sqrt{T-t_0+1}}{\sqrt{C^*}} \right].
\end{align*}
Similarly, noting that 
\[
 \sum_{i=1}^{2|m_{1}|}\left(\sum_{j=1}^{t_0} \epsilon_j x_{ji1}\right) \beta_{i1}- \sum_{i=1}^{2|m_{1}|}\left(\sum_{j=1}^{t} \epsilon_j x_{ji1}\right) \beta_{i1} =  \sum_{i\in I_0}\left(\sum_{j=t+1}^{t_0} \epsilon_j x_{ji1}\right) \beta_{i1}+ \sum_{i\in I_1}\left(\sum_{j=t+1}^{t_0} \epsilon_j x_{ji1}\right) \beta_{i1},
\]
 in $\Omega$ it holds that 
 \begin{align*}
\left\vert \sum_{i\in I_0}\left(\sum_{j=t+1}^{t_0} \epsilon_j x_{ji1}\right) \beta_{i1} - \frac{t_0-s}{T-t+1} \sum_{i \in I_0} \left(\sum_{j=t+1}^{t_0} \epsilon_j  x_{ji2}\right) \beta_{i1} \right\vert \geq&   -2|I_0|  \beta_* C \frac{T-t_0+1}{T-t+1} \sqrt{(t_0-t) \log T } \\
\geq&   -2|m_{1}|  \beta_* C\frac{t_0-t}{T-t+1}   \frac{T-t_0+1}{\sqrt{C^*}},
\end{align*}
and
\begin{align*}
\left\vert \sum_{i\in I_1}\left(\sum_{j=t+1}^{t_0} \epsilon_j x_{ji1}\right) \beta_{i1}\right\vert \geq&  -2|I_1|  \beta_* C \frac{t_0-t}{T-t+1} \left(\frac{T-t_0+1}{\sqrt{C^*}} + \sqrt{(t_0-t) \log T} \right).
 \end{align*}
 We can lower bound the other terms with similar arguments
\begin{align*}
	&\left\vert \sum_{i=1}^{2|m_{2}|}\left(\sum_{j=t_0+1}^{T} \epsilon_j x_{ji2}\right) \beta_{i2} - \frac{T-t_0+1}{T-t+1}\sum_{i=1}^{2|m_{2}|}\left(\sum_{j=t_0+1}^{T} \epsilon_j x_{ji2}\right) \beta_{i2}\right\vert \geq \\
    &\qquad\qquad\qquad\qquad\qquad\qquad\qquad\qquad\qquad\qquad- \frac{t_0-t}{T-t+1}2|m_{2}|  \beta_* C \sqrt{(T-t_0+1)\log T},
\end{align*}
and
\begin{align*}
	\left\vert \frac{t_0-t}{T-t+1} \sum_{i \in I_0} \left(\sum_{j=t_0+1}^{T} \epsilon_j  x_{ji2}\right)\beta_{i1} \right\vert \geq&- \frac{t_0-t}{T-t+1}2|I_0|  \beta_* C \sqrt{(T-t_0+1)\log T},
    \end{align*}
and
\begin{align*}
& -  \frac{2 C_1}{t_0} \sum_{i=1}^{2|m_{1}|}\left\vert \sum_{j=1}^{t_0} \epsilon_j x_{ji1}\right\vert - \frac{2 C_1}{t}\sum_{i=1}^{2|m_{1}|}\left\vert\sum_{j=1}^{t} \epsilon_j x_{ji1}\right\vert \geq \\
&\qquad\qquad\qquad\qquad\qquad\qquad-\frac{t_0-t}{T-t+1} 2 C_1 2 |m_{1}| C\frac{T-t_0+1 + t_0-t}{t_0-t}\left( \frac{\sqrt{\log T}}{ \sqrt{t_0}} +\frac{\sqrt{\log T}}{\sqrt{t}} \right)\\
 &\qquad\qquad\qquad\qquad\qquad\qquad\qquad\qquad\qquad\qquad\geq-\frac{t_0-t}{T-t+1} 4 C_1 2 |m_{1}| C \left(\frac{T-t_0+1}{C^*\sqrt{\log T}} + \sqrt{\log T}\right),
 \end{align*}
and
\begin{align*}
- \frac{2 C_1}{T-t_0+1}\sum_{i=1}^{2|m_{2}|}\left\vert \sum_{j=t_0+1}^{T} \epsilon_j x_{ji2}\right\vert \geq& -2 C_1 2 |m_{2}|  C \frac{\sqrt{\log T}}{\sqrt{T-t_0+1}} \\
\geq& -\frac{t_0-t}{T-t+1} 2 C_1 2 |m_{2}|  C\left( \frac{\sqrt{T-t_0+1}}{C^* \sqrt{\log T} } + \frac{\sqrt{\log T}}{C^*\sqrt{T-t_0+1}}\right),
\end{align*}
and
\begin{align*}
& - \frac{2 C_1}{T-t+1}\sum_{i=1}^{2|m_{2}|}\left[\left\vert \sum_{j=t_0+1}^{T} \epsilon_j x_{ji2}\right\vert+ \left\vert\sum_{j=t+1}^{T} \epsilon_j  x_{ji2}\right\vert  \right] \geq \\
&\qquad\qquad\qquad\qquad\qquad\qquad-\frac{t_0-t}{T-t+1} 2 C_1 2 |m_{2}| C \left(\frac{\sqrt{(T-t_0+1)}}{\sqrt{C^*\log T}} +\frac{\sqrt{ (T-t+1)}}{\sqrt{C^*\log T}}\right). 
%\geq& -2 C_1 2 |m_{2}| C \left(2\frac{\sqrt{T-t_0+1}}{C^* \sqrt{\log T} } +\frac{1}{\sqrt{C^*}}\right).
\end{align*}

\noindent Finally, the log terms in the last row can be lower bounded:
\[
\frac{2|m_{1}|}{2} \log \frac{t}{t_0} + \frac{2|m_{2}|}{2}\log \frac{T-t+1}{T-t_0+1} \geq -  |m_{1}| \frac{t_0-t}{{\sqrt{t_0}\sqrt{t}}}=-  |m_{1}| \frac{t_0-t}{T-t+1} \left(\frac{T-t_0+1}{\sqrt{t_0}}  + \sqrt{t_0-t} \right).
\]
where the inequality uses $\log (1-x)\geq -x/\sqrt{1-x}$ when $x\in [0,1)$. The remaining terms are of lower order.

Let $C_2:= 2 C d$. We have a new lower bound for $\delta$:
\begin{align*}
\delta \geq& \frac{t_0-t}{T-t+1} \Bigg[ \kappa \frac{T-t_0+1}{4} -2 C_2 \beta_*\frac{T-t_0+1}{\sqrt{C^*}} \\
&-C_2 \Bigg( 2 \beta_* \sqrt{(T-t_0+1) \log T} + 3 \log T + 4 C_1 \frac{\sqrt{T-t_0+1}}{C^* \sqrt{\log T}}\\
& + (1+\beta_*) \sqrt{(t_0-t)\log T} +2C_1\frac{\sqrt{\log T}}{\sqrt{T-t_0+1}}+ 4C_1\frac{T-t_0+1}{C^* \sqrt{\log T}}+ \frac{\sqrt{ (T-t+1)}}{\sqrt{C^*\log T}} \\
&+4 C_1 \sqrt{\log T} + 2\frac{T-t_0+1}{\sqrt{t_0}}+ \sqrt{t_0-t} \Bigg)  \Bigg].
\end{align*}
\noindent By Assumption 1 \textit{(b)}, $\kappa (T-t_0+1)>\sqrt{T \log^{1+\epsilon}T}$. Hence, for large $T$, $\kappa(T-t_0+1)/8$ is larger than what is between the round brackets. By Assumption 1 \textit{(c)} $\kappa>c$, it follows that for $T$ large enough
\[
\delta \geq \frac{t_0-t}{T-t+1} (T-t_0+1)\left(\frac{c}{8} - \frac{2C_2 \beta_*}{\sqrt{C^*}}\right).
\]
The term $(t_0-t)(T-t_0+1)/(T-t+1)$ decreases as a function of $t$, and it is minimized with $t=t_0-C^* \log T$
\[
\delta \geq \frac{C^* \log T}{T-t+1+C^* \log T} (T-t+1)\left(\frac{c}{8} - \frac{2C_2 \beta^*}{\sqrt{C^*}}\right) = \frac{C^* \log T}{1+\frac{C^* \log T}{T-t+1}} \left(\frac{c}{8} - \frac{2C_2 \beta^*}{\sqrt{C^*}}\right).
\]
For $C^*> 256 C_2^2 (\beta^*)^2/c^2$ and $T$ large enough, $\delta>0$.

\noindent \textbf{Step 2:} In this step, we restore the subscript $\mathbf{m}_0$ to denote the true model. We want to show that 
\begin{equation}\label{eq:step2}
    \frac{ p(\mathbf{y}|t,\mathbf{m})p(\mathbf{m})}{ p(\mathbf{y}|t,\mathbf{m}_0)p(\mathbf{m}_0)} \rightarrow 0, \quad\text{as} \quad T\rightarrow +\infty,
\end{equation} 
for $\mathbf{m}\neq \mathbf{m}_0$ and for all $t=1,\dots, T$ and $\mathbf{m} \in \mathcal{M}\times\mathcal{M}$. Recall that, given a change point, the left and right segments are independent. Therefore:
\begin{align}\label{eq:bf_ratio}
    \frac{ p(\mathbf{y}|t,\mathbf{m})p(\mathbf{m})}{ p(\mathbf{y}|t,\mathbf{m})p(\mathbf{m}_0)} &= \frac{p(\mathbf{m})}{p(\mathbf{m}_0)}\underbrace{\frac{ p(\mathbf{y}_{1:t}|m_1)}{ p(\mathbf{y}_{1:t}|m_{01})}}_{BF(m_1,m_{01})}  \underbrace{\frac{ p(\mathbf{y}_{t+1:T}|m_2)}{ p(\mathbf{y}_{t+1:T}|m_{02})}}_{BF(m_2,m_{02})},
\end{align} 
where the ratio of priors on models is bounded and independent of $T$ by Assumption 1 \textit{(e)}. The remaining terms in  \eqref{eq:bf_ratio} are Bayes factors. By Assumption 1 \textit{(d)}, the model is correctly specified. Bayes factors consistency in this case is a standard result \cite[e.g.][]{johnson2012bayesian}. As a consequence, \eqref{eq:bf_ratio} goes to $0$ at a parametric rate in the worst case scenario.

\noindent This concludes the proof.

\subsection{Proof of Proposition 1}

First, note that  for all $\omega\in(0,1/2)$, there exits $A,
\phi$ such that $\beta^{(1)}\sin (2 \pi \omega t) + \beta^{(2)} \cos (2 \pi\omega t)=A\sin(2 \pi \omega t+\phi)$. To see this, rewrite
\begin{align*}
    \beta^{(1)}\sin (2 \pi \omega t) + \beta^{(2)} \cos (2 \pi\omega t) &= \sqrt{( \beta^{(1)})^2+( \beta^{(2)})^2}\left(\frac{\beta^{(1)}\sin (2 \pi \omega t)}{\sqrt{( \beta^{(1)})^2+( \beta^{(2)})^2}} + \frac{\beta^{(2)}\cos (2 \pi\omega t)}{\sqrt{( \beta^{(1)})^2+( \beta^{(2)})^2}}  \right).
\end{align*}
Then, there exits $\phi$ such that
\[
\cos(\phi)=\frac{\beta^{(1)}}{\sqrt{( \beta^{(1)})^2+( \beta^{(2)})^2}} \text{  and  } \sin(\phi)=\frac{\beta^{(2)}}{\sqrt{( \beta^{(1)})^2+( \beta^{(2)})^2}}.
\]
Noting that $\tan(\phi)= \beta^{(2)}/ \beta^{(1)}$, then $\phi=\tan^{-1}(\beta^{(2)}/ \beta^{(1)})$. Using the trigonometric identity $\sin(\alpha+\beta)=\sin(\alpha)\cos(\beta)+\sin(\beta)\cos(\alpha)$, we have that 
\begin{align*}
 \sqrt{( \beta^{(1)})^2+( \beta^{(2)})^2}\left(\cos(\phi)\sin (2 \pi \omega t)+\cos (2 \pi\omega t) \sin(\phi) \right)&=\sqrt{( \beta^{(1)})^2+( \beta^{(2)})^2} \sin (2 \pi\omega t+\phi),
\end{align*}
and $A=\sqrt{( \beta^{(1)})^2+( \beta^{(2)})^2}$.
Now, given $\omega_1\in(0,1/2)$, we have to determine the set of $\omega_2$ where
\[
A\sin (2 \pi\omega_1 t_0+\phi)=A\sin (2 \pi\omega_2 t_0+\phi).
\]
The sine function is such that that $\sin(\theta)=\sin(\theta+2n\pi)$ with $n\in\mathbb{Z}$. $\Omega_A$ define such set of all shifts by $2\pi$. It also holds that 
$\sin(\theta)=\sin(\pi-\theta)$, $\Omega_B$ defines the sets of all those $\omega_2$. Both $\Omega_A$ and $\Omega_B$ are intersected with $(0,1/2)$ to give a valid parameter.

\clearpage
\section{Practicalities}
\label{sec:practical}
\textbf{Choice of the grid $\Omega$}. The choice of the grid affects \algonamesp performance, both in terms of accuracy and runtime. Here, we compare two possible grid construction. The first one assumes an equally spaced grid between $0$ and $1/2$ of length $p$. The second one builds a grid aided by the periodogram, which is an empirical estimator of the spectral density. The periodogram allows for a thinner grid in some intervals than others. The idea is to compute the periodogram on the entire series and define $\Omega$ as the set of $p$ frequencies with the highest power spectrum or the set of frequencies with a power spectrum greater than a threshold.

We compare the two grid constructions (equally spaced, and one based on the periodogram considers the $p$ frequencies with the highest power spectrum) on one replicate of Scenario 2 of Section 3 (the replicate is considered representative), and fix all the other parameters ($N_E$ and prior hyperparameters).

Table \ref{tab:om_grid} shows the results. The grid construction through the periodogram achieves better performance than the equally spaced grid, already for moderate $p$. Both strategies have comparable performance when $p\geq 400$. Small $p$ is preferable since it is faster.\\

\begin{table}[!ht]
  \centering
    \resizebox{0.65\textwidth}{!}{
    \begin{tabular}{lrlrlrl}
    \toprule
    \multicolumn{3}{c}{Periodogram} &       &  \multicolumn{3}{c}{Equally spaced} \\
    \cmidrule{1-3}\cmidrule{5-7}
    $p$ & & Change points & & $p$ & & Change points \\
    \cmidrule{1-1}\cmidrule{3-3}\cmidrule{5-5}\cmidrule{7-7}
    $10$  & & $(165, 624, 893)$ &       & $100$ & & $(181, 292, 461, 596, 896)$ \\
    $50$  & & $(167, 624, 896)$ &       & $200$ & & $(181, 292, 462, 624, 898)$ \\
    $100$ & & $(167, 624, 897)$ &       & $300$ & & $(163, 402, 627, 896)$ \\
    $200$ & & $(167, 624, 898)$ &       & $400$ & & $(165, 624, 896)$ \\
    $500$ & & $(167, 624, 899)$ &       & $500$ & & $(165, 624, 897)$ \\
    \bottomrule
    \end{tabular}}
    \vspace{0.5em}
    \caption{{Choice of the grid $\Omega$}. Segmentation obtained with different grid specification and size. One replicate of Scenario 2 in Section 3 with $T=1000$ and $m=3$. The true change point locations are $(174, 622, 892)$. We compare the performance of the proposed change point detection procedure increasing the dimension of the grid ($p$), for the equally spaced construction and the one based on the $p$-highest power spectrum frequencies in the periodogram.}\label{tab:om_grid}
\end{table}

\textbf{Choice of the number of effects $N_E$}. $N_E$ is a parameter of SuSiE that upper bound $L_{j}$ within a segment. The accuracy of the change point detection is affected by $N_E$. We consider one replicate of Scenario 2 of Section 3 (the same used for Table \ref{tab:om_grid}). In this example the true within model dimensions are $L_1=2$, $L_2=2$, $L_3=3$, and $L_4=2$. We construct $\Omega$ using the $p$ frequencies with the highest power spectrum in the periodogram and apply \algonamesp for varying $N_E$. Figure \ref{fig:choice_Ne} depicts the results. First, when $N_E<L_j$, the segmentation is nearly perfect but the estimation of the within-segment parameters (\textit{e.g.}, the mean) is not (see Figure \ref{fig:choice_Ne}(b) for $N_E=1$). On the other hand, if $N_E$ is set too large, even a complex signal can be well approximated by a combination of sinusoidal functions, leading to a poor segmentation. (see Figure \ref{fig:choice_Ne}(e)-(f) for $N_E=5,10$).

Our recommendation is to select a low $N_E$ (\textit{e.g.}, $1,2$ or $3$) as it leads to the best segmentation performance. This is likely not an issue in applications, as we see that \cite{andrieu1999joint} and \cite{hadj2020bayesian} always considers low model dimensions. However, once the segmentation step is done and a set of estimated change points is available, it is possible to re-estimate the within segment parameters to improve the inference on the underlying signal. In doing so, one can increase the number of effects $N_E$ and even consider specific $N_E$ for each segment.

An alternative is to have an automatic procedure that chooses $N_E$ with an information criterion. Indeed, we can see the estimated partition as a function of $N_E$, \textit{i.e.} $\hat{\mathcal{P}}(N_E)$, and choose $\hat{\mathcal{P}}(N_E)$ maximizing a certain information criterion (\textit{e.g.}, mBIC, MDL). Directly comparing $\hat{\mathcal{P}}(N_E)$ is not viable since the marginal likelihoods are largely driven by the within segment fit (or lack of), while we want to focus on the segmentation performance. Therefore, we suggest to estimate $\hat{\mathcal{P}}(N_E)$ for $N_E$ ranging from one to $\bar{N}_E$ (an arbitrary upper bound), then, given each partition $\hat{\mathcal{P}}(N_E)$, we re-fit the same within segment model with $\bar{N}_E$ effects in each segment. Finally, we choose $\hat{\mathcal{P}}(N_E)$ that maximizes a certain information criterion.

The automatic procedure applied to the series depicted in Figure \ref{fig:choice_Ne} with $\bar{N}_E=4$ returns $N_E=1$, which is equal to the largest within-segment model dimension.

\begin{figure}[!ht]
\centering
\subfigure[True mean function and change-points]{\includegraphics[width=.48\textwidth]{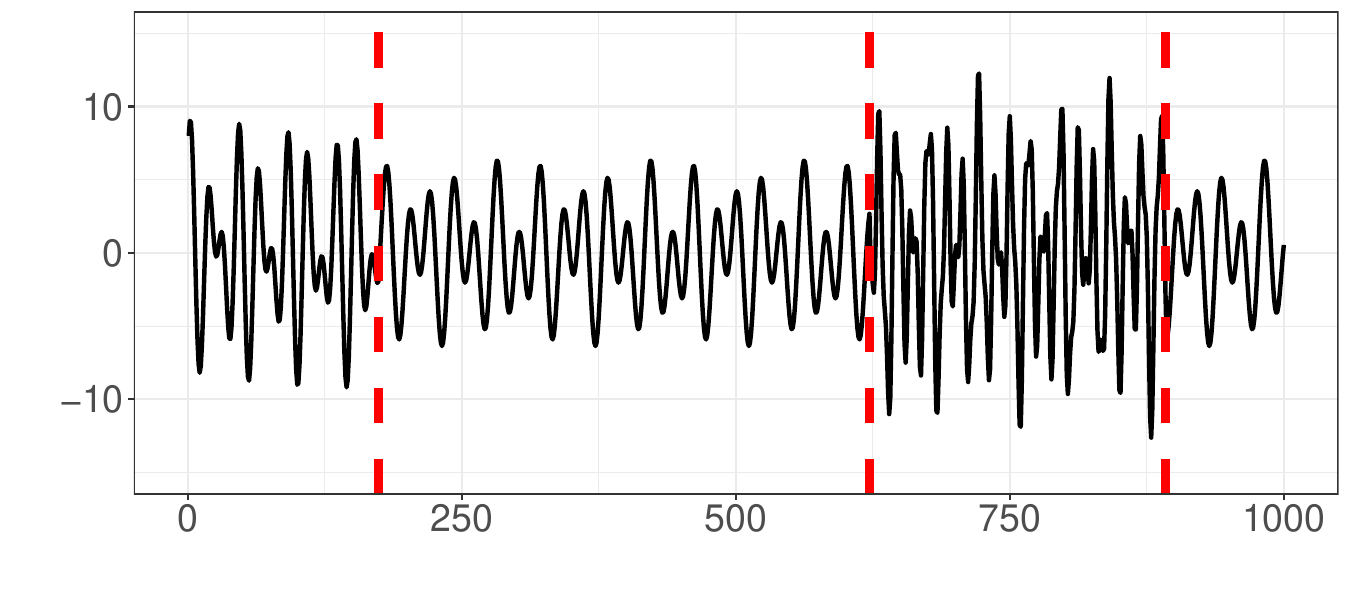}}
\subfigure[$N_E=1$]{\includegraphics[width=.47\textwidth]{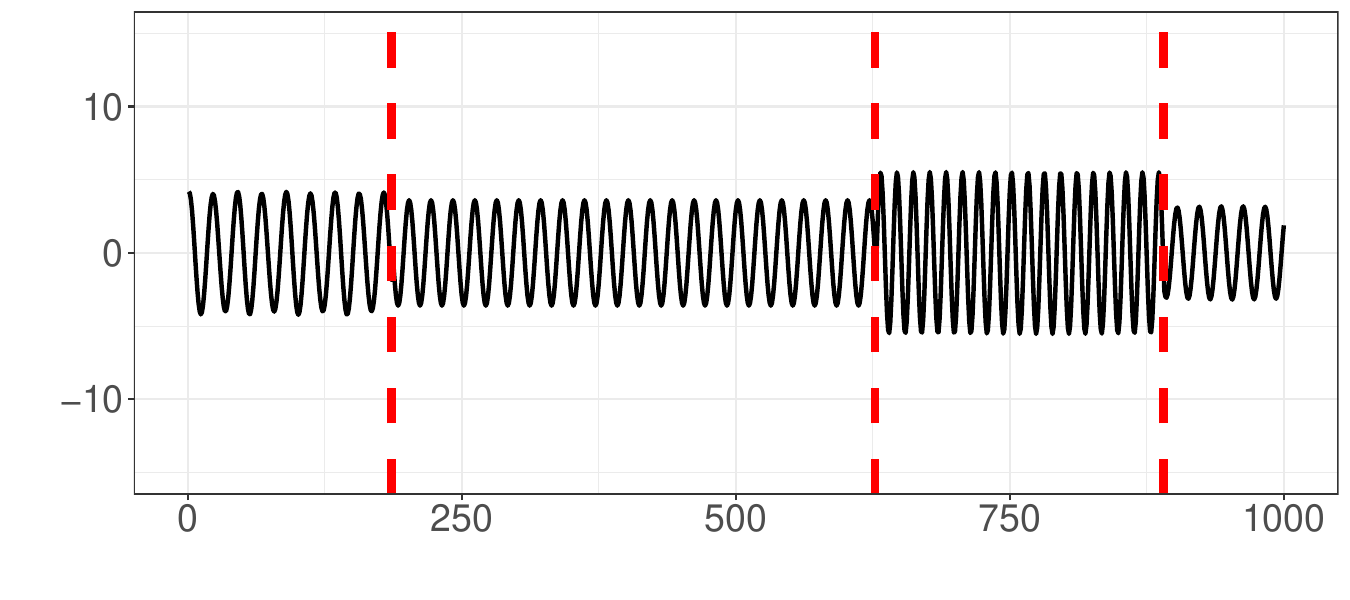}}\\
\subfigure[$N_E=2$]{\includegraphics[width=.47\textwidth]{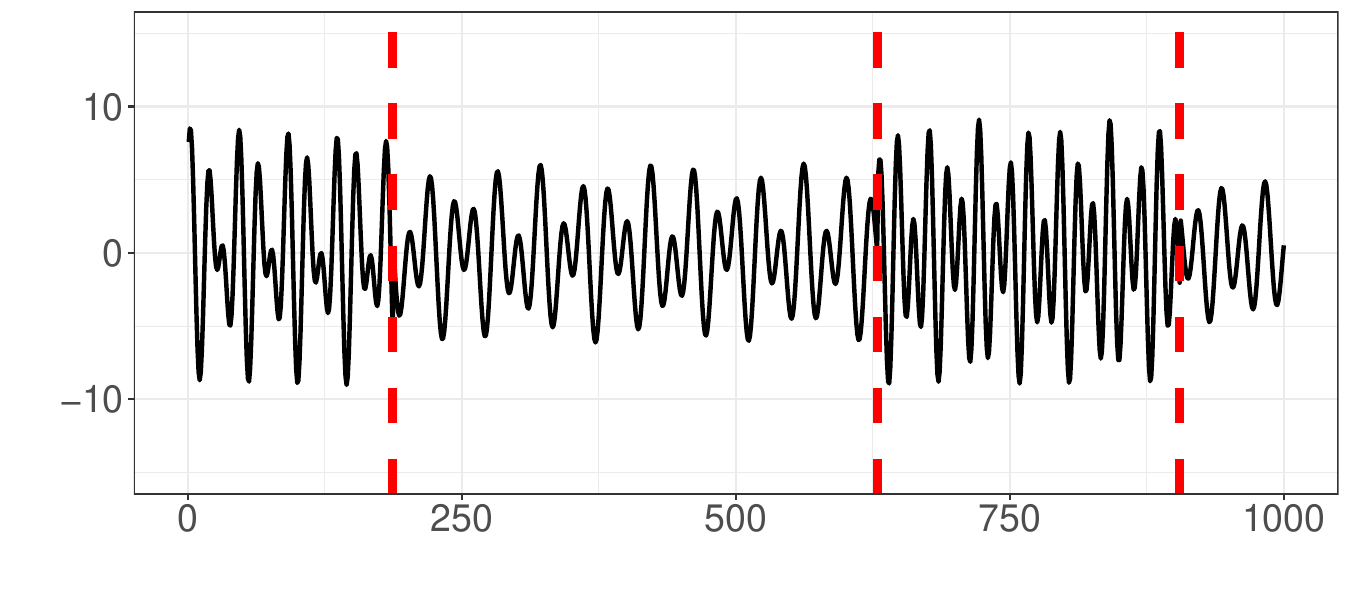}}
\subfigure[$N_E=3$]{\includegraphics[width=.47\textwidth]{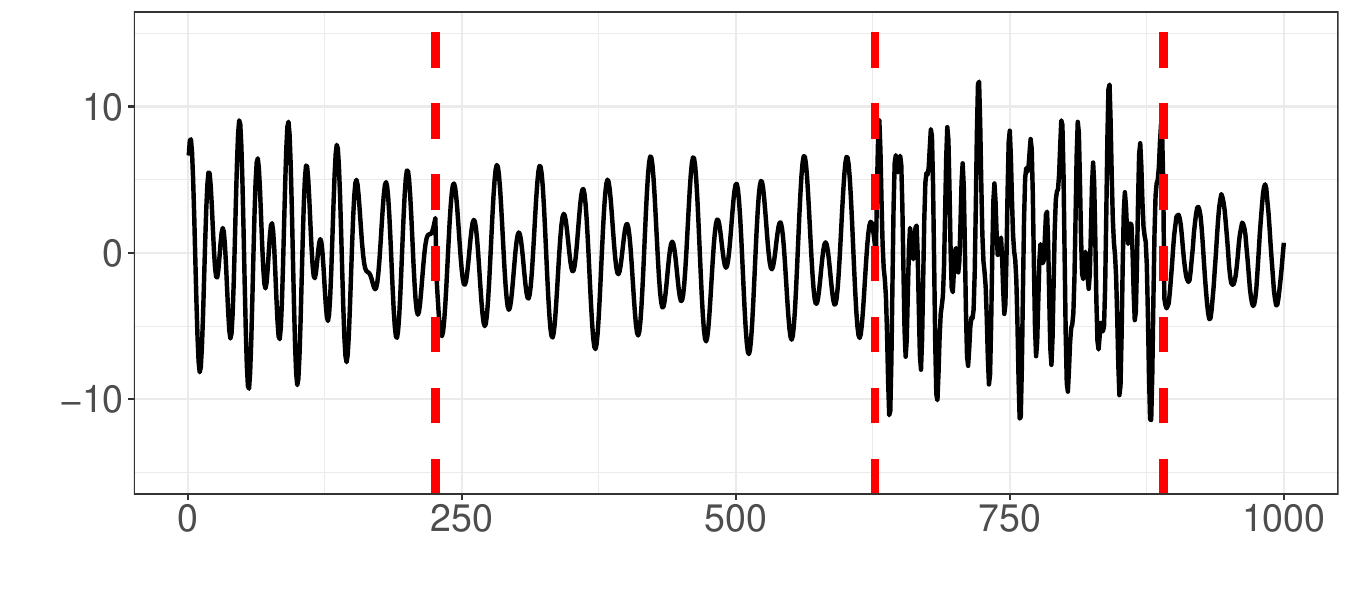}}\\
\subfigure[$N_E=5$]{\includegraphics[width=.47\textwidth]{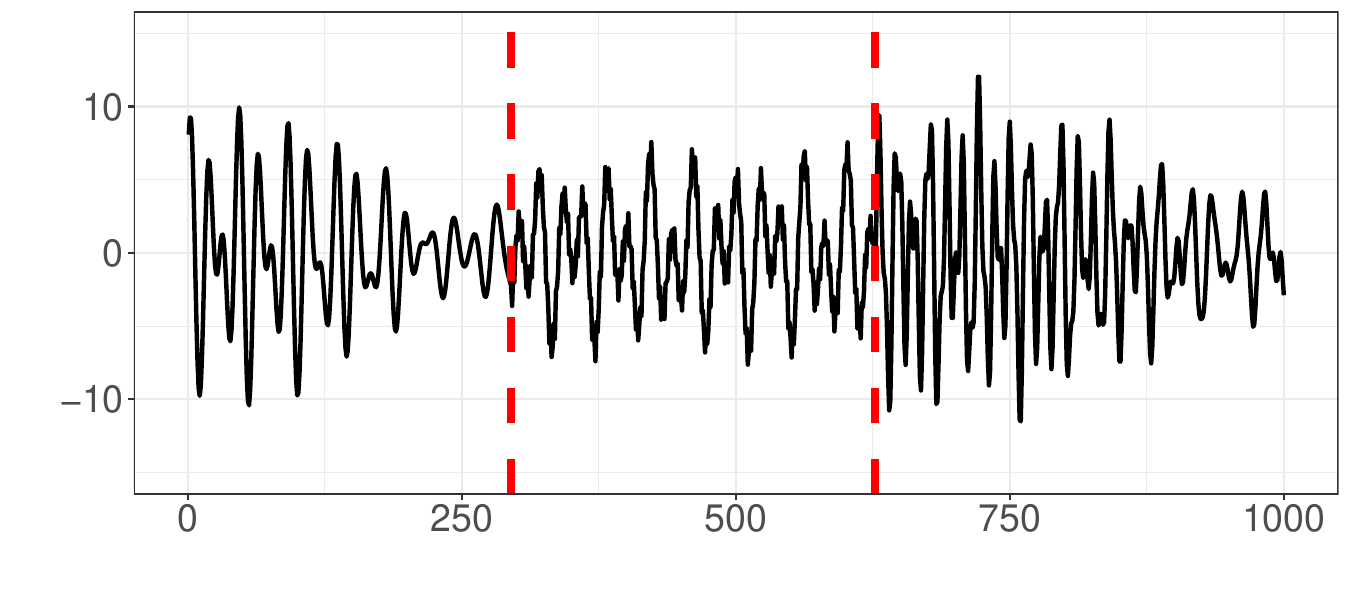}}
\subfigure[$N_E=10$]{\includegraphics[width=.47\textwidth]{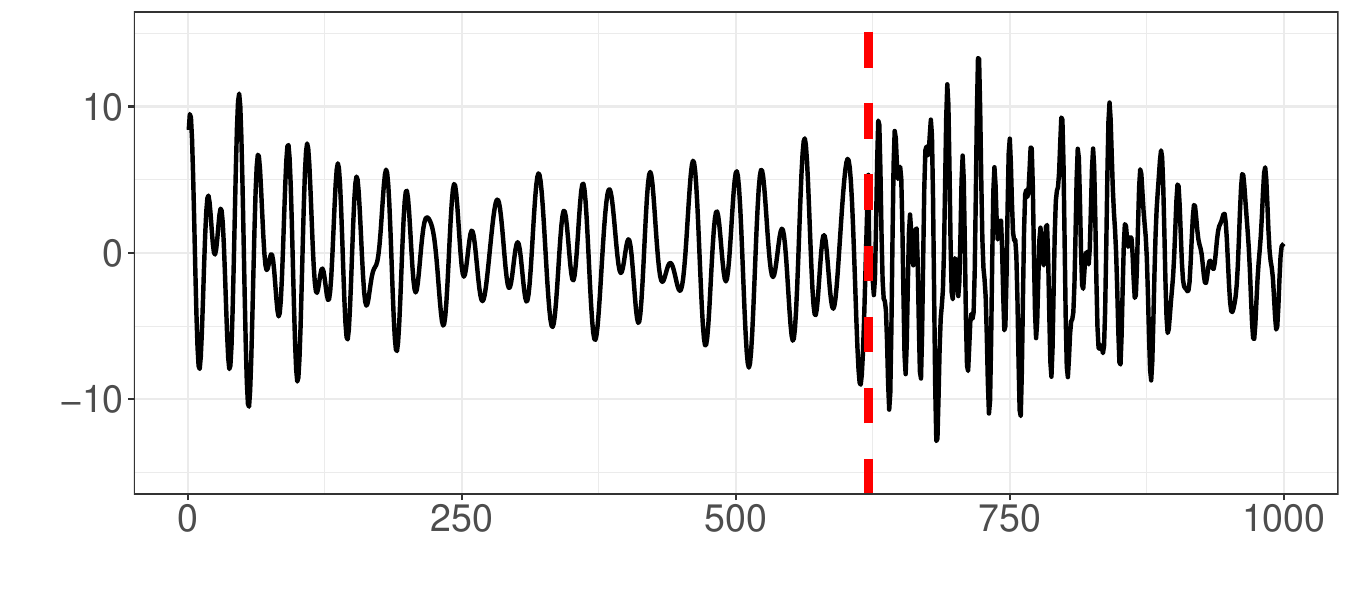}}
\vspace{0.5em}
\caption{{Choice of SuSiE effects $N_E$}. Segmentation obtained with different $N_E$. One replicate of Scenario 2 in Section 3 with $T=1000$ and $m=3$. The true change point locations are $(174, 622, 892)$ and $L_1=2$, $L_2=2$, $L_3=3$, and $L_4=2$. Panel (a) depicts the true signal along with the change points, the other panels depict \algonamesp estimated change points and means for growing $N_E$.}\label{fig:choice_Ne}
\end{figure}

\clearpage
\section{Simulations: additional information and results}
\label{app:more_sim}
We complement Section 3 with additional results and details of the scenarios not included in the main manuscript. \\

\subsection{Details on the data-generating mechanisms}
\label{sec:details_scenarios}
In this section, we provide the details on the data-generating mechanisms used in Section 4. The mean signal in Scenarios 1-3 is constructed starting from 6 baseline functions that are presented in Figure \ref{fig:signals}. Then, the scenarios differ for the number of baseline functions used and the way they are connected at the change-points. In addition, different scenarios also assume different noise. 
\begin{figure}[!ht]
\centering
    \includegraphics[width=0.8\textwidth]{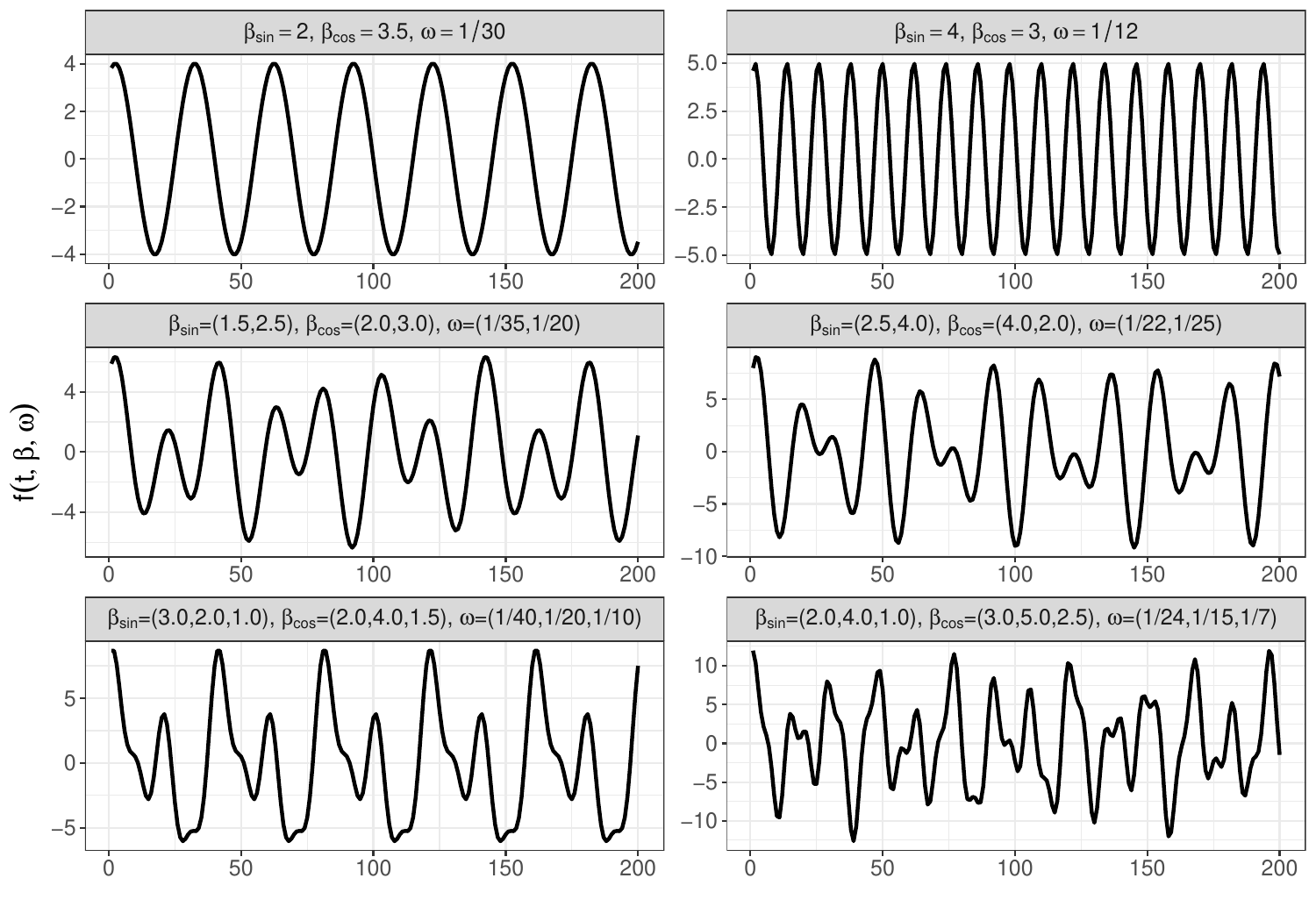}\\
    \caption{{Scenarios 1,2,3: true mean signals}. The right column depicts the signal in Scenario 1. These baseline functions are the same used in \cite{hadj2020bayesian}. All six mean signals are used to construct the mean function in Scenarios 2 and 3.}
    \label{fig:signals}
\end{figure}
A detailed description of each setting follows:
\begin{itemize}\setlength\itemsep{1em}
    \item \textbf{Scenario 1}. This Scenario follows the structure of the simulation study in \citep{hadj2020bayesian} to build the mean signal. The length of the time series $T=900$ and the number $m=2$ and locations of change-points $\{300,650\}$ are fixed. The data generating process is defined as $y_t=\mu_t+\varepsilon_t$ where:
   \begin{equation*}
    \resizebox{\textwidth}{!}{$
        \mu_t =\begin{cases}
            2\sin(2\pi \frac{1}{24} t)+3\cos(2\pi \frac{1}{24} t) + 4\sin(2\pi \frac{1}{15} t)+5\cos(2\pi \frac{1}{15} t) + \sin(2\pi \frac{1}{7} t)+2.5\cos(2\pi \frac{1}{7} t) &1\leq t \leq 300 \\
            4\sin(2\pi \frac{1}{12} t)+3\cos(2\pi \frac{1}{12} t) &300 < t \leq 650\\
            2.5\sin(2\pi \frac{1}{22} t)+4\cos(2\pi \frac{1}{22} t) + 4\sin(2\pi \frac{1}{25} t)+2\cos(2\pi \frac{1}{25} t) &650 < t \leq 900\\
        \end{cases},
    $}\end{equation*}
    where the size of the three segments are $L_1=3$, $L_2=1$, $L_3=2$. The choice of $\varepsilon_t$ lead to sub-scenarios:\\
    \begin{itemize}\setlength\itemsep{0.5em}
        \item \textbf{Scenario 1.a} Assumes stationary and Gaussian errors $\varepsilon_t\sim N(0,1)$ and $\varepsilon_t\sim N(0,3^2)$.
        \item \textbf{Scenario 1.b} Assumes stationary and t-student errors $\varepsilon_t\sim t_3$.
        \item \textbf{Scenario 1.c} Assumes the non-stationary model in \cite{wu2024frequency} (M1) such that $\varepsilon_t=0.5\cos(t/900)\varepsilon_{t-1}+e_t+0.3(t/900)e_{t-1}$, where $e_t\sim N(0,1)$.
    \end{itemize}
    \item \textbf{Scenario 2}. This scenario considers all the 6 baseline functions in Figure \ref{fig:signals}. We consider two different cases. The first one assumes increasing both sample size and the number of change points $(T,m)=\{(500,2),(1000,4),(5000,8),(10000,12)\}$, while the second one fixes the sample size $T=1000$ and gradually increases the number of change points $m=\{2,3,4,5,6\}$. For each combination $(T,m)$ and for each replicate, we sample new locations for the $m$ change points enforcing a minimum spacing between them ($100$ for $T=\{500,1000\}$, $200$ for $T=5000$, $300$ for $T=10000$). Then, we assign at each segment one of the pre-specified mean functions. Here the error is stationary and Gaussian $\varepsilon_t\sim N(0,3^2)$. The way the baseline functions are connected at the change points give raise to two cases:\\
    \begin{itemize}\setlength\itemsep{0.5em}
        \item \textbf{Scenario 2.a} Here, no continuity in $\mu_t$ at the change points is imposed.
        \item \textbf{Scenario 2.b} Here, we enforce continuity through a misspecification of the mean function in (1):  
        \begin{equation*}
            \mu_{t}= \sum_{l=1}^{L_{j}} \left(\beta_{jl}^{(1)} \sin (2\pi \omega_{jl}t+a_{jl})+\beta_{jl}^{(2)} \cos (2\pi \omega_{jl}t+a_{jl})\right),\qquad t_{j-1} < t \leq t_j,
        \end{equation*}
        where $a_{jl}$ introduces a shift, and it is optimized such that $\lim_{\epsilon\rightarrow 0}\mu_{t_j+\epsilon}=\mu_{t_j}$, therefore ensuring a continuous signal at the change-point locations. Although continuity is guaranteed, it might show a kink in $t_j$ more or less pronounced, since no constraints are imposed on the derivatives. An example is given in Figure \ref{fig:signals_cont}.
        \begin{figure}[!ht]
        \centering
            \includegraphics[width=0.7\textwidth]{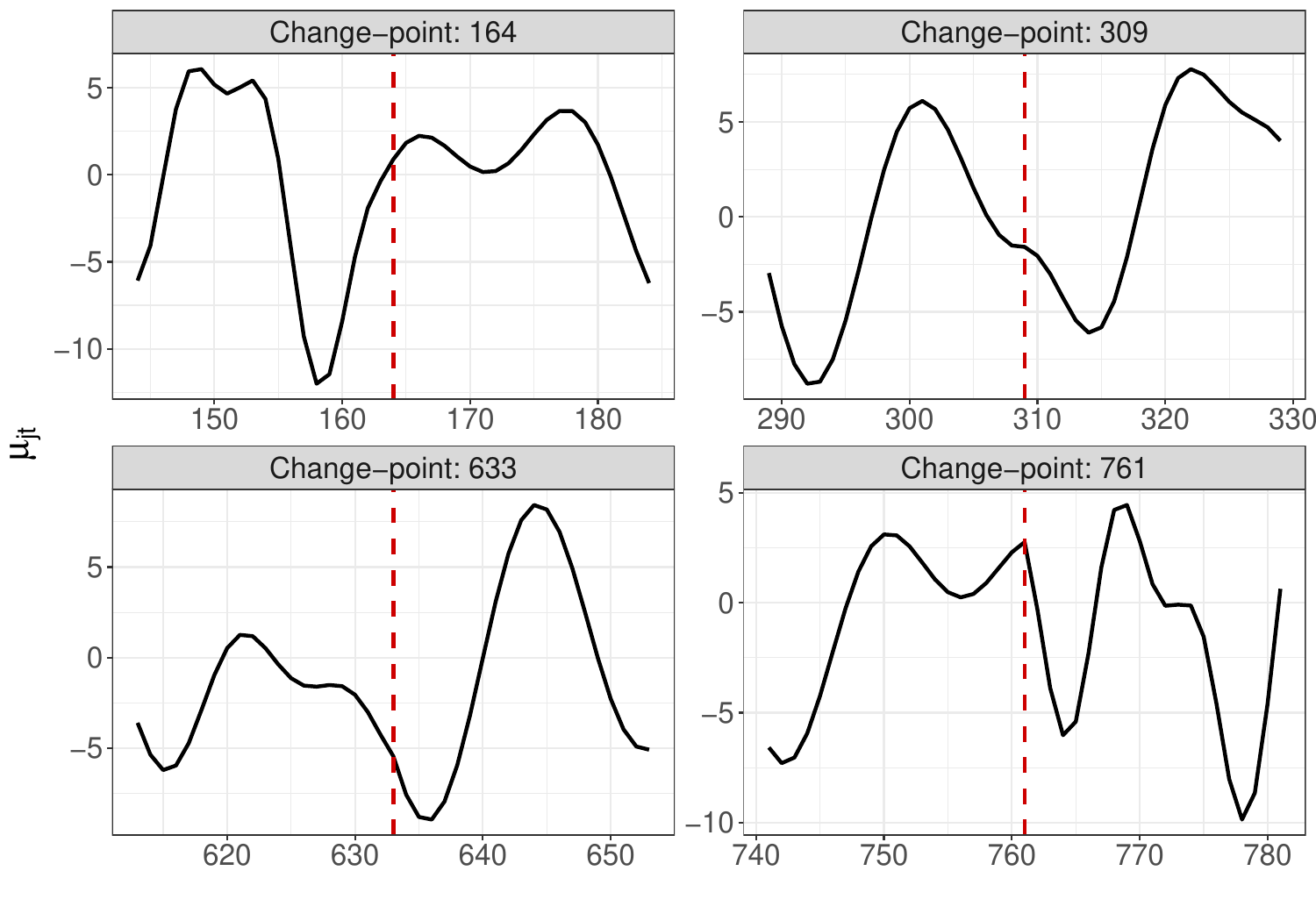}\\
            \caption{{Continuous signal at change points}. The figure depicts an example of a signal in Scenario 2.b with $n=1000$ and $m=4$ where the functions are connected to enforce continuity.}
            \label{fig:signals_cont}
        \end{figure}
    \end{itemize}
    \item \textbf{Scenario 3}. This scenario considers a multivariate, dense regime. We consider the multivariate model in (1) with $d=\{2,3,4,5,10,20\}$ series, fixed time series length $T=1000$ and two possible number of change points $m=4,6$. Similar to Scenario 2, for each replicate, we sample new locations for the $m$ change points. Moreover, no continuity in $\mu_{it}$ at the change points is imposed, for $i=1,\ldots,d$. Again, we have two sub-cases according to the nature of $\varepsilon_{it}$. Initially, the error is stationary and Gaussian with same standard deviation for all $i=1,\ldots,d$: $\varepsilon_{it}\sim N(0,3^2)$. Then, we introduce an increasing number of more noisy series. We set $d=10$, $m=4$, and consider two noise regimes: $\sigma_{i}=3$ for $i=1,\ldots,d_1$ and $\sigma_{i}=9$ for $i=d_1+1,\ldots,10$, where $d_1\in\{9,7,4\}$.
    \item \textbf{Scenario 4}. This scenario considers the piece-wise autoregressive process as in \cite{davis2006structural}. Specifically, the data-generating mechanism is:
    \begin{equation}
        y_t = \begin{cases}
            0.9 y_{t-1} + \varepsilon_t, \qquad\qquad\qquad\qquad 1 \leq t \leq 512 \\
            1.69 y_{t-1} - 0.81 y_{t-2}  + \varepsilon_t, \qquad 512 < t \leq 768 \\
            1.32 y_{t-1} - 0.81 y_{t-2}  + \varepsilon_t, \qquad 768 < t \leq 1024 \\
            \end{cases}, \qquad \varepsilon_t\sim N(0,1).
    \end{equation}
    \item \textbf{Scenario 5}. This scenario assumes a slowly varying autoregressive process as in \cite{davis2006structural}. The data-generating mechanism is a time-varying parameter autoregressive process: 
    \begin{align}
        y_t &= a_t y_{t-1} - 0.81 y_{t-2} + \varepsilon_t, \qquad \varepsilon_t \sim N(0,1), \qquad t=1,\ldots,1031, \label{eq:slow_auto1}\\
        a_t &= 0.8(1-0.5\cos(\pi t/1031)) \label{eq:slow_auto2},
    \end{align}
    where the parameter associated with the first lag changes gradually over time, whereas the one associated with the second lag remains constant.
    \item \textbf{Scenario 6}. This scenario considers a setting where no-signal and no-change point are present. The data-generating mechanism is a white noise process $y_t\sim N(0,3^2)$ for $t=1,\ldots,1000$.
\end{itemize}
Note that in Scenarios 1.a, 2.a, 3 \algonamesp is correctly specified, while in the remaining ones we consider model misspecification.

\subsection{Additional Tables and Figures}

% Table generated by Excel2LaTeX from sheet 'Foglio1'
\begin{table}[htbp]
  \centering
\resizebox{1\textwidth}{!}{
    \begin{tabular}{lrrrrrrrrrrrrrrr}
    \toprule
          &       & \multicolumn{4}{c}{$C(\mathcal{P},\hat{\mathcal{P}})$}       &       & \multicolumn{4}{c}{$H(\mathcal{P},\hat{\mathcal{P}})$}      &       & \multicolumn{4}{c}{$B(\mathcal{P},\hat{\mathcal{P}})$} \\
          &       & \multicolumn{1}{l}{$\sigma_j=1$} & \multicolumn{1}{l}{$\sigma_j=3$} & \multicolumn{1}{l}{t-Student} & \multicolumn{1}{l}{NS-noise} &       & \multicolumn{1}{l}{$\sigma_j=1$} & \multicolumn{1}{l}{$\sigma_j=3$} & \multicolumn{1}{l}{t-Student} & \multicolumn{1}{l}{NS-noise} &       & \multicolumn{1}{l}{$\sigma_j=1$} & \multicolumn{1}{l}{$\sigma_j=3$} & \multicolumn{1}{l}{t-Student} & \multicolumn{1}{l}{NS-noise} \\
          \cmidrule{1-1}\cmidrule{3-6}\cmidrule{8-11}\cmidrule{13-16}
    CPVS(1) &       & \cellcolor[rgb]{ .984,  .976,  .988}0,98 & \cellcolor[rgb]{ .984,  .973,  .984}0,98 & \cellcolor[rgb]{ .984,  .973,  .984}0,97 & \cellcolor[rgb]{ .984,  .984,  .996}0,98 &       & \cellcolor[rgb]{ .988,  .988,  1}0,01 & \cellcolor[rgb]{ .988,  .988,  1}0,01 & \cellcolor[rgb]{ .988,  .98,  .992}0,02 & \cellcolor[rgb]{ .988,  .988,  1}0,01 &       & \cellcolor[rgb]{ .988,  .988,  1}0,00 & \cellcolor[rgb]{ .988,  .988,  1}0,00 & \cellcolor[rgb]{ .988,  .973,  .984}0,06 & \cellcolor[rgb]{ .988,  .988,  1}0,00 \\
    CPVS(2) &       & \cellcolor[rgb]{ .984,  .976,  .988}0,98 & \cellcolor[rgb]{ .984,  .976,  .988}0,98 & \cellcolor[rgb]{ .984,  .984,  .996}0,98 & \cellcolor[rgb]{ .984,  .976,  .988}0,97 &       & \cellcolor[rgb]{ .988,  .988,  1}0,01 & \cellcolor[rgb]{ .988,  .988,  1}0,01 & \cellcolor[rgb]{ .988,  .988,  1}0,01 & \cellcolor[rgb]{ .988,  .984,  .996}0,01 &       & \cellcolor[rgb]{ .988,  .988,  1}0,00 & \cellcolor[rgb]{ .988,  .988,  1}0,00 & \cellcolor[rgb]{ .988,  .984,  .996}0,02 & \cellcolor[rgb]{ .988,  .98,  .992}0,04 \\
    CPVS  &       & \cellcolor[rgb]{ .984,  .98,  .992}0,98 & \cellcolor[rgb]{ .984,  .976,  .988}0,98 & \cellcolor[rgb]{ .984,  .984,  .996}0,98 & \cellcolor[rgb]{ .988,  .988,  1}0,98 &       & \cellcolor[rgb]{ .988,  .988,  1}0,01 & \cellcolor[rgb]{ .988,  .988,  1}0,01 & \cellcolor[rgb]{ .988,  .984,  .996}0,01 & \cellcolor[rgb]{ .988,  .988,  1}0,01 &       & \cellcolor[rgb]{ .988,  .988,  1}0,00 & \cellcolor[rgb]{ .988,  .988,  1}0,00 & \cellcolor[rgb]{ .988,  .973,  .984}0,06 & \cellcolor[rgb]{ .988,  .988,  1}0,00 \\
    ANOM(0.1) &       & \cellcolor[rgb]{ .984,  .965,  .976}0,97 & \cellcolor[rgb]{ .988,  .988,  1}0,99 & \cellcolor[rgb]{ .984,  .973,  .984}0,97 & \cellcolor[rgb]{ .984,  .965,  .976}0,97 &       & \cellcolor[rgb]{ .988,  .973,  .984}0,04 & \cellcolor[rgb]{ .988,  .988,  1}0,00 & \cellcolor[rgb]{ .988,  .925,  .937}0,06 & \cellcolor[rgb]{ .988,  .937,  .949}0,05 &       & \cellcolor[rgb]{ .988,  .925,  .937}0,22 & \cellcolor[rgb]{ .988,  .988,  1}0,00 & \cellcolor[rgb]{ .988,  .902,  .914}0,30 & \cellcolor[rgb]{ .988,  .91,  .922}0,28 \\
    ANOM(1) &       & \cellcolor[rgb]{ .984,  .949,  .961}0,95 & \cellcolor[rgb]{ .984,  .984,  .996}0,99 & \cellcolor[rgb]{ .984,  .961,  .973}0,96 & \cellcolor[rgb]{ .984,  .933,  .945}0,95 &       & \cellcolor[rgb]{ .988,  .961,  .973}0,05 & \cellcolor[rgb]{ .988,  .988,  1}0,00 & \cellcolor[rgb]{ .988,  .902,  .914}0,08 & \cellcolor[rgb]{ .988,  .922,  .933}0,06 &       & \cellcolor[rgb]{ .988,  .89,  .902}0,34 & \cellcolor[rgb]{ .988,  .988,  1}0,00 & \cellcolor[rgb]{ .988,  .875,  .886}0,40 & \cellcolor[rgb]{ .988,  .875,  .886}0,40 \\
    ANOM(10) &       & \cellcolor[rgb]{ .984,  .937,  .945}0,93 & \cellcolor[rgb]{ .984,  .984,  .996}0,99 & \cellcolor[rgb]{ .984,  .949,  .961}0,96 & \cellcolor[rgb]{ .984,  .91,  .922}0,93 &       & \cellcolor[rgb]{ .988,  .945,  .957}0,08 & \cellcolor[rgb]{ .988,  .988,  1}0,00 & \cellcolor[rgb]{ .988,  .863,  .871}0,11 & \cellcolor[rgb]{ .988,  .871,  .882}0,10 &       & \cellcolor[rgb]{ .984,  .843,  .855}0,50 & \cellcolor[rgb]{ .988,  .988,  1}0,00 & \cellcolor[rgb]{ .984,  .804,  .816}0,64 & \cellcolor[rgb]{ .984,  .78,  .792}0,72 \\
    AP    &       & \cellcolor[rgb]{ .984,  .906,  .918}0,90 & \cellcolor[rgb]{ .973,  .412,  .42}0,35 & \cellcolor[rgb]{ .976,  .608,  .62}0,73 & \cellcolor[rgb]{ .976,  .561,  .569}0,70 &       & \cellcolor[rgb]{ .988,  .929,  .941}0,10 & \cellcolor[rgb]{ .973,  .412,  .42}0,99 & \cellcolor[rgb]{ .98,  .565,  .573}0,34 & \cellcolor[rgb]{ .976,  .498,  .506}0,39 &       & \cellcolor[rgb]{ .988,  .925,  .937}0,22 & \cellcolor[rgb]{ .973,  .412,  .42}1,98 & \cellcolor[rgb]{ .984,  .769,  .78}0,76 & \cellcolor[rgb]{ .984,  .753,  .761}0,82 \\
    ADA(7) &       & \cellcolor[rgb]{ .984,  .961,  .973}0,96 & \cellcolor[rgb]{ .973,  .416,  .424}0,35 & \cellcolor[rgb]{ .976,  .671,  .682}0,77 & \cellcolor[rgb]{ .976,  .584,  .592}0,72 &       & \cellcolor[rgb]{ .988,  .98,  .992}0,02 & \cellcolor[rgb]{ .976,  .42,  .427}0,98 & \cellcolor[rgb]{ .984,  .714,  .725}0,22 & \cellcolor[rgb]{ .98,  .596,  .604}0,32 &       & \cellcolor[rgb]{ .988,  .973,  .984}0,06 & \cellcolor[rgb]{ .976,  .42,  .427}1,96 & \cellcolor[rgb]{ .984,  .78,  .792}0,72 & \cellcolor[rgb]{ .984,  .757,  .769}0,80 \\
    ADA(15) &       & \cellcolor[rgb]{ .984,  .961,  .973}0,96 & \cellcolor[rgb]{ .973,  .427,  .435}0,36 & \cellcolor[rgb]{ .984,  .875,  .882}0,91 & \cellcolor[rgb]{ .984,  .925,  .937}0,94 &       & \cellcolor[rgb]{ .988,  .976,  .988}0,03 & \cellcolor[rgb]{ .976,  .431,  .439}0,96 & \cellcolor[rgb]{ .988,  .867,  .878}0,10 & \cellcolor[rgb]{ .988,  .937,  .949}0,05 &       & \cellcolor[rgb]{ .988,  .933,  .945}0,20 & \cellcolor[rgb]{ .976,  .431,  .439}1,92 & \cellcolor[rgb]{ .988,  .878,  .89}0,38 & \cellcolor[rgb]{ .988,  .969,  .98}0,08 \\
    {LSW}   &       & \cellcolor[rgb]{ .976,  .69,  .698}0,66 & \cellcolor[rgb]{ .973,  .467,  .475}0,41 & \cellcolor[rgb]{ .973,  .412,  .42}0,60 & \cellcolor[rgb]{ .973,  .486,  .494}0,65 &       & \cellcolor[rgb]{ .984,  .757,  .769}0,40 & \cellcolor[rgb]{ .976,  .482,  .49}0,87 & \cellcolor[rgb]{ .973,  .412,  .42}0,46 & \cellcolor[rgb]{ .976,  .478,  .486}0,41 &       & \cellcolor[rgb]{ .98,  .698,  .71}1,00 & \cellcolor[rgb]{ .976,  .471,  .478}1,78 & \cellcolor[rgb]{ .98,  .686,  .698}1,04 & \cellcolor[rgb]{ .98,  .694,  .702}1,02 \\
    {OBMB}  &       & \cellcolor[rgb]{ .984,  .949,  .961}0,95 & \cellcolor[rgb]{ .98,  .843,  .851}0,83 & \cellcolor[rgb]{ .984,  .91,  .922}0,93 & \cellcolor[rgb]{ .984,  .847,  .859}0,89 &       & \cellcolor[rgb]{ .988,  .976,  .988}0,03 & \cellcolor[rgb]{ .988,  .894,  .902}0,17 & \cellcolor[rgb]{ .988,  .925,  .937}0,06 & \cellcolor[rgb]{ .988,  .875,  .886}0,10 &       & \cellcolor[rgb]{ .988,  .973,  .984}0,06 & \cellcolor[rgb]{ .988,  .863,  .875}0,44 & \cellcolor[rgb]{ .988,  .949,  .961}0,14 & \cellcolor[rgb]{ .988,  .922,  .933}0,24 \\
    \bottomrule
    \end{tabular}%
  }
    \vspace{0.5em}
  \caption{{Scenario 1 (a-c): segmentation performance}. CPVS (our proposal), ANOM \citep{hadj2020bayesian}, AP \citep{davis2006structural}, ADA \citep{rosen2012adaptspec}, {LSW \citep{korkas2017multiple}, and OBMB \citep{wu2024frequency} are considered}. Three metrics: segmentation coverage $C(\mathcal{P},\hat{\mathcal{P}})\in (0,1)$ (the higher, the better), Hausdorff distance $H(\mathcal{P},\hat{\mathcal{P}})\in (0,1)$ (the lower, the better), and bias $B(m,\hat{m})\in [0,m]$ (the lower, the better). The table shows the results for Gaussian noise with two choices of the standard deviation $\sigma_j=1,3$, t-Student errors with $3$ degrees of freedom, {and the non-stationary noise used in \cite{wu2024frequency} (model M1 in their paper)}. The red scale depicts worse performances.}  \label{tab:results_Beniamino}%
\end{table}%

% Table generated by Excel2LaTeX from sheet 'Foglio1'
\begin{table}[htbp]
  \centering
  \resizebox{0.95\textwidth}{!}{
    \begin{tabular}{clrrrrrrrrrrr}
    \toprule
          &      &       & \multicolumn{4}{c}{$(T,\,m)$}        &       & \multicolumn{5}{c}{$(T=1000,\,m)$} \\
          &   Method    &       & $(500,2)$ & $(1000,4)$ & $(5000,8)$ & $(10000,12)$ &       & $2$     & $3$     & $4$     & $5$     & $6$ \\
              \cmidrule{2-2}\cmidrule{4-7}\cmidrule{9-13}
    \multirow{6}[0]{*}{\begin{turn}{90}$RMSE(\mu_t,\hat{\mu}_t)$\end{turn}} & CPVS(1) &       & \cellcolor[rgb]{ .98,  .592,  .6}1,26 & \cellcolor[rgb]{ .98,  .6,  .608}1,24 & \cellcolor[rgb]{ .98,  .671,  .678}1,09 & \cellcolor[rgb]{ .98,  .624,  .631}1,19 &       & \cellcolor[rgb]{ .988,  .859,  .871}0,99 & \cellcolor[rgb]{ .988,  .851,  .863}1,02 & \cellcolor[rgb]{ .984,  .788,  .796}1,24 & \cellcolor[rgb]{ .984,  .761,  .769}1,34 & \cellcolor[rgb]{ .984,  .725,  .737}1,45 \\
          & CPVS(2) &       & \cellcolor[rgb]{ .98,  .651,  .659}1,13 & \cellcolor[rgb]{ .98,  .635,  .643}1,17 & \cellcolor[rgb]{ .98,  .675,  .686}1,08 & \cellcolor[rgb]{ .98,  .616,  .627}1,20 &       & \cellcolor[rgb]{ .988,  .894,  .906}0,87 & \cellcolor[rgb]{ .988,  .859,  .867}1,00 & \cellcolor[rgb]{ .984,  .808,  .82}1,17 & \cellcolor[rgb]{ .984,  .773,  .784}1,28 & \cellcolor[rgb]{ .984,  .769,  .78}1,30 \\
          & CPVS  &       & \cellcolor[rgb]{ .98,  .686,  .698}1,05 & \cellcolor[rgb]{ .98,  .663,  .671}1,11 & \cellcolor[rgb]{ .98,  .694,  .702}1,04 & \cellcolor[rgb]{ .98,  .635,  .643}1,16 &       & \cellcolor[rgb]{ .988,  .894,  .906}0,86 & \cellcolor[rgb]{ .988,  .871,  .882}0,95 & \cellcolor[rgb]{ .984,  .824,  .835}1,11 & \cellcolor[rgb]{ .984,  .78,  .792}1,26 & \cellcolor[rgb]{ .984,  .765,  .773}1,32 \\
          & ANOM(0.1) &       & \cellcolor[rgb]{ .98,  .573,  .58}1,30 & \cellcolor[rgb]{ .973,  .412,  .42}1,63 & \cellcolor[rgb]{ .976,  .533,  .541}1,38 & \cellcolor[rgb]{ .976,  .427,  .435}1,60 &       & \cellcolor[rgb]{ .988,  .957,  .969}0,66 & \cellcolor[rgb]{ .988,  .859,  .871}0,99 & \cellcolor[rgb]{ .98,  .675,  .682}1,63 & \cellcolor[rgb]{ .976,  .537,  .549}2,10 & \cellcolor[rgb]{ .973,  .412,  .42}2,53 \\
          & ANOM(1) &       & \cellcolor[rgb]{ .98,  .686,  .694}1,06 & \cellcolor[rgb]{ .98,  .663,  .675}1,10 & \cellcolor[rgb]{ .988,  .922,  .933}0,56 & \cellcolor[rgb]{ .988,  .878,  .89}0,65 &       & \cellcolor[rgb]{ .988,  .988,  1}0,54 & \cellcolor[rgb]{ .988,  .949,  .961}0,69 & \cellcolor[rgb]{ .984,  .827,  .839}1,10 & \cellcolor[rgb]{ .984,  .733,  .741}1,43 & \cellcolor[rgb]{ .976,  .549,  .557}2,06 \\
          & ANOM(10) &       & \cellcolor[rgb]{ .984,  .749,  .761}0,92 & \cellcolor[rgb]{ .984,  .765,  .773}0,89 & \cellcolor[rgb]{ .988,  .965,  .976}0,46 & \cellcolor[rgb]{ .988,  .988,  1}0,41 &       & \cellcolor[rgb]{ .988,  .988,  1}0,54 & \cellcolor[rgb]{ .988,  .957,  .969}0,65 & \cellcolor[rgb]{ .988,  .886,  .898}0,89 & \cellcolor[rgb]{ .984,  .792,  .8}1,23 & \cellcolor[rgb]{ .984,  .718,  .725}1,48 \\
          &       &       &       &       &       &       &       &       &       &       &       &  \\
    \multirow{6}[0]{*}{\begin{turn}{90}$RMSE(y_t,\hat{y}_t)$\end{turn}} & CPVS(1) &       & \cellcolor[rgb]{ .988,  .902,  .914}3,02 & \cellcolor[rgb]{ .984,  .808,  .82}3,09 & \cellcolor[rgb]{ .984,  .753,  .761}3,13 & \cellcolor[rgb]{ .98,  .663,  .671}3,19 &       & \cellcolor[rgb]{ .988,  .961,  .973}3,00 & \cellcolor[rgb]{ .988,  .957,  .969}3,01 & \cellcolor[rgb]{ .988,  .906,  .918}3,09 & \cellcolor[rgb]{ .988,  .894,  .906}3,11 & \cellcolor[rgb]{ .988,  .863,  .875}3,15 \\
          & CPVS(2) &       & \cellcolor[rgb]{ .988,  .953,  .965}2,99 & \cellcolor[rgb]{ .988,  .863,  .871}3,05 & \cellcolor[rgb]{ .984,  .757,  .769}3,12 & \cellcolor[rgb]{ .98,  .663,  .671}3,19 &       & \cellcolor[rgb]{ .988,  .976,  .988}2,98 & \cellcolor[rgb]{ .988,  .961,  .973}3,01 & \cellcolor[rgb]{ .988,  .929,  .941}3,05 & \cellcolor[rgb]{ .988,  .91,  .922}3,08 & \cellcolor[rgb]{ .988,  .91,  .922}3,08 \\
          & CPVS  &       & \cellcolor[rgb]{ .988,  .988,  1}2,96 & \cellcolor[rgb]{ .988,  .89,  .902}3,03 & \cellcolor[rgb]{ .984,  .776,  .788}3,11 & \cellcolor[rgb]{ .98,  .678,  .686}3,18 &       & \cellcolor[rgb]{ .988,  .976,  .988}2,99 & \cellcolor[rgb]{ .988,  .973,  .984}2,99 & \cellcolor[rgb]{ .988,  .945,  .957}3,03 & \cellcolor[rgb]{ .988,  .914,  .925}3,08 & \cellcolor[rgb]{ .988,  .902,  .914}3,09 \\
          & ANOM(0.1) &       & \cellcolor[rgb]{ .984,  .749,  .761}3,13 & \cellcolor[rgb]{ .976,  .506,  .514}3,30 & \cellcolor[rgb]{ .98,  .592,  .6}3,24 & \cellcolor[rgb]{ .973,  .412,  .42}3,36 &       & \cellcolor[rgb]{ .988,  .976,  .988}2,98 & \cellcolor[rgb]{ .988,  .929,  .941}3,05 & \cellcolor[rgb]{ .984,  .765,  .776}3,30 & \cellcolor[rgb]{ .98,  .612,  .62}3,53 & \cellcolor[rgb]{ .973,  .412,  .42}3,82 \\
          & ANOM(1) &       & \cellcolor[rgb]{ .988,  .882,  .894}3,04 & \cellcolor[rgb]{ .984,  .847,  .855}3,06 & \cellcolor[rgb]{ .988,  .945,  .957}2,99 & \cellcolor[rgb]{ .988,  .89,  .902}3,03 &       & \cellcolor[rgb]{ .988,  .988,  1}2,96 & \cellcolor[rgb]{ .988,  .984,  .996}2,97 & \cellcolor[rgb]{ .988,  .922,  .933}3,06 & \cellcolor[rgb]{ .988,  .847,  .859}3,17 & \cellcolor[rgb]{ .98,  .624,  .635}3,51 \\
          & ANOM(10) &       & \cellcolor[rgb]{ .988,  .953,  .965}2,99 & \cellcolor[rgb]{ .988,  .941,  .953}3,00 & \cellcolor[rgb]{ .988,  .969,  .98}2,98 & \cellcolor[rgb]{ .988,  .945,  .957}2,99 &       & \cellcolor[rgb]{ .988,  .988,  1}2,96 & \cellcolor[rgb]{ .988,  .988,  1}2,96 & \cellcolor[rgb]{ .988,  .969,  .98}3,00 & \cellcolor[rgb]{ .988,  .906,  .918}3,09 & \cellcolor[rgb]{ .984,  .839,  .847}3,19 \\
    \bottomrule
    \end{tabular}%
  }
  \vspace{0.5em}
\caption{{{Scenario 2.a: signal estimation and data fitting}. CPVS is our proposal, ANOM by \cite{hadj2020bayesian}. Two metrics: root mean squared error between true signal and estimated one $RMSE(\mu_t,\hat{\mu}_t)$ (the lower, the better), root mean squared error between observed data and fitted values $RMSE(y_t,\hat{y}_t)$ (the lower, the better). The table shows the results in both varying $T$ and fixed $T=1000$ scenarios. The red scale depicts worse performances.}}\label{tab:resultsMSE1}%
\end{table}%

% Table generated by Excel2LaTeX from sheet 'Foglio1'
\begin{table}[htbp]
  \centering
    \resizebox{1\textwidth}{!}{
    \begin{tabular}{clrrrrrrrrrrr}
    \toprule
          &      &       & \multicolumn{4}{c}{$(T,\,m)$}        &       & \multicolumn{5}{c}{$(T=1000,\,m)$} \\
          &   Method    &       & $(500,2)$ & $(1000,4)$ & $(5000,8)$ & $(10000,12)$ &       & $2$     & $3$     & $4$     & $5$     & $6$ \\
              \cmidrule{2-2}\cmidrule{4-7}\cmidrule{9-13}
    \multirow{6}[0]{*}{\begin{turn}{90}$C(\mathcal{P},\hat{\mathcal{P}})$\end{turn}} & CPVS(1) &       & \cellcolor[rgb]{ .984,  .886,  .898}0,88 & \cellcolor[rgb]{ .98,  .824,  .835}0,82 & \cellcolor[rgb]{ .98,  .831,  .843}0,82 & \cellcolor[rgb]{ .98,  .788,  .8}0,78 &       & \cellcolor[rgb]{ .98,  .835,  .843}0,83 & \cellcolor[rgb]{ .98,  .835,  .847}0,83 & \cellcolor[rgb]{ .98,  .824,  .835}0,82 & \cellcolor[rgb]{ .98,  .765,  .773}0,75 & \cellcolor[rgb]{ .98,  .745,  .753}0,73 \\
          & CPVS(2) &       & \cellcolor[rgb]{ .984,  .886,  .898}0,88 & \cellcolor[rgb]{ .98,  .824,  .831}0,81 & \cellcolor[rgb]{ .98,  .792,  .8}0,78 & \cellcolor[rgb]{ .98,  .761,  .769}0,75 &       & \cellcolor[rgb]{ .984,  .855,  .867}0,85 & \cellcolor[rgb]{ .984,  .843,  .855}0,84 & \cellcolor[rgb]{ .98,  .824,  .831}0,81 & \cellcolor[rgb]{ .98,  .765,  .776}0,76 & \cellcolor[rgb]{ .976,  .686,  .698}0,67 \\
          & CPVS  &       & \cellcolor[rgb]{ .984,  .914,  .925}0,91 & \cellcolor[rgb]{ .984,  .847,  .859}0,84 & \cellcolor[rgb]{ .98,  .839,  .851}0,83 & \cellcolor[rgb]{ .98,  .796,  .808}0,79 &       & \cellcolor[rgb]{ .984,  .878,  .89}0,87 & \cellcolor[rgb]{ .984,  .878,  .886}0,87 & \cellcolor[rgb]{ .984,  .847,  .859}0,84 & \cellcolor[rgb]{ .98,  .788,  .8}0,78 & \cellcolor[rgb]{ .98,  .741,  .753}0,73 \\
          & ANOM(0.1) &       & \cellcolor[rgb]{ .984,  .851,  .863}0,84 & \cellcolor[rgb]{ .98,  .722,  .733}0,71 & \cellcolor[rgb]{ .98,  .733,  .745}0,72 & \cellcolor[rgb]{ .976,  .631,  .643}0,62 &       & \cellcolor[rgb]{ .984,  .969,  .98}0,97 & \cellcolor[rgb]{ .984,  .859,  .867}0,85 & \cellcolor[rgb]{ .98,  .722,  .733}0,71 & \cellcolor[rgb]{ .973,  .537,  .545}0,51 & \cellcolor[rgb]{ .973,  .412,  .42}0,38 \\
          & ANOM(1) &       & \cellcolor[rgb]{ .984,  .91,  .922}0,91 & \cellcolor[rgb]{ .984,  .898,  .91}0,90 & \cellcolor[rgb]{ .984,  .933,  .945}0,93 & \cellcolor[rgb]{ .984,  .922,  .929}0,92 &       & \cellcolor[rgb]{ .984,  .984,  .996}0,98 & \cellcolor[rgb]{ .984,  .957,  .969}0,96 & \cellcolor[rgb]{ .984,  .898,  .91}0,90 & \cellcolor[rgb]{ .98,  .725,  .733}0,71 & \cellcolor[rgb]{ .976,  .592,  .6}0,57 \\
          & ANOM(10) &       & \cellcolor[rgb]{ .984,  .925,  .937}0,92 & \cellcolor[rgb]{ .984,  .914,  .925}0,91 & \cellcolor[rgb]{ .984,  .976,  .988}0,97 & \cellcolor[rgb]{ .984,  .973,  .984}0,97 &       & \cellcolor[rgb]{ .988,  .988,  1}0,99 & \cellcolor[rgb]{ .984,  .976,  .988}0,98 & \cellcolor[rgb]{ .984,  .914,  .925}0,91 & \cellcolor[rgb]{ .98,  .808,  .82}0,80 & \cellcolor[rgb]{ .98,  .733,  .741}0,72 \\
          &       &       &       &       &       &       &       &       &       &       &       &  \\
    \multirow{6}[0]{*}{\begin{turn}{90}$H(\mathcal{P},\hat{\mathcal{P}})$\end{turn}} & CPVS(1) &       & \cellcolor[rgb]{ .988,  .918,  .929}0,07 & \cellcolor[rgb]{ .988,  .898,  .906}0,09 & \cellcolor[rgb]{ .988,  .91,  .922}0,08 & \cellcolor[rgb]{ .988,  .918,  .929}0,07 &       & \cellcolor[rgb]{ .984,  .843,  .855}0,14 & \cellcolor[rgb]{ .988,  .875,  .886}0,11 & \cellcolor[rgb]{ .988,  .898,  .906}0,09 & \cellcolor[rgb]{ .988,  .871,  .882}0,11 & \cellcolor[rgb]{ .988,  .882,  .89}0,11 \\
          & CPVS(2) &       & \cellcolor[rgb]{ .988,  .925,  .937}0,07 & \cellcolor[rgb]{ .988,  .906,  .918}0,08 & \cellcolor[rgb]{ .988,  .902,  .91}0,09 & \cellcolor[rgb]{ .988,  .914,  .925}0,08 &       & \cellcolor[rgb]{ .988,  .867,  .878}0,12 & \cellcolor[rgb]{ .988,  .875,  .882}0,11 & \cellcolor[rgb]{ .988,  .906,  .918}0,08 & \cellcolor[rgb]{ .988,  .886,  .894}0,10 & \cellcolor[rgb]{ .988,  .851,  .863}0,13 \\
          & CPVS  &       & \cellcolor[rgb]{ .988,  .929,  .941}0,06 & \cellcolor[rgb]{ .988,  .906,  .918}0,08 & \cellcolor[rgb]{ .988,  .91,  .922}0,08 & \cellcolor[rgb]{ .988,  .925,  .933}0,07 &       & \cellcolor[rgb]{ .988,  .89,  .898}0,10 & \cellcolor[rgb]{ .988,  .91,  .922}0,08 & \cellcolor[rgb]{ .988,  .906,  .918}0,08 & \cellcolor[rgb]{ .988,  .886,  .898}0,10 & \cellcolor[rgb]{ .988,  .871,  .882}0,11 \\
          & ANOM(0.1) &       & \cellcolor[rgb]{ .984,  .808,  .82}0,17 & \cellcolor[rgb]{ .98,  .671,  .682}0,29 & \cellcolor[rgb]{ .984,  .733,  .745}0,24 & \cellcolor[rgb]{ .984,  .753,  .765}0,22 &       & \cellcolor[rgb]{ .988,  .969,  .98}0,03 & \cellcolor[rgb]{ .984,  .824,  .835}0,15 & \cellcolor[rgb]{ .98,  .671,  .682}0,29 & \cellcolor[rgb]{ .976,  .502,  .51}0,44 & \cellcolor[rgb]{ .973,  .412,  .42}0,52 \\
          & ANOM(1) &       & \cellcolor[rgb]{ .988,  .89,  .898}0,10 & \cellcolor[rgb]{ .988,  .894,  .906}0,09 & \cellcolor[rgb]{ .988,  .929,  .941}0,06 & \cellcolor[rgb]{ .988,  .929,  .941}0,06 &       & \cellcolor[rgb]{ .988,  .988,  1}0,01 & \cellcolor[rgb]{ .988,  .965,  .976}0,03 & \cellcolor[rgb]{ .988,  .894,  .906}0,09 & \cellcolor[rgb]{ .984,  .725,  .733}0,24 & \cellcolor[rgb]{ .98,  .596,  .604}0,36 \\
          & ANOM(10) &       & \cellcolor[rgb]{ .988,  .914,  .925}0,08 & \cellcolor[rgb]{ .988,  .91,  .922}0,08 & \cellcolor[rgb]{ .988,  .976,  .988}0,02 & \cellcolor[rgb]{ .988,  .976,  .988}0,02 &       & \cellcolor[rgb]{ .988,  .988,  1}0,01 & \cellcolor[rgb]{ .988,  .984,  .996}0,01 & \cellcolor[rgb]{ .988,  .91,  .922}0,08 & \cellcolor[rgb]{ .984,  .804,  .812}0,17 & \cellcolor[rgb]{ .984,  .741,  .753}0,23 \\
          &       &       &       &       &       &       &       &       &       &       &       &  \\
    \multirow{6}[0]{*}{\begin{turn}{90}$B(m,\widehat{m})$\end{turn}} & CPVS(1) &       & \cellcolor[rgb]{ .988,  .98,  .992}0,20 & \cellcolor[rgb]{ .988,  .933,  .945}0,70 & \cellcolor[rgb]{ .988,  .871,  .882}1,34 & \cellcolor[rgb]{ .984,  .839,  .851}1,66 &       & \cellcolor[rgb]{ .988,  .918,  .929}0,56 & \cellcolor[rgb]{ .988,  .925,  .937}0,50 & \cellcolor[rgb]{ .988,  .902,  .91}0,70 & \cellcolor[rgb]{ .988,  .875,  .886}0,90 & \cellcolor[rgb]{ .988,  .855,  .867}1,04 \\
          & CPVS(2) &       & \cellcolor[rgb]{ .988,  .984,  .996}0,18 & \cellcolor[rgb]{ .988,  .949,  .961}0,54 & \cellcolor[rgb]{ .988,  .882,  .894}1,20 & \cellcolor[rgb]{ .984,  .804,  .816}2,02 &       & \cellcolor[rgb]{ .988,  .933,  .945}0,46 & \cellcolor[rgb]{ .988,  .933,  .945}0,44 & \cellcolor[rgb]{ .988,  .922,  .933}0,54 & \cellcolor[rgb]{ .988,  .882,  .894}0,84 & \cellcolor[rgb]{ .984,  .82,  .831}1,30 \\
          & CPVS  &       & \cellcolor[rgb]{ .988,  .988,  1}0,12 & \cellcolor[rgb]{ .988,  .949,  .961}0,54 & \cellcolor[rgb]{ .988,  .882,  .894}1,20 & \cellcolor[rgb]{ .984,  .827,  .839}1,76 &       & \cellcolor[rgb]{ .988,  .949,  .961}0,32 & \cellcolor[rgb]{ .988,  .961,  .973}0,24 & \cellcolor[rgb]{ .988,  .922,  .933}0,54 & \cellcolor[rgb]{ .988,  .882,  .894}0,84 & \cellcolor[rgb]{ .984,  .835,  .847}1,18 \\
          & ANOM(0.1) &       & \cellcolor[rgb]{ .988,  .957,  .969}0,46 & \cellcolor[rgb]{ .984,  .843,  .855}1,62 & \cellcolor[rgb]{ .98,  .694,  .702}3,14 & \cellcolor[rgb]{ .973,  .412,  .42}6,00 &       & \cellcolor[rgb]{ .988,  .98,  .992}0,10 & \cellcolor[rgb]{ .988,  .906,  .918}0,66 & \cellcolor[rgb]{ .984,  .78,  .788}1,62 & \cellcolor[rgb]{ .98,  .584,  .592}3,10 & \cellcolor[rgb]{ .973,  .412,  .42}4,38 \\
          & ANOM(1) &       & \cellcolor[rgb]{ .988,  .976,  .988}0,26 & \cellcolor[rgb]{ .988,  .957,  .969}0,46 & \cellcolor[rgb]{ .988,  .933,  .945}0,68 & \cellcolor[rgb]{ .988,  .867,  .878}1,38 &       & \cellcolor[rgb]{ .988,  .988,  1}0,04 & \cellcolor[rgb]{ .988,  .976,  .988}0,12 & \cellcolor[rgb]{ .988,  .933,  .945}0,46 & \cellcolor[rgb]{ .984,  .757,  .769}1,78 & \cellcolor[rgb]{ .98,  .592,  .604}3,02 \\
          & ANOM(10) &       & \cellcolor[rgb]{ .988,  .984,  .996}0,18 & \cellcolor[rgb]{ .988,  .961,  .973}0,40 & \cellcolor[rgb]{ .988,  .984,  .996}0,16 & \cellcolor[rgb]{ .988,  .969,  .976}0,36 &       & \cellcolor[rgb]{ .988,  .988,  1}0,02 & \cellcolor[rgb]{ .988,  .984,  .996}0,06 & \cellcolor[rgb]{ .988,  .941,  .953}0,40 & \cellcolor[rgb]{ .984,  .835,  .843}1,20 & \cellcolor[rgb]{ .984,  .745,  .753}1,88 \\
    \bottomrule
    \end{tabular}%
  }
  \vspace{0.5em}
\caption{{{Scenario 2.b: segmentation performance}. CPVS is our proposal, ANOM by \cite{hadj2020bayesian}. Three metrics: segmentation coverage $C(\mathcal{P},\hat{\mathcal{P}})\in (0,1)$ (the higher, the better), Hausdorff distance $H(\mathcal{P},\hat{\mathcal{P}})\in (0,1)$ (the lower, the better), and bias $B(m,\hat{m})\in [0,m]$ (the lower, the better). The table shows the results in both varying $T$ and fixed $T=1000$ scenarios where the true signal is forced to be continuous at the change points. The red scale depicts worse performances.}}\label{tab:results1_cont}%
\end{table}%

% Table generated by Excel2LaTeX from sheet 'Foglio1'
\begin{table}[htbp]
  \centering
  \resizebox{0.95\textwidth}{!}{
    \begin{tabular}{clrrrrrrrrrrr}
    \toprule
          &      &       & \multicolumn{4}{c}{$(T,\,m)$}        &       & \multicolumn{5}{c}{$(T=1000,\,m)$} \\
          &   Method    &       & $(500,2)$ & $(1000,4)$ & $(5000,8)$ & $(10000,12)$ &       & $2$     & $3$     & $4$     & $5$     & $6$ \\
              \cmidrule{2-2}\cmidrule{4-7}\cmidrule{9-13}
    \multirow{6}[0]{*}{\begin{turn}{90}$RMSE(\mu_t,\hat{\mu}_t)$\end{turn}} & CPVS(1) &       & \cellcolor[rgb]{ .98,  .686,  .698}1,29 & \cellcolor[rgb]{ .98,  .573,  .58}1,61 & \cellcolor[rgb]{ .976,  .498,  .506}1,82 & \cellcolor[rgb]{ .976,  .431,  .439}2,01 &       & \cellcolor[rgb]{ .984,  .812,  .824}1,15 & \cellcolor[rgb]{ .984,  .722,  .733}1,46 & \cellcolor[rgb]{ .98,  .682,  .69}1,61 & \cellcolor[rgb]{ .98,  .627,  .635}1,80 & \cellcolor[rgb]{ .98,  .592,  .6}1,93 \\
          & CPVS(2) &       & \cellcolor[rgb]{ .984,  .702,  .714}1,24 & \cellcolor[rgb]{ .98,  .596,  .604}1,55 & \cellcolor[rgb]{ .976,  .49,  .498}1,84 & \cellcolor[rgb]{ .973,  .412,  .42}2,06 &       & \cellcolor[rgb]{ .984,  .827,  .835}1,10 & \cellcolor[rgb]{ .984,  .749,  .757}1,37 & \cellcolor[rgb]{ .98,  .698,  .71}1,55 & \cellcolor[rgb]{ .98,  .639,  .651}1,75 & \cellcolor[rgb]{ .98,  .576,  .584}1,98 \\
          & CPVS  &       & \cellcolor[rgb]{ .984,  .725,  .737}1,18 & \cellcolor[rgb]{ .98,  .616,  .627}1,49 & \cellcolor[rgb]{ .976,  .518,  .529}1,76 & \cellcolor[rgb]{ .976,  .439,  .447}1,98 &       & \cellcolor[rgb]{ .984,  .831,  .843}1,08 & \cellcolor[rgb]{ .984,  .776,  .788}1,27 & \cellcolor[rgb]{ .984,  .714,  .725}1,49 & \cellcolor[rgb]{ .98,  .647,  .659}1,73 & \cellcolor[rgb]{ .98,  .608,  .616}1,87 \\
          & ANOM(0.1) &       & \cellcolor[rgb]{ .98,  .675,  .686}1,32 & \cellcolor[rgb]{ .98,  .573,  .58}1,61 & \cellcolor[rgb]{ .98,  .647,  .659}1,40 & \cellcolor[rgb]{ .98,  .588,  .6}1,56 &       & \cellcolor[rgb]{ .988,  .949,  .961}0,67 & \cellcolor[rgb]{ .984,  .816,  .824}1,14 & \cellcolor[rgb]{ .98,  .678,  .69}1,61 & \cellcolor[rgb]{ .976,  .522,  .529}2,17 & \cellcolor[rgb]{ .973,  .412,  .42}2,55 \\
          & ANOM(1) &       & \cellcolor[rgb]{ .984,  .8,  .812}0,97 & \cellcolor[rgb]{ .984,  .784,  .796}1,01 & \cellcolor[rgb]{ .988,  .91,  .922}0,66 & \cellcolor[rgb]{ .988,  .91,  .922}0,66 &       & \cellcolor[rgb]{ .988,  .984,  .996}0,54 & \cellcolor[rgb]{ .988,  .945,  .953}0,68 & \cellcolor[rgb]{ .988,  .851,  .859}1,01 & \cellcolor[rgb]{ .98,  .686,  .694}1,59 & \cellcolor[rgb]{ .98,  .565,  .573}2,02 \\
          & ANOM(10) &       & \cellcolor[rgb]{ .984,  .8,  .812}0,97 & \cellcolor[rgb]{ .984,  .808,  .82}0,94 & \cellcolor[rgb]{ .988,  .984,  .996}0,46 & \cellcolor[rgb]{ .988,  .988,  1}0,44 &       & \cellcolor[rgb]{ .988,  .988,  1}0,52 & \cellcolor[rgb]{ .988,  .957,  .969}0,64 & \cellcolor[rgb]{ .988,  .871,  .882}0,94 & \cellcolor[rgb]{ .984,  .753,  .761}1,36 & \cellcolor[rgb]{ .98,  .671,  .678}1,65 \\
          &       &       &       &       &       &       &       &       &       &       &       &  \\
    \multirow{6}[0]{*}{\begin{turn}{90}$RMSE(y_t,\hat{y}_t)$\end{turn}} & CPVS(1) &       & \cellcolor[rgb]{ .988,  .89,  .902}3,09 & \cellcolor[rgb]{ .98,  .686,  .694}3,31 & \cellcolor[rgb]{ .976,  .537,  .545}3,47 & \cellcolor[rgb]{ .976,  .435,  .443}3,58 &       & \cellcolor[rgb]{ .988,  .882,  .894}3,12 & \cellcolor[rgb]{ .984,  .8,  .808}3,25 & \cellcolor[rgb]{ .984,  .757,  .765}3,31 & \cellcolor[rgb]{ .984,  .714,  .725}3,37 & \cellcolor[rgb]{ .98,  .675,  .682}3,43 \\
          & CPVS(2) &       & \cellcolor[rgb]{ .988,  .906,  .918}3,07 & \cellcolor[rgb]{ .984,  .71,  .718}3,29 & \cellcolor[rgb]{ .976,  .529,  .541}3,48 & \cellcolor[rgb]{ .973,  .412,  .42}3,61 &       & \cellcolor[rgb]{ .988,  .894,  .906}3,10 & \cellcolor[rgb]{ .984,  .824,  .831}3,21 & \cellcolor[rgb]{ .984,  .773,  .78}3,29 & \cellcolor[rgb]{ .984,  .733,  .741}3,34 & \cellcolor[rgb]{ .98,  .667,  .675}3,44 \\
          & CPVS  &       & \cellcolor[rgb]{ .988,  .933,  .945}3,04 & \cellcolor[rgb]{ .984,  .741,  .749}3,25 & \cellcolor[rgb]{ .98,  .569,  .576}3,44 & \cellcolor[rgb]{ .976,  .451,  .459}3,57 &       & \cellcolor[rgb]{ .988,  .894,  .906}3,10 & \cellcolor[rgb]{ .988,  .851,  .863}3,17 & \cellcolor[rgb]{ .984,  .796,  .804}3,25 & \cellcolor[rgb]{ .984,  .741,  .749}3,33 & \cellcolor[rgb]{ .984,  .702,  .714}3,39 \\
          & ANOM(0.1) &       & \cellcolor[rgb]{ .988,  .867,  .878}3,11 & \cellcolor[rgb]{ .98,  .69,  .702}3,30 & \cellcolor[rgb]{ .984,  .733,  .745}3,26 & \cellcolor[rgb]{ .98,  .663,  .671}3,34 &       & \cellcolor[rgb]{ .988,  .973,  .98}2,99 & \cellcolor[rgb]{ .988,  .89,  .902}3,11 & \cellcolor[rgb]{ .984,  .761,  .769}3,30 & \cellcolor[rgb]{ .98,  .569,  .576}3,59 & \cellcolor[rgb]{ .973,  .412,  .42}3,82 \\
          & ANOM(1) &       & \cellcolor[rgb]{ .988,  .988,  1}2,98 & \cellcolor[rgb]{ .988,  .929,  .941}3,04 & \cellcolor[rgb]{ .988,  .945,  .957}3,02 & \cellcolor[rgb]{ .988,  .937,  .949}3,03 &       & \cellcolor[rgb]{ .988,  .988,  1}2,96 & \cellcolor[rgb]{ .988,  .98,  .992}2,97 & \cellcolor[rgb]{ .988,  .937,  .949}3,04 & \cellcolor[rgb]{ .984,  .792,  .804}3,25 & \cellcolor[rgb]{ .98,  .651,  .663}3,46 \\
          & ANOM(10) &       & \cellcolor[rgb]{ .988,  .988,  1}2,98 & \cellcolor[rgb]{ .988,  .957,  .969}3,01 & \cellcolor[rgb]{ .988,  .98,  .992}2,99 & \cellcolor[rgb]{ .988,  .973,  .984}3,00 &       & \cellcolor[rgb]{ .988,  .988,  1}2,96 & \cellcolor[rgb]{ .988,  .988,  1}2,96 & \cellcolor[rgb]{ .988,  .953,  .965}3,01 & \cellcolor[rgb]{ .988,  .867,  .878}3,14 & \cellcolor[rgb]{ .984,  .792,  .804}3,25 \\
    \bottomrule
    \end{tabular}%
  }
  \vspace{0.5em}
\caption{{{Scenario 2.b: signal estimation and data fitting}. CPVS is our proposal, ANOM by \cite{hadj2020bayesian}. Two metrics: root mean squared error between true signal and estimated one $RMSE(\mu_t,\hat{\mu}_t)$ (the lower, the better), root mean squared error between observed data and fitted values $RMSE(y_t,\hat{y}_t)$ (the lower, the better). The table shows the results in both varying $T$ and fixed $T=1000$ scenarios where the true signal is forced to be continuous at the change points. The red scale depicts worse performances.}}\label{tab:resultsMSE1_cont}%
\end{table}%

\textbf{Scenario 3 (multivariate, dense regime)} We compare the performance of the joint change point detection with the one achieved with a sequence of $d$ univariate estimations. In the latter case, the joint metrics are the average over the $d$ individual ones. An important comment is that the estimation of the model for a given change point candidate can be done in parallel, therefore not particularly affecting the computational efficiency. {We also investigate the performance of \algonamesp in presence of an increasing number of noisy series in the panel.}\\

% Table generated by Excel2LaTeX from sheet 'Foglio1'
\begin{table}[htbp]
  \centering
  \resizebox{1\textwidth}{!}{
    \begin{tabular}{clrrrrrrrrrrrrrr}
    \toprule
          &       &       & \multicolumn{6}{c}{$(T=1000,\,m=4,\,d)$}                       &       & \multicolumn{6}{c}{$(T=1000,\,m=6,\,d)$} \\
          &       &       & 2     & 3     & 4     & 5     & 10    & 20    &       & 2     & 3     & 4     & 5     & 10    & 20 \\
              \cmidrule{2-2}\cmidrule{4-9}\cmidrule{11-16}
    \multirow{8}[0]{*}{\begin{turn}{90}$C(\mathcal{P},\hat{\mathcal{P}})$\end{turn}} & CPVS(1) &       & \cellcolor[rgb]{ .988,  .988,  1}0,83 & \cellcolor[rgb]{ .98,  .988,  .992}0,83 & \cellcolor[rgb]{ .961,  .98,  .976}0,84 & \cellcolor[rgb]{ .961,  .98,  .976}0,84 & \cellcolor[rgb]{ .957,  .976,  .973}0,84 & \cellcolor[rgb]{ .957,  .976,  .973}0,84 &       & \cellcolor[rgb]{ .918,  .961,  .937}0,78 & \cellcolor[rgb]{ .988,  .988,  1}0,75 & \cellcolor[rgb]{ .929,  .965,  .949}0,78 & \cellcolor[rgb]{ .969,  .98,  .98}0,76 & \cellcolor[rgb]{ .965,  .98,  .976}0,76 & \cellcolor[rgb]{ .973,  .984,  .988}0,76 \\
          & CPVS(2) &       & \cellcolor[rgb]{ .784,  .906,  .824}0,88 & \cellcolor[rgb]{ .859,  .937,  .89}0,87 & \cellcolor[rgb]{ .82,  .922,  .855}0,88 & \cellcolor[rgb]{ .843,  .929,  .875}0,87 & \cellcolor[rgb]{ .82,  .922,  .855}0,88 & \cellcolor[rgb]{ .835,  .929,  .871}0,87 &       & \cellcolor[rgb]{ .824,  .922,  .859}0,81 & \cellcolor[rgb]{ .847,  .933,  .878}0,81 & \cellcolor[rgb]{ .824,  .922,  .855}0,81 & \cellcolor[rgb]{ .875,  .945,  .902}0,79 & \cellcolor[rgb]{ .847,  .933,  .878}0,80 & \cellcolor[rgb]{ .855,  .937,  .886}0,80 \\
          & CPVS  &       & \cellcolor[rgb]{ .824,  .922,  .859}0,87 & \cellcolor[rgb]{ .882,  .945,  .91}0,86 & \cellcolor[rgb]{ .831,  .925,  .867}0,87 & \cellcolor[rgb]{ .839,  .929,  .875}0,87 & \cellcolor[rgb]{ .808,  .918,  .843}0,88 & \cellcolor[rgb]{ .839,  .929,  .871}0,87 &       & \cellcolor[rgb]{ .871,  .941,  .898}0,80 & \cellcolor[rgb]{ .898,  .953,  .922}0,79 & \cellcolor[rgb]{ .863,  .937,  .89}0,80 & \cellcolor[rgb]{ .91,  .957,  .933}0,78 & \cellcolor[rgb]{ .882,  .945,  .91}0,79 & \cellcolor[rgb]{ .886,  .949,  .914}0,79 \\
          & MCPVS(1) &       & \cellcolor[rgb]{ .706,  .875,  .757}0,91 & \cellcolor[rgb]{ .549,  .812,  .62}0,95 & \cellcolor[rgb]{ .478,  .784,  .561}0,96 & \cellcolor[rgb]{ .475,  .78,  .557}0,97 & \cellcolor[rgb]{ .431,  .765,  .518}0,98 & \cellcolor[rgb]{ .408,  .753,  .498}0,98 &       & \cellcolor[rgb]{ .651,  .851,  .71}0,88 & \cellcolor[rgb]{ .588,  .827,  .655}0,90 & \cellcolor[rgb]{ .522,  .8,  .6}0,92 & \cellcolor[rgb]{ .529,  .804,  .608}0,92 & \cellcolor[rgb]{ .447,  .769,  .533}0,95 & \cellcolor[rgb]{ .412,  .757,  .502}0,96 \\
          & MCPVS(2) &       & \cellcolor[rgb]{ .643,  .851,  .702}0,92 & \cellcolor[rgb]{ .576,  .824,  .643}0,94 & \cellcolor[rgb]{ .525,  .804,  .604}0,95 & \cellcolor[rgb]{ .494,  .788,  .573}0,96 & \cellcolor[rgb]{ .475,  .78,  .557}0,97 & \cellcolor[rgb]{ .427,  .761,  .514}0,98 &       & \cellcolor[rgb]{ .655,  .855,  .714}0,87 & \cellcolor[rgb]{ .565,  .82,  .635}0,91 & \cellcolor[rgb]{ .522,  .8,  .6}0,92 & \cellcolor[rgb]{ .506,  .792,  .584}0,93 & \cellcolor[rgb]{ .435,  .765,  .525}0,95 & \cellcolor[rgb]{ .424,  .761,  .514}0,96 \\
          & MCPVS &       & \cellcolor[rgb]{ .608,  .835,  .675}0,93 & \cellcolor[rgb]{ .514,  .796,  .592}0,96 & \cellcolor[rgb]{ .447,  .769,  .533}0,97 & \cellcolor[rgb]{ .431,  .765,  .522}0,98 & \cellcolor[rgb]{ .42,  .761,  .51}0,98 & \cellcolor[rgb]{ .388,  .745,  .482}0,99 &       & \cellcolor[rgb]{ .631,  .843,  .694}0,88 & \cellcolor[rgb]{ .569,  .82,  .639}0,90 & \cellcolor[rgb]{ .494,  .788,  .576}0,93 & \cellcolor[rgb]{ .506,  .796,  .584}0,93 & \cellcolor[rgb]{ .412,  .757,  .502}0,96 & \cellcolor[rgb]{ .388,  .745,  .482}0,97 \\
          & ADA-X(7) &       & 0,30  & 0,27  & 0,30  & 0,30  & 0,31  & 0,40  &       & 0,17  & 0,17  & 0,18  & 0,18  & 0,20  & 0,27 \\
          & ADA-X(15) &       & 0,35  & 0,31  & 0,37  & 0,33  & 0,36  & 0,44  &       & 0,18  & 0,18  & 0,19  & 0,19  & 0,22  & 0,25 \\
          &       &       &       &       &       &       &       &       &       &       &       &       &       &       &  \\
    \multirow{8}[0]{*}{\begin{turn}{90}$H(\mathcal{P},\hat{\mathcal{P}})$\end{turn}} & CPVS(1) &       & \cellcolor[rgb]{ .929,  .961,  .949}0,10 & \cellcolor[rgb]{ .976,  .98,  .988}0,10 & \cellcolor[rgb]{ .984,  .984,  .996}0,11 & \cellcolor[rgb]{ .937,  .965,  .957}0,10 & \cellcolor[rgb]{ .969,  .98,  .984}0,10 & \cellcolor[rgb]{ .988,  .988,  1}0,11 &       & \cellcolor[rgb]{ .851,  .933,  .882}0,11 & \cellcolor[rgb]{ .988,  .988,  1}0,14 & \cellcolor[rgb]{ .878,  .941,  .906}0,12 & \cellcolor[rgb]{ .906,  .953,  .929}0,12 & \cellcolor[rgb]{ .875,  .941,  .902}0,12 & \cellcolor[rgb]{ .902,  .953,  .925}0,12 \\
          & CPVS(2) &       & \cellcolor[rgb]{ .678,  .863,  .733}0,05 & \cellcolor[rgb]{ .78,  .902,  .82}0,07 & \cellcolor[rgb]{ .733,  .882,  .78}0,06 & \cellcolor[rgb]{ .725,  .882,  .776}0,06 & \cellcolor[rgb]{ .71,  .875,  .761}0,06 & \cellcolor[rgb]{ .741,  .886,  .788}0,06 &       & \cellcolor[rgb]{ .702,  .871,  .753}0,08 & \cellcolor[rgb]{ .757,  .894,  .8}0,09 & \cellcolor[rgb]{ .69,  .867,  .741}0,08 & \cellcolor[rgb]{ .737,  .886,  .784}0,09 & \cellcolor[rgb]{ .698,  .871,  .749}0,08 & \cellcolor[rgb]{ .706,  .871,  .757}0,08 \\
          & CPVS  &       & \cellcolor[rgb]{ .757,  .894,  .8}0,07 & \cellcolor[rgb]{ .863,  .937,  .89}0,09 & \cellcolor[rgb]{ .839,  .925,  .871}0,08 & \cellcolor[rgb]{ .8,  .91,  .835}0,07 & \cellcolor[rgb]{ .773,  .902,  .816}0,07 & \cellcolor[rgb]{ .839,  .925,  .871}0,08 &       & \cellcolor[rgb]{ .82,  .918,  .855}0,11 & \cellcolor[rgb]{ .863,  .937,  .89}0,11 & \cellcolor[rgb]{ .804,  .914,  .843}0,10 & \cellcolor[rgb]{ .839,  .925,  .871}0,11 & \cellcolor[rgb]{ .8,  .91,  .835}0,10 & \cellcolor[rgb]{ .808,  .914,  .843}0,10 \\
          & MCPVS(1) &       & \cellcolor[rgb]{ .608,  .831,  .671}0,04 & \cellcolor[rgb]{ .518,  .796,  .592}0,03 & \cellcolor[rgb]{ .455,  .773,  .537}0,02 & \cellcolor[rgb]{ .451,  .769,  .537}0,01 & \cellcolor[rgb]{ .412,  .753,  .502}0,01 & \cellcolor[rgb]{ .4,  .749,  .49}0,01 &       & \cellcolor[rgb]{ .592,  .827,  .659}0,05 & \cellcolor[rgb]{ .533,  .804,  .604}0,04 & \cellcolor[rgb]{ .494,  .788,  .573}0,03 & \cellcolor[rgb]{ .494,  .788,  .573}0,03 & \cellcolor[rgb]{ .412,  .753,  .502}0,01 & \cellcolor[rgb]{ .4,  .749,  .494}0,01 \\
          & MCPVS(2) &       & \cellcolor[rgb]{ .541,  .808,  .616}0,03 & \cellcolor[rgb]{ .51,  .792,  .584}0,02 & \cellcolor[rgb]{ .471,  .776,  .553}0,02 & \cellcolor[rgb]{ .439,  .765,  .525}0,01 & \cellcolor[rgb]{ .439,  .765,  .525}0,01 & \cellcolor[rgb]{ .416,  .757,  .506}0,01 &       & \cellcolor[rgb]{ .569,  .816,  .639}0,05 & \cellcolor[rgb]{ .49,  .784,  .573}0,03 & \cellcolor[rgb]{ .463,  .773,  .545}0,03 & \cellcolor[rgb]{ .459,  .773,  .541}0,03 & \cellcolor[rgb]{ .408,  .753,  .498}0,01 & \cellcolor[rgb]{ .404,  .749,  .498}0,01 \\
          & MCPVS &       & \cellcolor[rgb]{ .537,  .804,  .612}0,03 & \cellcolor[rgb]{ .486,  .784,  .569}0,02 & \cellcolor[rgb]{ .427,  .761,  .518}0,01 & \cellcolor[rgb]{ .408,  .753,  .498}0,01 & \cellcolor[rgb]{ .408,  .753,  .498}0,01 & \cellcolor[rgb]{ .388,  .745,  .482}0,00 &       & \cellcolor[rgb]{ .604,  .831,  .671}0,06 & \cellcolor[rgb]{ .506,  .792,  .584}0,04 & \cellcolor[rgb]{ .467,  .776,  .553}0,03 & \cellcolor[rgb]{ .478,  .78,  .561}0,03 & \cellcolor[rgb]{ .396,  .749,  .49}0,01 & \cellcolor[rgb]{ .388,  .745,  .482}0,01 \\
          & ADA-X(7) &       & 0,86  & 0,91  & 0,86  & 0,86  & 0,79  & 0,63  &       & 0,96  & 0,96  & 0,91  & 0,92  & 0,87  & 0,69 \\
          & ADA-X(15) &       & 0,78  & 0,83  & 0,74  & 0,80  & 0,70  & 0,57  &       & 0,94  & 0,93  & 0,91  & 0,88  & 0,83  & 0,75 \\
          &       &       &       &       &       &       &       &       &       &       &       &       &       &       &  \\
    \multirow{8}[0]{*}{\begin{turn}{90}$B(m,\widehat{m})$\end{turn}} & CPVS(1) &       & \cellcolor[rgb]{ .988,  .988,  1}0,53 & \cellcolor[rgb]{ .945,  .969,  .961}0,49 & \cellcolor[rgb]{ .965,  .976,  .98}0,51 & \cellcolor[rgb]{ .839,  .925,  .871}0,40 & \cellcolor[rgb]{ .875,  .941,  .902}0,43 & \cellcolor[rgb]{ .894,  .949,  .918}0,45 &       & \cellcolor[rgb]{ .898,  .949,  .922}0,94 & \cellcolor[rgb]{ .988,  .988,  1}1,10 & \cellcolor[rgb]{ .914,  .957,  .933}0,97 & \cellcolor[rgb]{ .957,  .976,  .973}1,05 & \cellcolor[rgb]{ .902,  .953,  .925}0,94 & \cellcolor[rgb]{ .937,  .965,  .957}1,01 \\
          & CPVS(2) &       & \cellcolor[rgb]{ .702,  .871,  .753}0,28 & \cellcolor[rgb]{ .8,  .91,  .839}0,37 & \cellcolor[rgb]{ .663,  .855,  .722}0,25 & \cellcolor[rgb]{ .69,  .867,  .741}0,27 & \cellcolor[rgb]{ .6,  .831,  .667}0,19 & \cellcolor[rgb]{ .573,  .82,  .639}0,16 &       & \cellcolor[rgb]{ .686,  .867,  .741}0,55 & \cellcolor[rgb]{ .737,  .886,  .784}0,65 & \cellcolor[rgb]{ .627,  .839,  .686}0,44 & \cellcolor[rgb]{ .753,  .89,  .796}0,67 & \cellcolor[rgb]{ .651,  .851,  .71}0,49 & \cellcolor[rgb]{ .667,  .855,  .722}0,51 \\
          & CPVS  &       & \cellcolor[rgb]{ .827,  .922,  .863}0,39 & \cellcolor[rgb]{ .929,  .965,  .949}0,48 & \cellcolor[rgb]{ .843,  .929,  .875}0,41 & \cellcolor[rgb]{ .788,  .906,  .827}0,36 & \cellcolor[rgb]{ .741,  .886,  .784}0,31 & \cellcolor[rgb]{ .788,  .906,  .827}0,36 &       & \cellcolor[rgb]{ .91,  .957,  .933}0,96 & \cellcolor[rgb]{ .929,  .961,  .949}0,99 & \cellcolor[rgb]{ .867,  .937,  .894}0,88 & \cellcolor[rgb]{ .937,  .965,  .953}1,01 & \cellcolor[rgb]{ .875,  .941,  .902}0,89 & \cellcolor[rgb]{ .882,  .945,  .91}0,91 \\
          & MCPVS(1) &       & \cellcolor[rgb]{ .635,  .843,  .694}0,22 & \cellcolor[rgb]{ .498,  .788,  .576}0,10 & \cellcolor[rgb]{ .455,  .773,  .537}0,06 & \cellcolor[rgb]{ .431,  .761,  .518}0,04 & \cellcolor[rgb]{ .388,  .745,  .482}0,00 & \cellcolor[rgb]{ .388,  .745,  .482}0,00 &       & \cellcolor[rgb]{ .604,  .831,  .671}0,40 & \cellcolor[rgb]{ .537,  .804,  .612}0,28 & \cellcolor[rgb]{ .486,  .784,  .565}0,18 & \cellcolor[rgb]{ .529,  .8,  .604}0,26 & \cellcolor[rgb]{ .408,  .753,  .498}0,04 & \cellcolor[rgb]{ .396,  .749,  .49}0,02 \\
          & MCPVS(2) &       & \cellcolor[rgb]{ .545,  .808,  .616}0,14 & \cellcolor[rgb]{ .569,  .816,  .635}0,16 & \cellcolor[rgb]{ .455,  .773,  .537}0,06 & \cellcolor[rgb]{ .388,  .745,  .482}0,00 & \cellcolor[rgb]{ .431,  .761,  .518}0,04 & \cellcolor[rgb]{ .431,  .761,  .518}0,04 &       & \cellcolor[rgb]{ .561,  .816,  .631}0,32 & \cellcolor[rgb]{ .506,  .792,  .584}0,22 & \cellcolor[rgb]{ .451,  .769,  .537}0,12 & \cellcolor[rgb]{ .475,  .78,  .557}0,16 & \cellcolor[rgb]{ .396,  .749,  .49}0,02 & \cellcolor[rgb]{ .388,  .745,  .482}0,00 \\
          & MCPVS &       & \cellcolor[rgb]{ .498,  .788,  .576}0,10 & \cellcolor[rgb]{ .478,  .78,  .557}0,08 & \cellcolor[rgb]{ .408,  .753,  .498}0,02 & \cellcolor[rgb]{ .388,  .745,  .482}0,00 & \cellcolor[rgb]{ .388,  .745,  .482}0,00 & \cellcolor[rgb]{ .388,  .745,  .482}0,00 &       & \cellcolor[rgb]{ .616,  .835,  .678}0,42 & \cellcolor[rgb]{ .529,  .8,  .604}0,26 & \cellcolor[rgb]{ .463,  .773,  .545}0,14 & \cellcolor[rgb]{ .494,  .788,  .576}0,20 & \cellcolor[rgb]{ .396,  .749,  .49}0,02 & \cellcolor[rgb]{ .396,  .749,  .49}0,02 \\
          & ADA-X(7) &       & 3,68  & 3,78  & 3,70  & 3,66  & 3,58  & 3,04  &       & 5,88  & 5,88  & 5,76  & 5,82  & 5,62  & 5,02 \\
          & ADA-X(15) &       & 3,44  & 3,62  & 3,44  & 3,52  & 3,34  & 2,96  &       & 5,84  & 5,78  & 5,70  & 5,70  & 5,52  & 5,26 \\
          \bottomrule
    \end{tabular}%
  }
    \vspace{0.5em}
  \caption{{Scenario 3: segmentation performance}. MCPVS is our multivariate proposal, ADA-X by \cite{bertolacci2022adaptspec}. Three metrics: segmentation coverage $C(\mathcal{P},\hat{\mathcal{P}})\in (0,1)$ (the higher, the better), Hausdorff distance $H(\mathcal{P},\hat{\mathcal{P}})\in (0,1)$ (the lower, the better), and bias $B(m,\hat{m})\in [0,m]$ (the lower, the better). The table shows the results when increasing the number of series $d$ modeled jointly. The green scale identifies better performances.}  \label{tab:results2}%
\end{table}%

\begin{table}[htbp]
   \centering
    \resizebox{0.9\textwidth}{!}{
    \begin{tabular}{lrrrrrrrrrrr}
    \toprule
          & \multicolumn{3}{c}{$C(\mathcal{P},\hat{\mathcal{P}})$} &       & \multicolumn{3}{c}{$H(\mathcal{P},\hat{\mathcal{P}})$} &       & \multicolumn{3}{c}{$B(m,\widehat{m})$} \\
          & \multicolumn{1}{l}{$d_1=9$} & \multicolumn{1}{l}{$d_1=7$} & \multicolumn{1}{l}{$d_1=4$} &       & \multicolumn{1}{l}{$d_1=9$} & \multicolumn{1}{l}{$d_1=7$} & \multicolumn{1}{l}{$d_1=4$} &       & \multicolumn{1}{l}{$d_1=9$} & \multicolumn{1}{l}{$d_1=7$} & \multicolumn{1}{l}{$d_1=4$} \\
    \cmidrule{2-4}\cmidrule{6-8}\cmidrule{10-12}
    MCPVS(1) & \cellcolor[rgb]{ .388,  .745,  .482}0,93 & \cellcolor[rgb]{ .404,  .753,  .498}0,92 & \cellcolor[rgb]{ .478,  .78,  .561}0,83 &       & \cellcolor[rgb]{ .388,  .745,  .482}0,02 & \cellcolor[rgb]{ .392,  .745,  .486}0,03 & \cellcolor[rgb]{ .443,  .765,  .529}0,11 &       & \cellcolor[rgb]{ .388,  .745,  .482}0,10 & \cellcolor[rgb]{ .4,  .749,  .49}0,18 & \cellcolor[rgb]{ .498,  .788,  .576}0,82 \\
    MCPVS(2) & \cellcolor[rgb]{ .443,  .769,  .529}0,87 & \cellcolor[rgb]{ .451,  .773,  .537}0,86 & \cellcolor[rgb]{ .49,  .788,  .573}0,82 &       & \cellcolor[rgb]{ .416,  .757,  .506}0,07 & \cellcolor[rgb]{ .416,  .757,  .506}0,07 & \cellcolor[rgb]{ .443,  .765,  .529}0,11 &       & \cellcolor[rgb]{ .443,  .765,  .529}0,46 & \cellcolor[rgb]{ .447,  .769,  .533}0,50 & \cellcolor[rgb]{ .498,  .788,  .576}0,82 \\
    MCPVS & \cellcolor[rgb]{ .4,  .753,  .494}0,92 & \cellcolor[rgb]{ .424,  .761,  .514}0,89 & \cellcolor[rgb]{ .482,  .784,  .565}0,83 &       & \cellcolor[rgb]{ .396,  .745,  .49}0,04 & \cellcolor[rgb]{ .404,  .749,  .498}0,05 & \cellcolor[rgb]{ .451,  .769,  .537}0,13 &       & \cellcolor[rgb]{ .4,  .749,  .49}0,18 & \cellcolor[rgb]{ .424,  .757,  .514}0,34 & \cellcolor[rgb]{ .518,  .796,  .596}0,96 \\
    ADA-X(7) & \cellcolor[rgb]{ .933,  .969,  .953}0,30 & \cellcolor[rgb]{ .984,  .988,  .996}0,24 & \cellcolor[rgb]{ .988,  .988,  1}0,24 &       & \cellcolor[rgb]{ .882,  .945,  .91}0,82 & \cellcolor[rgb]{ .969,  .98,  .98}0,96 & \cellcolor[rgb]{ .988,  .988,  1}0,99 &       & \cellcolor[rgb]{ .922,  .961,  .941}3,56 & \cellcolor[rgb]{ .973,  .98,  .988}3,90 & \cellcolor[rgb]{ .988,  .988,  1}3,98 \\
    ADA-X(15) & \cellcolor[rgb]{ .863,  .937,  .89}0,39 & \cellcolor[rgb]{ .949,  .973,  .969}0,28 & \cellcolor[rgb]{ .988,  .988,  1}0,24 &       & \cellcolor[rgb]{ .792,  .91,  .831}0,68 & \cellcolor[rgb]{ .918,  .957,  .937}0,88 & \cellcolor[rgb]{ .984,  .984,  .996}0,99 &       & \cellcolor[rgb]{ .851,  .929,  .882}3,10 & \cellcolor[rgb]{ .941,  .969,  .961}3,70 & \cellcolor[rgb]{ .988,  .988,  1}3,98 \\
     \bottomrule
    \end{tabular}%
  }
  \vspace{0.5em}
\caption{{Scenario 3: segmentation performance with noisy series}. MCPVS is our multivariate proposal, ADA-X by \cite{bertolacci2022adaptspec}. Three metrics: segmentation coverage $C(\mathcal{P},\hat{\mathcal{P}})\in (0,1)$ (the higher, the better), Hausdorff distance $H(\mathcal{P},\hat{\mathcal{P}})\in (0,1)$ (the lower, the better), and bias $B(m,\hat{m})\in [0,m]$ (the lower, the better). The table shows the results when increasing the number of noisy series in the panel of series modeled jointly. The green scale identifies better performances.}\label{tab:resultsMultiNoise}%
\end{table}%

\textbf{Scenario 4 (piece-wise autoregressive process)}.\cite{davis2006structural}.
In this setting both \algonamesp and ANOM fail to detect change-points when working with original series. However, we observed that the performance is improved when taking the series of first differences $y_t-y_{t-1}$. The results in Table \ref{tab:results_AR} consider the latter case.\\

\textbf{Scenario 5 (slowly varying autoregressive process)}.\cite{davis2006structural}.Figure \ref{fig:spectra_slow_auto} panel(a) depicts the true continuous spectrum implied by the model specification $f$, panel (b) depicts the estimated piecewise constant functions with different methodologies $\hat{f}$. Table \ref{tab:results_AR} shows the results.\\

% Table generated by Excel2LaTeX from sheet 'COV'
\begin{table}[htbp]
  \centering
  \resizebox{0.8\textwidth}{!}{
    \begin{tabular}{lrrrrrrrrrr}
     \toprule
          &       & \multicolumn{5}{c}{Piece-wise AR}             &       & \multicolumn{3}{c}{Slowly varying AR} \\
          &       & \multicolumn{1}{l}{$C(\mathcal{P},\hat{\mathcal{P}})$} &       & \multicolumn{1}{l}{$H(\mathcal{P},\hat{\mathcal{P}})$} &       & \multicolumn{1}{l}{$B(m,\widehat{m})$} &       &       & \multicolumn{1}{l}{$\Vert f-\hat{f}\Vert_2^2$} &  \\
          \cmidrule{1-1}\cmidrule{3-3}\cmidrule{5-5}\cmidrule{7-7}\cmidrule{10-10}
    CPVS(1) &       & \cellcolor[rgb]{ .98,  .776,  .784}0,89 &       & \cellcolor[rgb]{ .988,  .941,  .953}0,05 &       & \cellcolor[rgb]{ .988,  .929,  .941}0,24 &       &   \hspace{1cm}    & \cellcolor[rgb]{ .988,  .973,  .984}0,17 & \hspace{1cm}  \\
    CPVS(2) &       & \cellcolor[rgb]{ .976,  .631,  .643}0,83 &       & \cellcolor[rgb]{ .988,  .902,  .914}0,09 &       & \cellcolor[rgb]{ .988,  .871,  .882}0,32 &       &       & \cellcolor[rgb]{ .988,  .988,  1}0,15 &  \\
    CPVS  &       & \cellcolor[rgb]{ .98,  .776,  .788}0,89 &       & \cellcolor[rgb]{ .988,  .937,  .949}0,06 &       & \cellcolor[rgb]{ .988,  .988,  1}0,16 &       &       & \cellcolor[rgb]{ .988,  .976,  .988}0,16 &  \\
    ANOM(0.1) &       & \cellcolor[rgb]{ .973,  .412,  .42}0,74 &       & \cellcolor[rgb]{ .973,  .412,  .42}0,49 &       & \cellcolor[rgb]{ .973,  .412,  .42}0,94 &       &       & \cellcolor[rgb]{ .976,  .435,  .443}0,67 &  \\
    ANOM(1) &       & \cellcolor[rgb]{ .973,  .42,  .427}0,74 &       & \cellcolor[rgb]{ .976,  .427,  .435}0,48 &       & \cellcolor[rgb]{ .976,  .443,  .451}0,90 &       &       & \cellcolor[rgb]{ .973,  .412,  .42}0,69 &  \\
    ANOM(10) &       & \cellcolor[rgb]{ .973,  .478,  .486}0,77 &       & \cellcolor[rgb]{ .976,  .463,  .471}0,45 &       & \cellcolor[rgb]{ .976,  .502,  .51}0,82 &       &       & \cellcolor[rgb]{ .976,  .533,  .541}0,58 &  \\
    AP    &       & \cellcolor[rgb]{ .984,  .945,  .957}0,96 &       & \cellcolor[rgb]{ .988,  .984,  .996}0,02 &       & \cellcolor[rgb]{ .988,  .961,  .973}0,20 &       &       & \cellcolor[rgb]{ .988,  .973,  .984}0,16 &  \\
    ADA(7) &       & \cellcolor[rgb]{ .984,  .984,  .996}0,97 &       & \cellcolor[rgb]{ .988,  .988,  1}0,01 &       & \cellcolor[rgb]{ .988,  .988,  1}0,16 &       &       & \cellcolor[rgb]{ .984,  .753,  .765}0,37 &  \\
    ADA(15) &       & \cellcolor[rgb]{ .988,  .988,  1}0,98 &       & \cellcolor[rgb]{ .988,  .988,  1}0,02 &       & \cellcolor[rgb]{ .988,  .961,  .973}0,20 &       &       & \cellcolor[rgb]{ .984,  .71,  .722}0,41 &  \\
    {LSW}   &       & \cellcolor[rgb]{ .976,  .631,  .643}0,83 &       & \cellcolor[rgb]{ .984,  .835,  .847}0,14 &       & \cellcolor[rgb]{ .988,  .871,  .882}0,32 &       &     & $-$   \\
    \bottomrule
    \end{tabular}%
  }
    \vspace{0.5em}
  \caption{{Scenarios 4 and 5: segmentation performance}. CPVS (our proposal), ANOM \citep{hadj2020bayesian}, AP \citep{davis2006structural}, ADA \citep{rosen2012adaptspec}, {and LSW \citep{korkas2017multiple}} are considered. Three metrics for Scenario 5 (piece-wise AR): segmentation coverage $C(\mathcal{P},\hat{\mathcal{P}})\in (0,1)$ (the higher, the better), Hausdorff distance $H(\mathcal{P},\hat{\mathcal{P}})\in (0,1)$ (the lower, the better), and bias $B(m,\hat{m})\in [0,m]$ (the lower, the better). For Scenario 6 (slowly varying AR) the mean squared error between the true log-spectrum and the estimated one $\Vert f-\hat{f}\Vert_2^2$ (the lower, the better) is computed. The red scale identifies worse performances.}\label{tab:results_AR}%
\end{table}%

\begin{figure}[!ht]
\centering
\subfigure[]{\includegraphics[width=.47\textwidth]{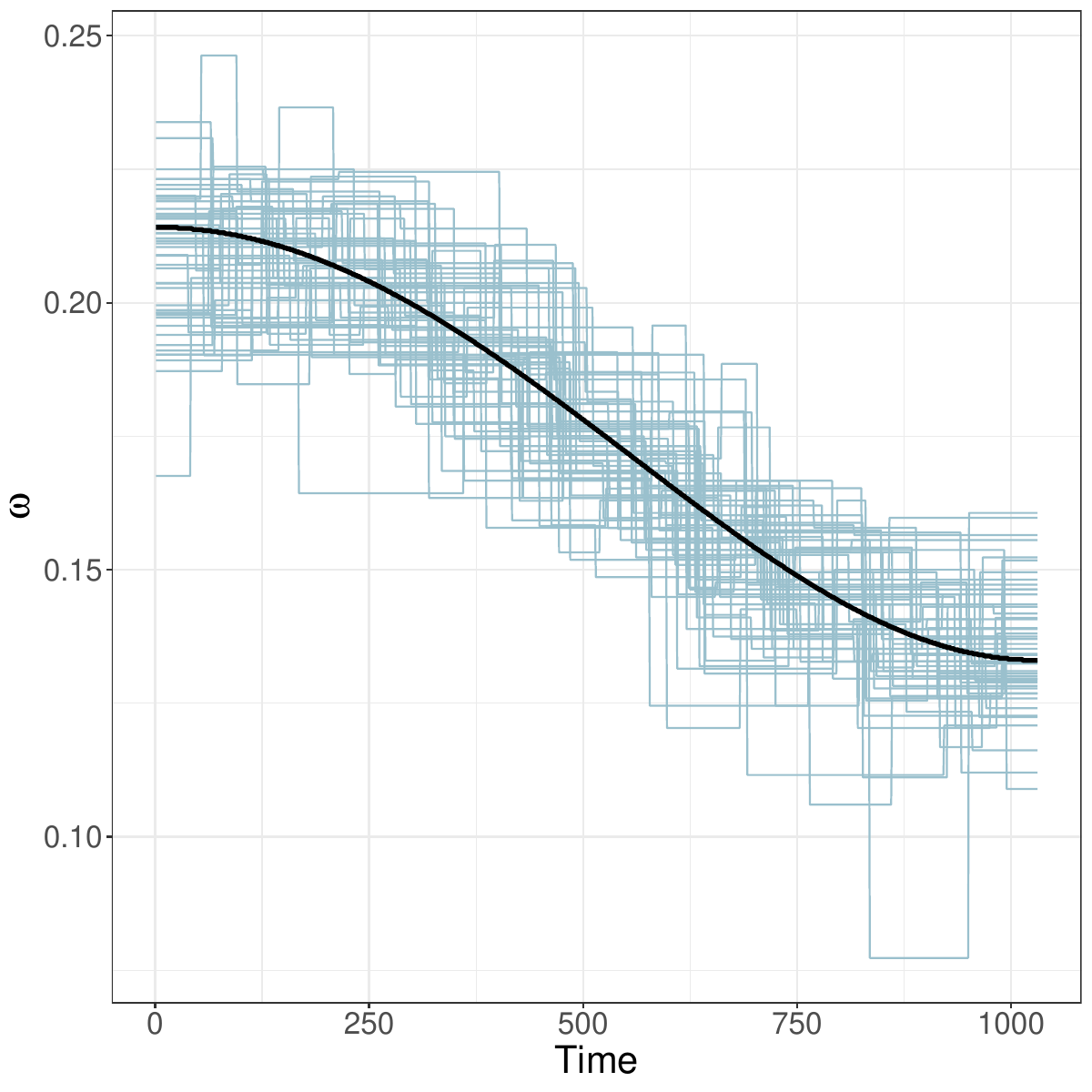}}
\subfigure[]{\includegraphics[width=.47\textwidth]{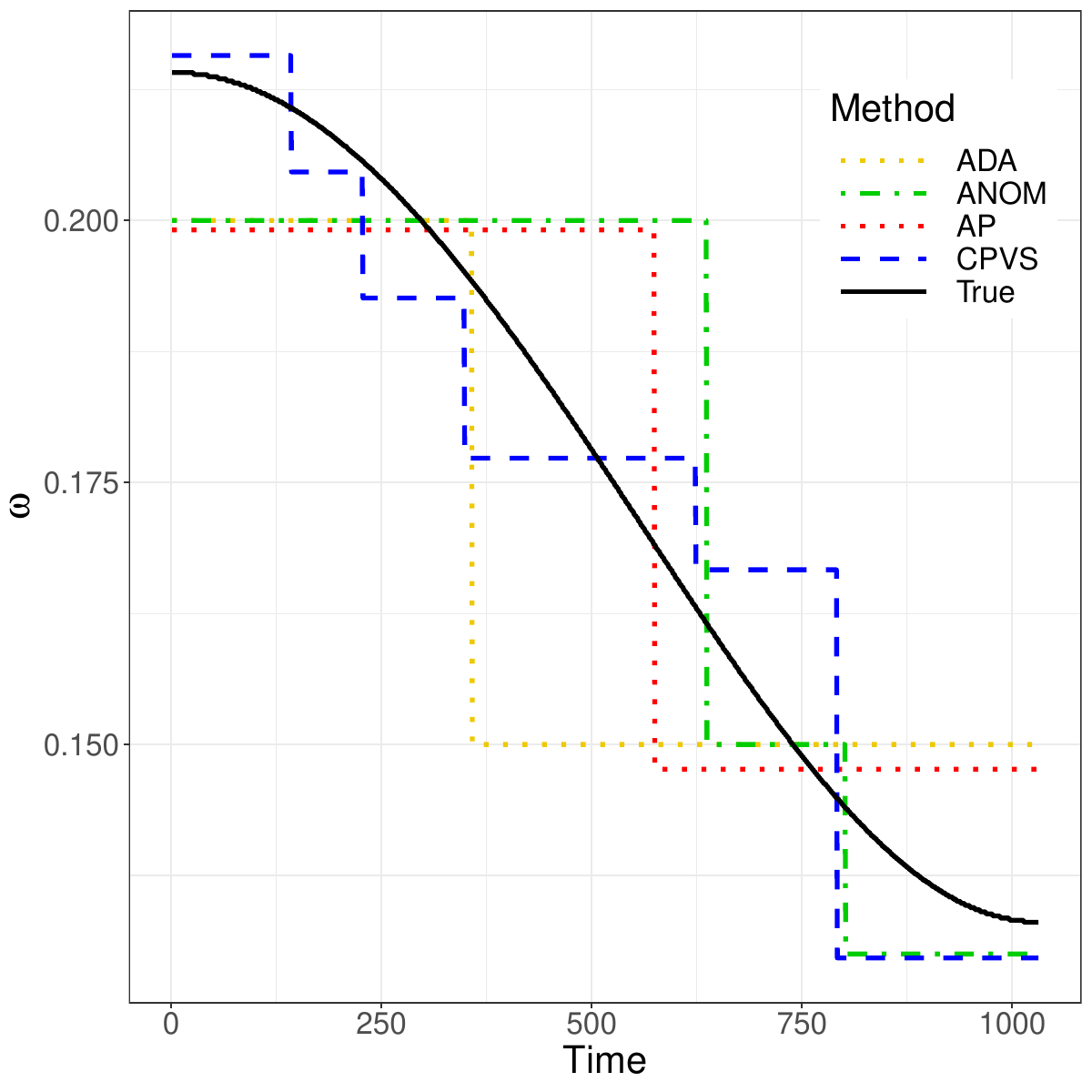}}
\caption{{Scenario 5: time-varying frequency}. (a) Time-varying frequency with the highest power spectrum over replicates (lightblue lines) against the true one (solid black line). (b) Estimated time-varying frequency with the highest power specturm across methods and against the true one for a single replicate.}\label{fig:spectra_slow_auto}
\end{figure}

\clearpage
\section{Applications: additional results}
\label{app:eeg_data}
We complement Section 4 with additional figures that are not included in the main manuscript.

\begin{figure}[!ht]
\centering
\subfigure[Original data $y_t$]{\includegraphics[width=.48\textwidth]{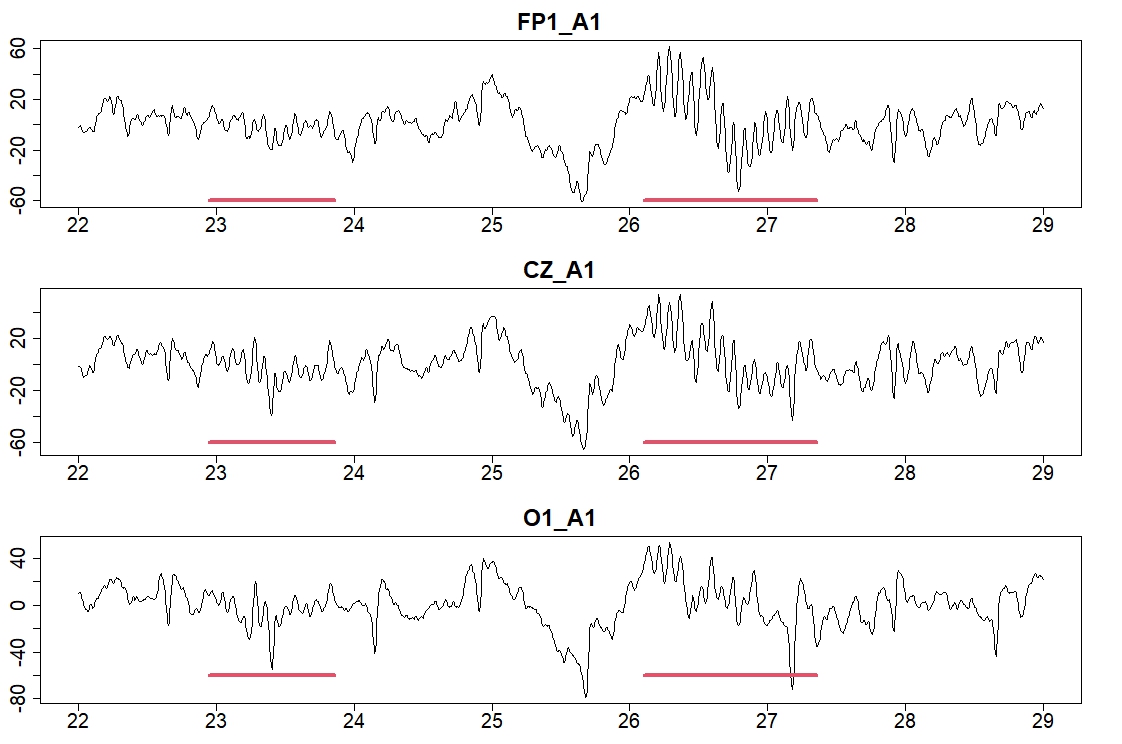}}
\subfigure[First difference $y_{t}-y_{t-1}$]{\includegraphics[width=.48\textwidth]{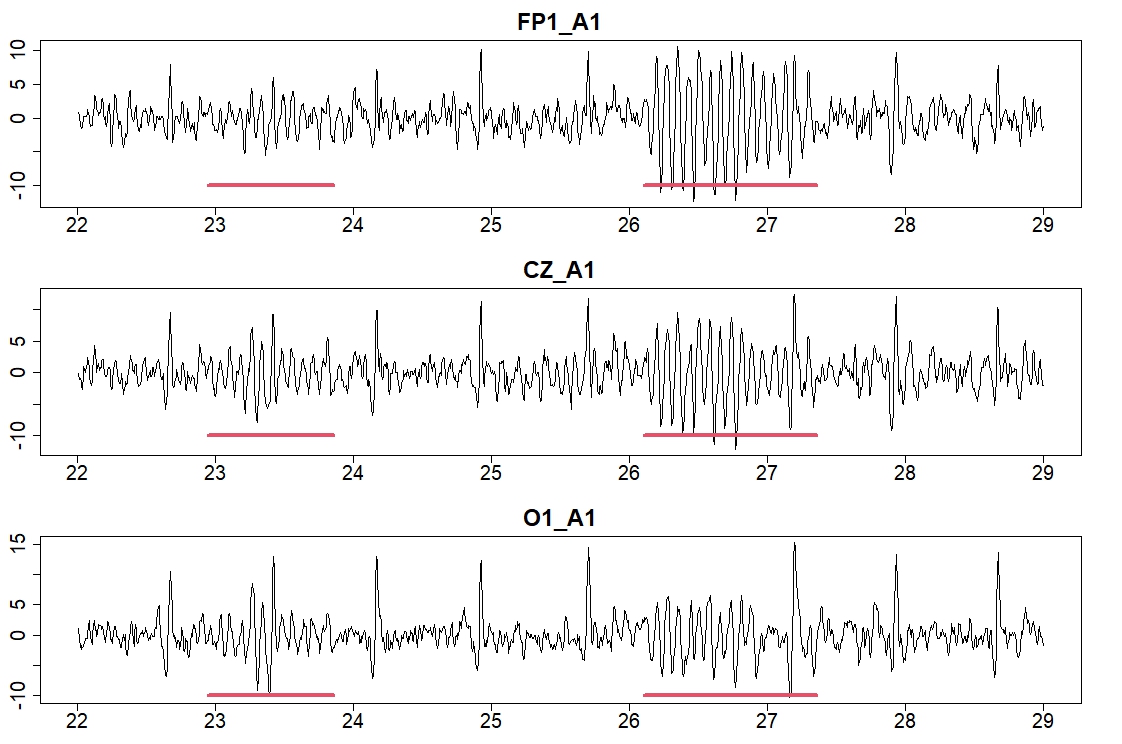}}
\caption{{Multichannel EEG sleep data}. This plot depicts the excerpts used in the analysis. Sleep spindles annotated by the experts are highlighted in red.}\label{fig:sleep_series}
\end{figure}

\begin{figure}[!ht]
\centering
\includegraphics[width=.9\textwidth]{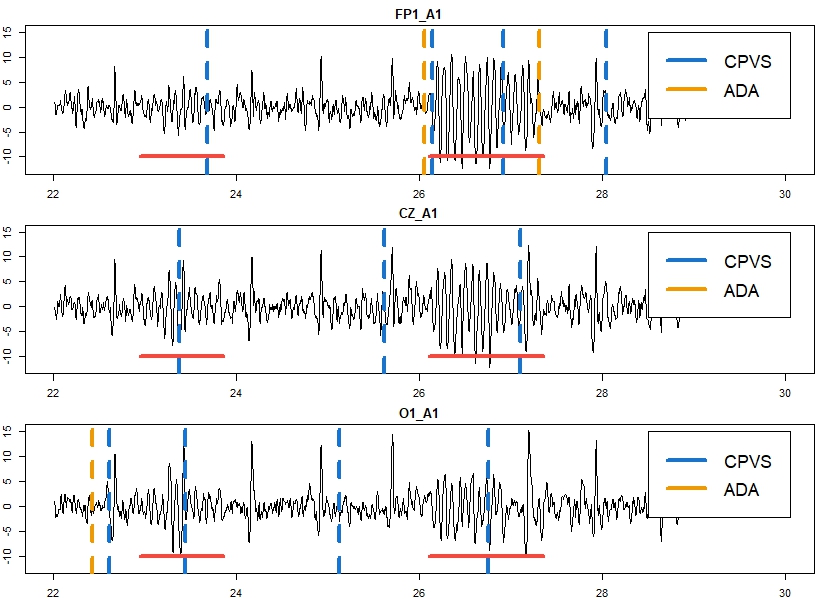}
\caption{{EEG data: change points detected by univariate methods}. The plot depicts the segmentation provided by CPVS(2), and the ADA(7) \citep{rosen2012adaptspec}. Both methods are run separately on each channel series. Sleep spindles annotated by the experts are highlighted in red.}\label{fig:sleep1_cp}
\end{figure}

\begin{figure}[!ht]
\centering
\subfigure[Estimated frequencies and intensities by segment for first channel]{\includegraphics[width=.78\textwidth]{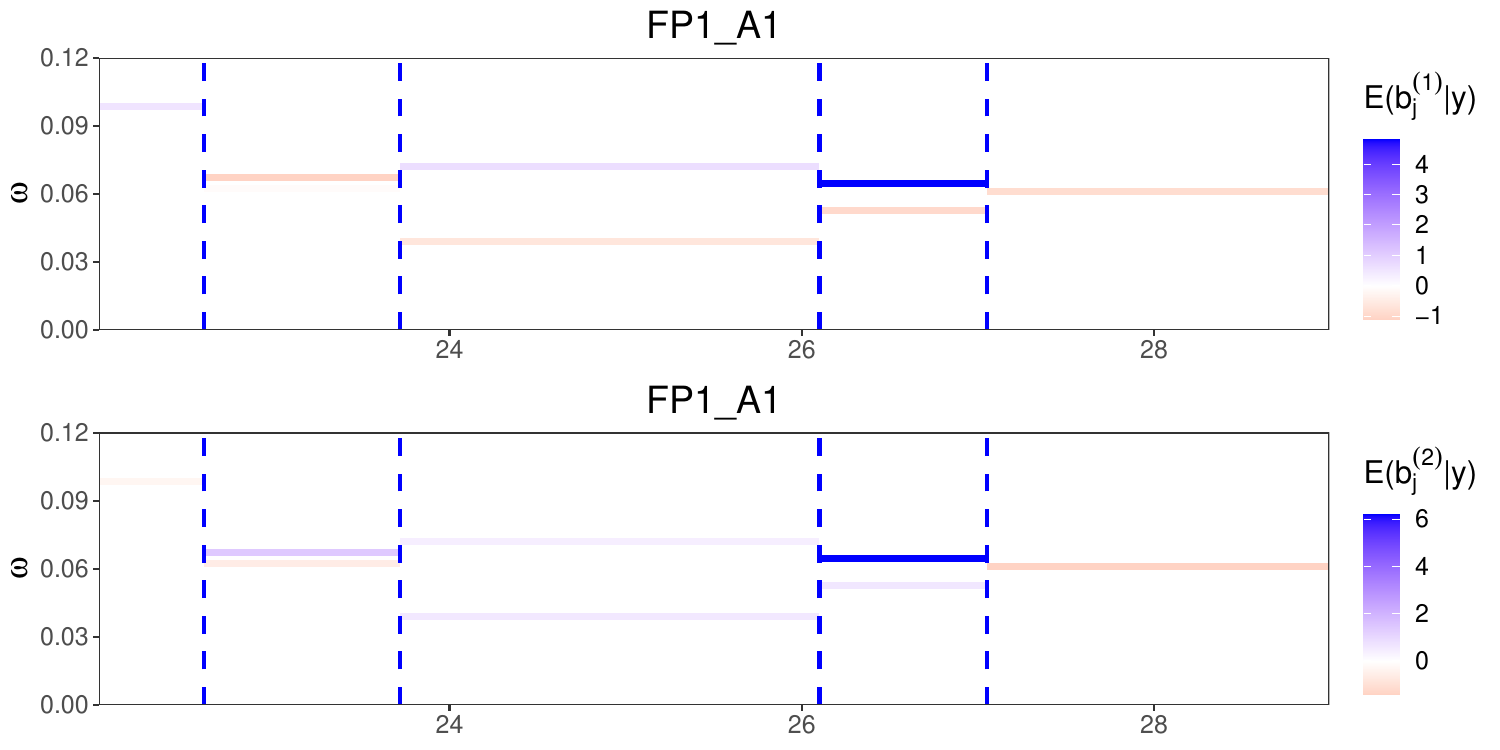}}\\
\subfigure[Segment 1]{\includegraphics[width=.32\textwidth]{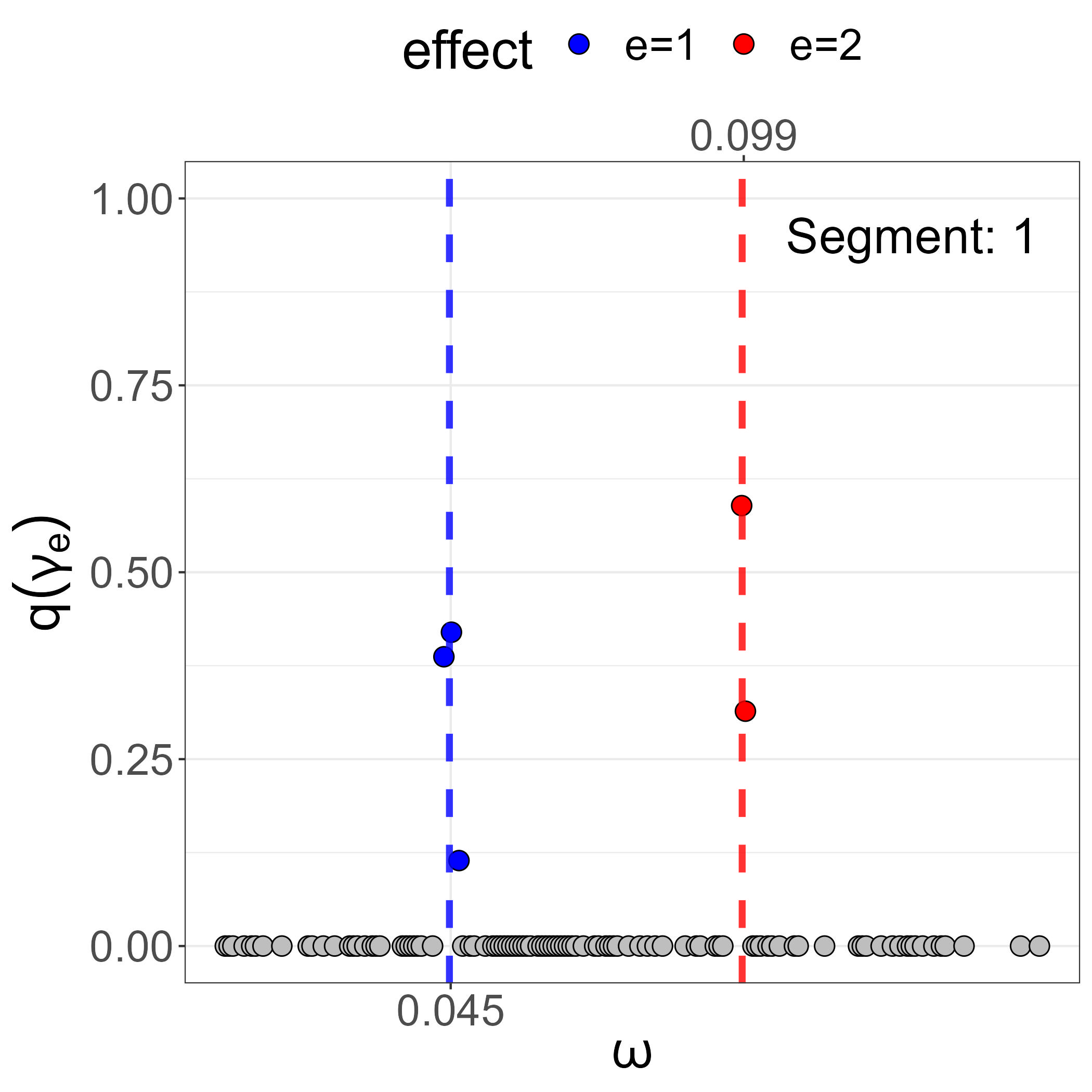}}
\subfigure[Segment 2]{\includegraphics[width=.32\textwidth]{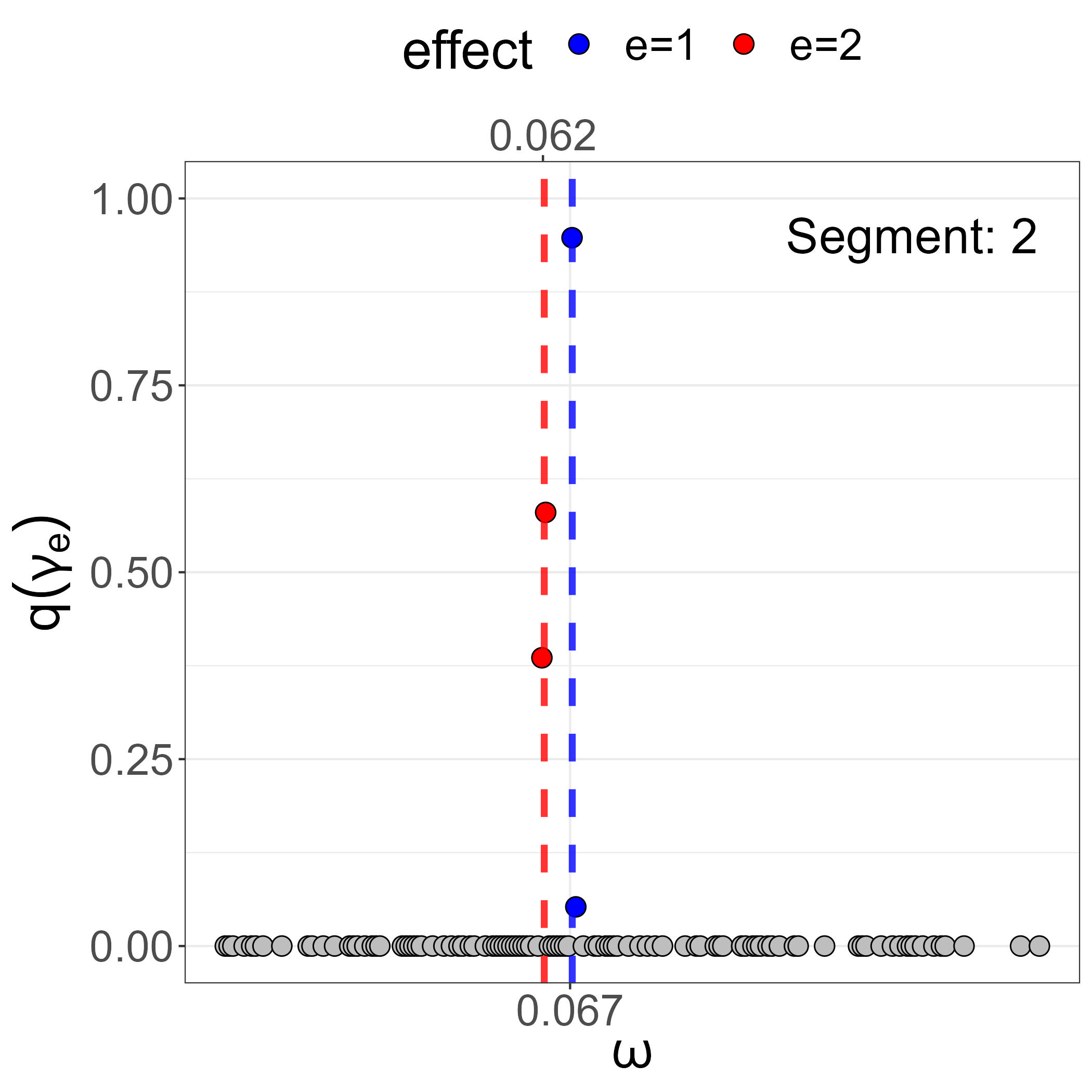}}
\subfigure[Segment 3]{\includegraphics[width=.32\textwidth]{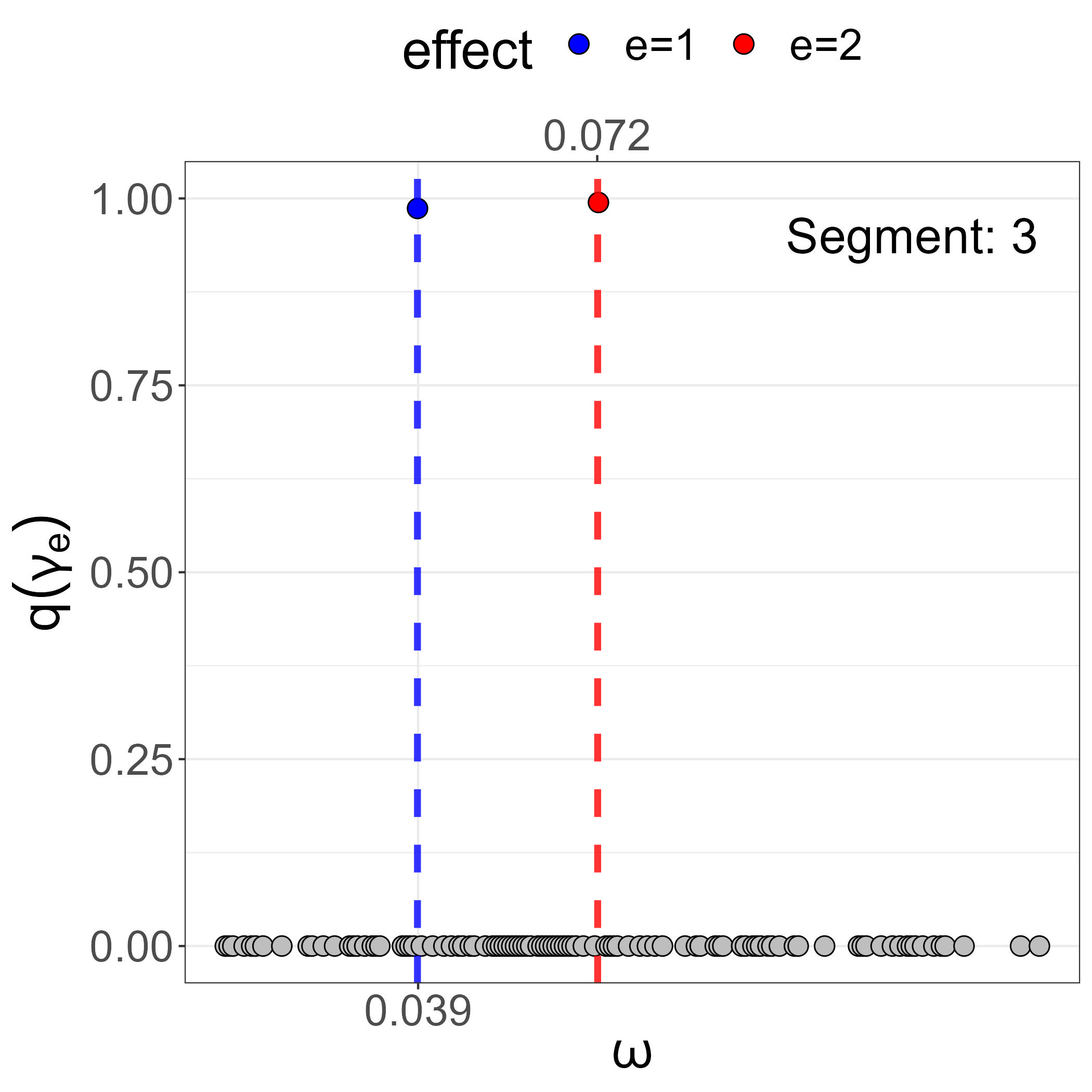}}\\
\subfigure[Segment 4]{\includegraphics[width=.32\textwidth]{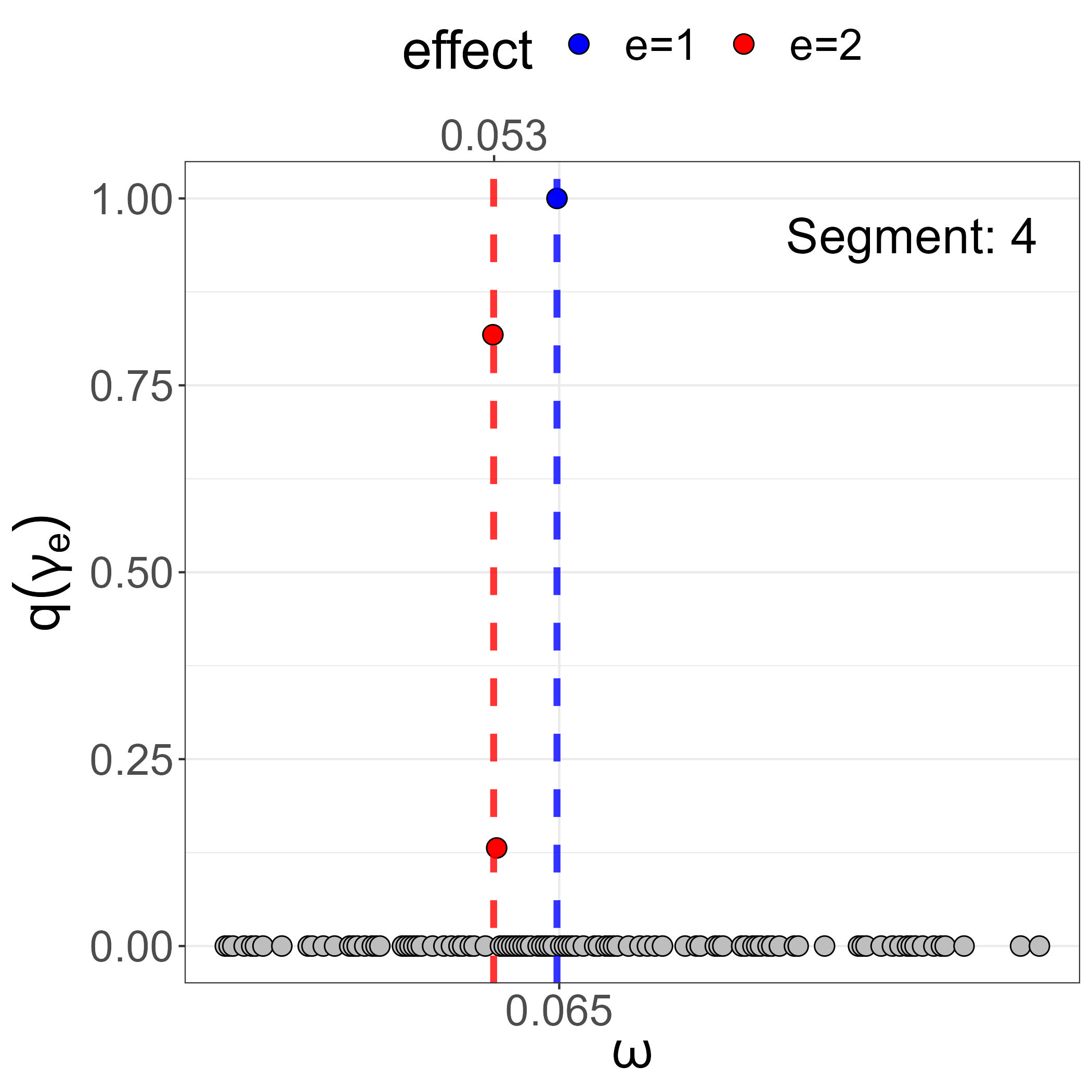}}
\subfigure[Segment 5]{\includegraphics[width=.32\textwidth]{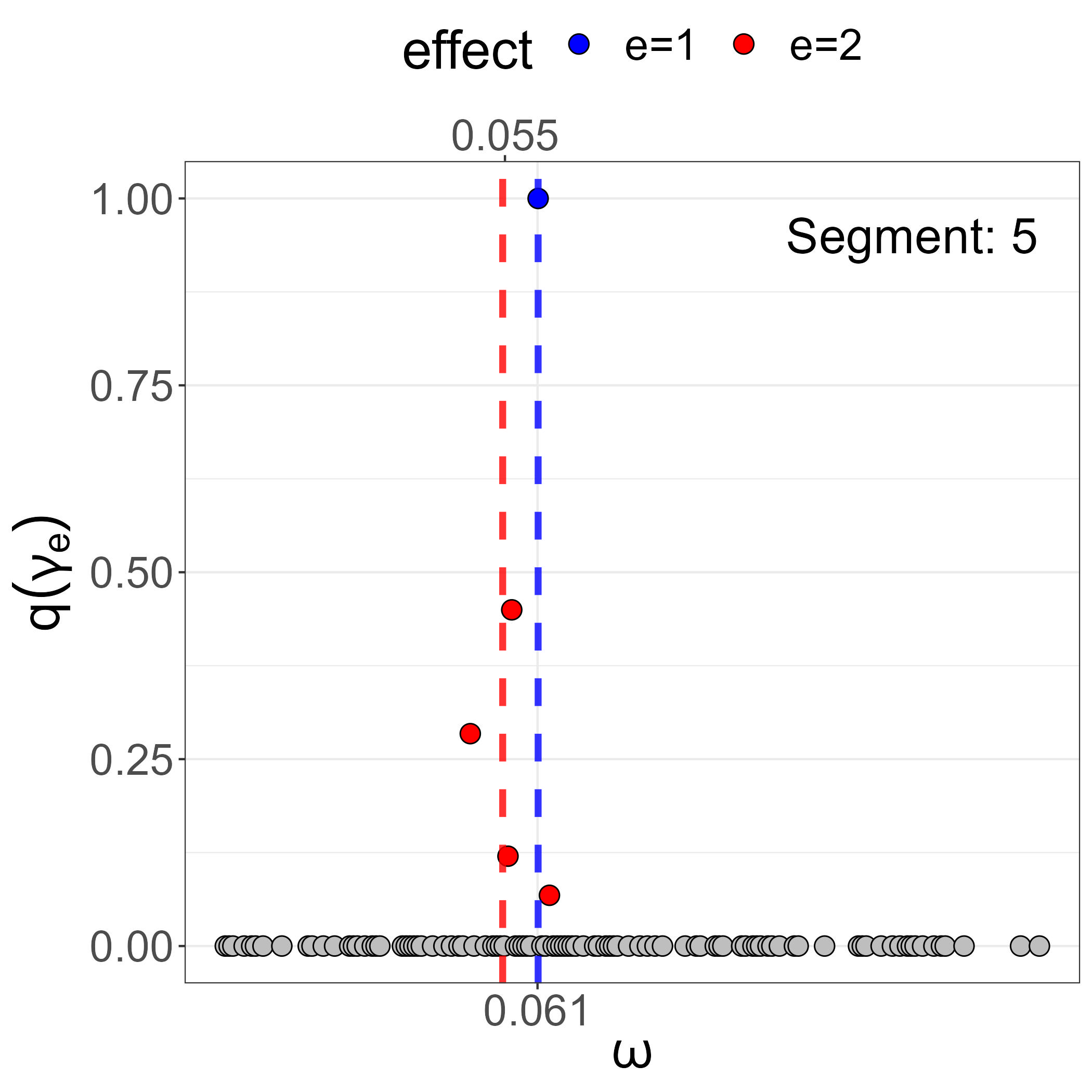}}
\caption{{EEG data: frequencies' uncertainty quantification ($N_E=2$)}. Panel (a) depicts the estimated frequencies and intensities in each segment for the segmentation of the first channel. It serves as a reference for panels (b)-(f). Panels (b)-(f) show the posterior inclusion probabilities for the frequencies and the dashed vertical lines denote the mean of the frequencies weighted by the inclusion probability. Different colours refer to different effects in the SuSiE algorithm.}\label{fig:sleep_omega_uncertainty}
\end{figure}

\begin{figure}[!ht]
\centering
\subfigure[Original data $y_t$]{\includegraphics[width=.48\textwidth]{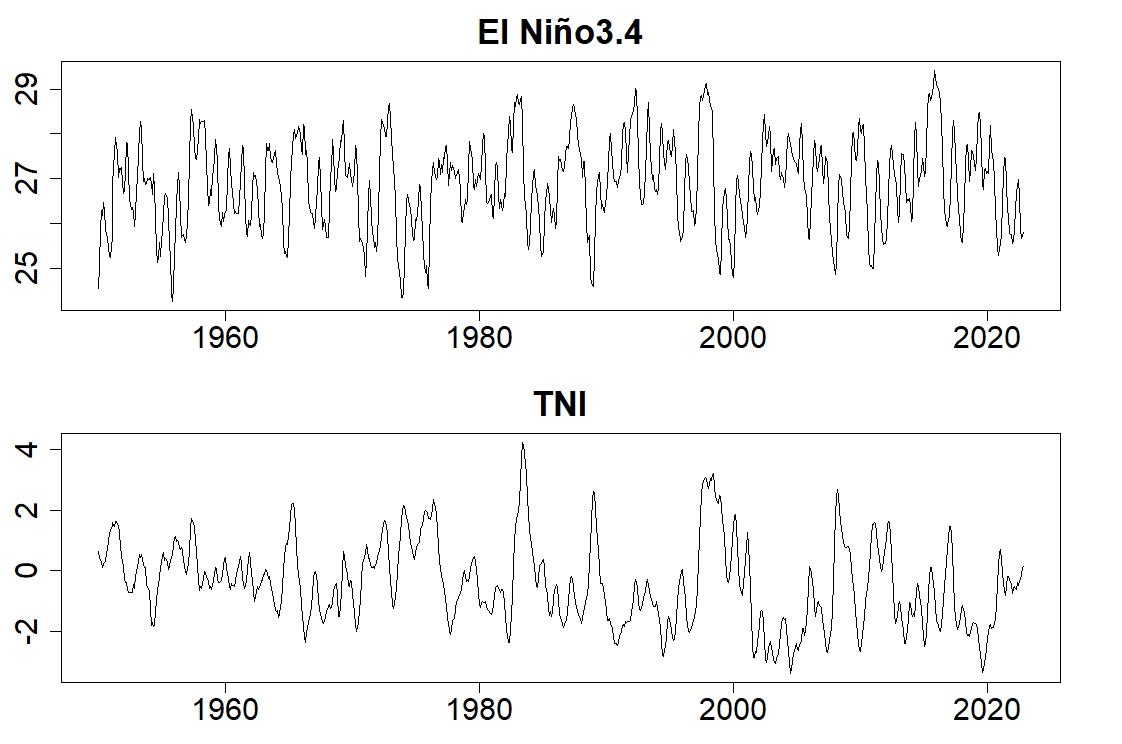}}
\subfigure[First difference $y_{t}-y_{t-1}$]{\includegraphics[width=.48\textwidth]{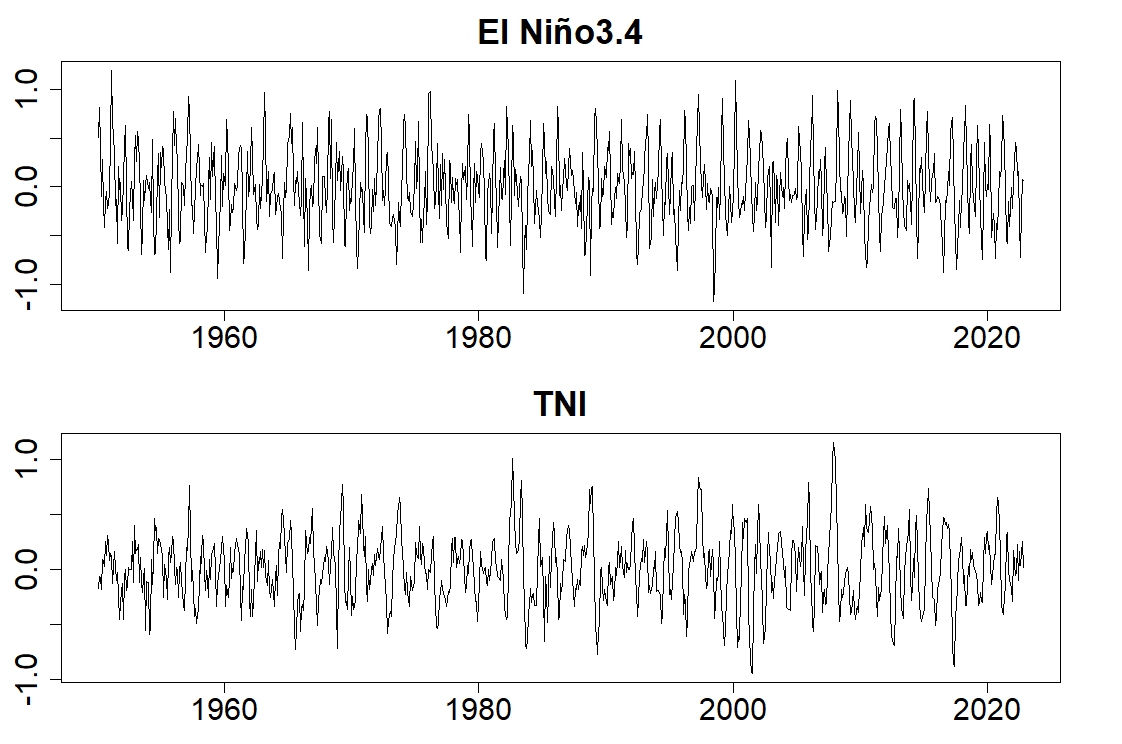}}
\caption{{ENSO data: El Niño3.4 and Trans-Niño indexes}. Plots show the data used in the analysis spanning a period from 1950 to 2022.}\label{fig:elnino_data}
\end{figure}

\begin{figure}[!ht]
\centering
\subfigure[]{\includegraphics[width=.48\textwidth]{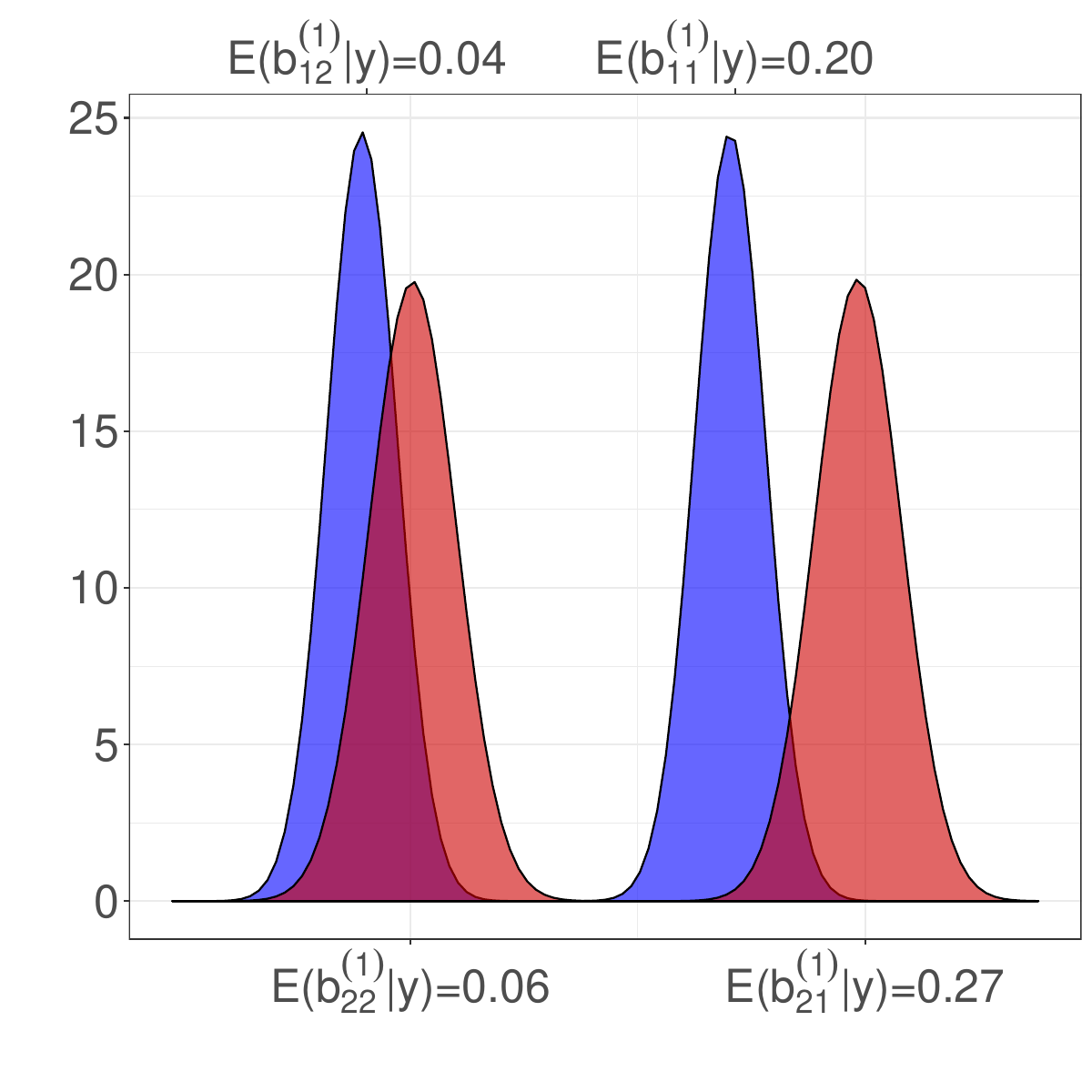}}
\subfigure[]{\includegraphics[width=.48\textwidth]{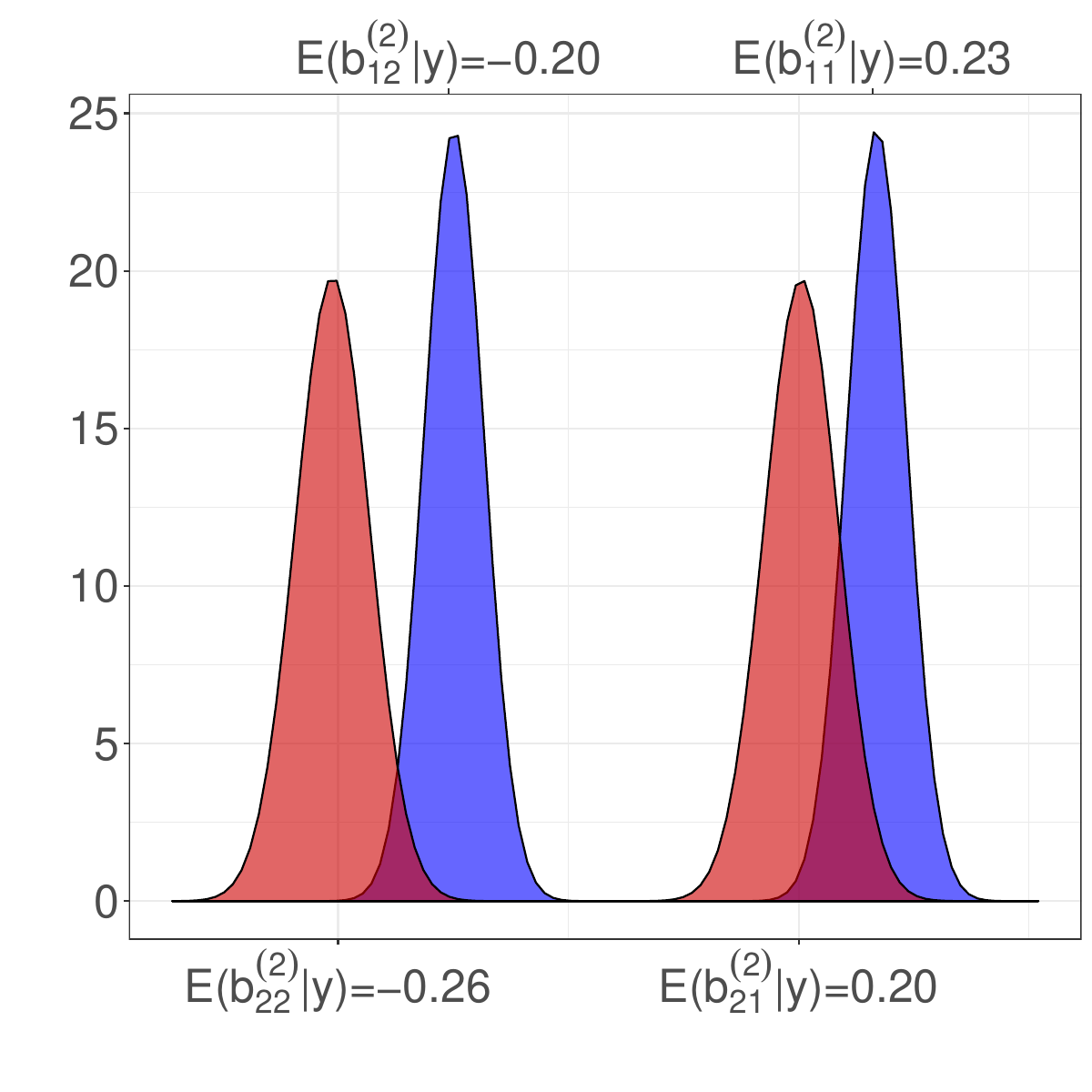}}
\caption{{ENSO data: Posterior distributions of the intensities for $N_E=2$}.  The plot depicts $q(b_{je}^{(1)})$ (a) and $q(b_{je}^{(2)})$ (b) in the two segments $j=1,2$ (blue and red, respectively) and for the two effects in the SuSiE $e=1,2$. The posterior mean is indicated, as a point estimate for the intensity.}\label{fig:elnino_betas2}
\end{figure}

\begin{figure}[!ht]
\centering
\subfigure[Estimated frequencies and intensities by segment for El Niño3.4 index]{\includegraphics[width=.8\textwidth]{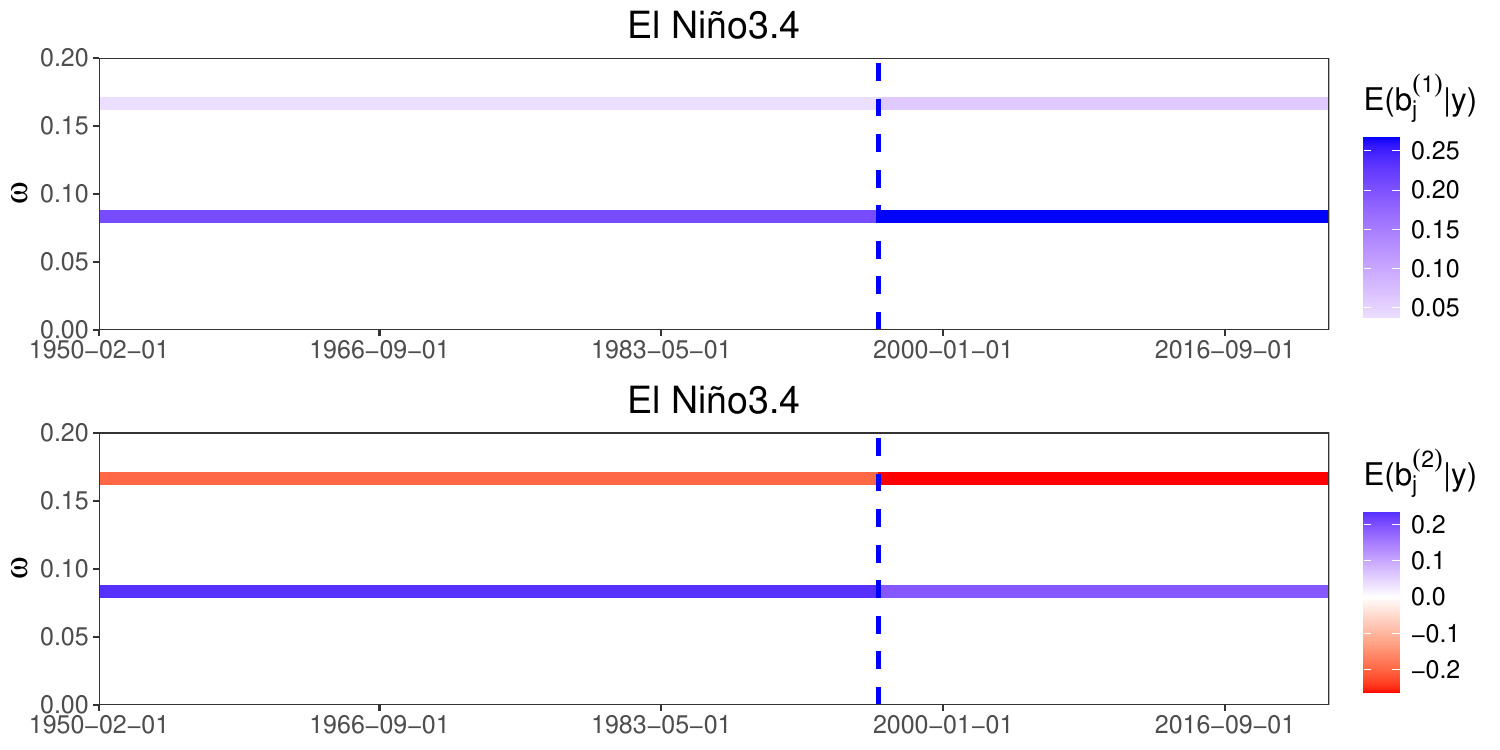}}\\
\subfigure[Segment 1: $N_E=2$]{\includegraphics[width=.4\textwidth]{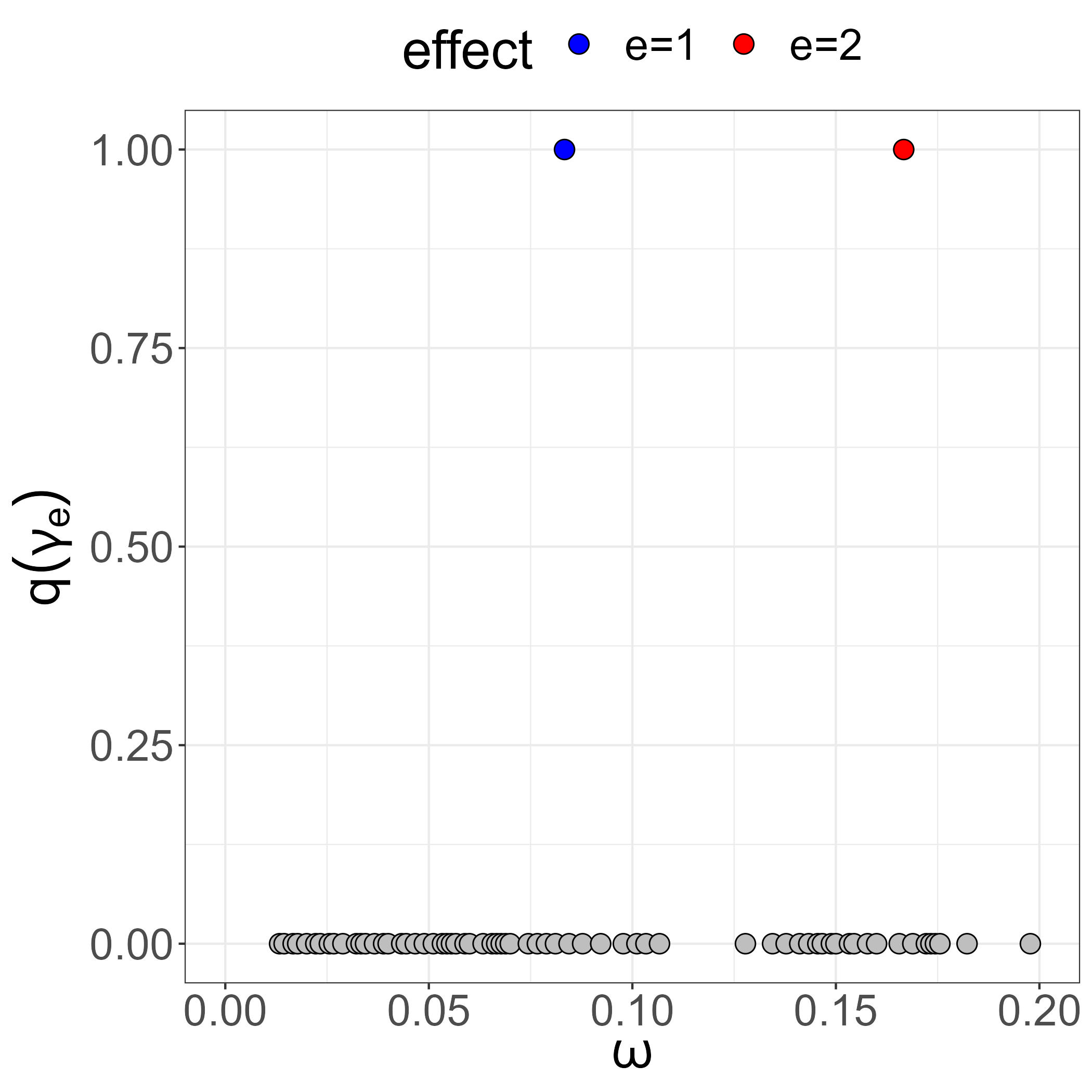}}
\subfigure[Segment 2: $N_E=2$]{\includegraphics[width=.4\textwidth]{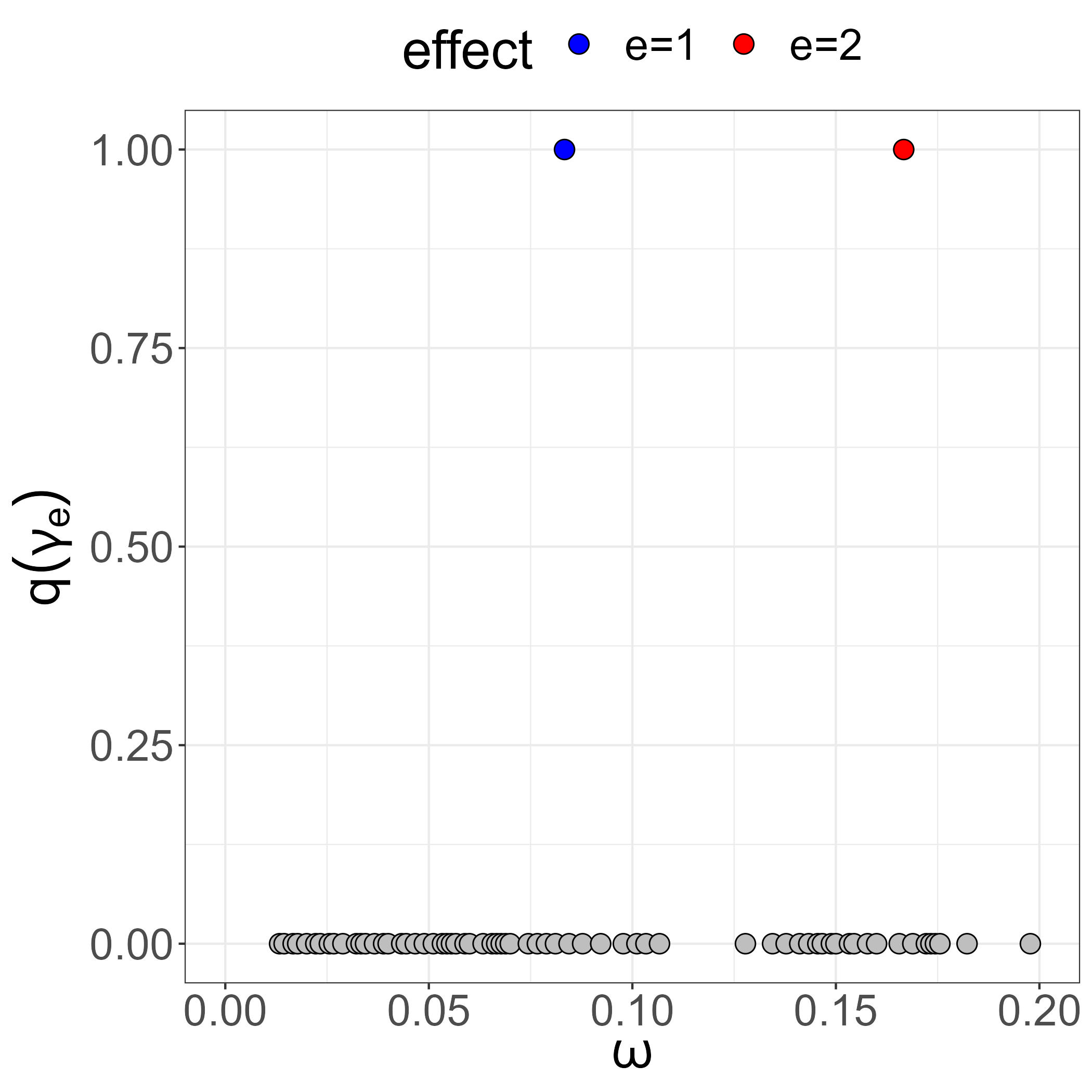}}\\
\caption{{ENSO data: frequencies' uncertainty quantification}. Panel (a) depicts the estimated frequencies and intensities in each segment for El Niño3.4 index. It serves as a reference for panels (b)-(c). Panels (b)-(c) show the posterior inclusion probabilities $q(\gamma_{je})$ for the frequencies in the two segments. Different colours refer to different effects in the SuSiE algorithm with $N_E=2$.}\label{fig:nino_i1_omega_uncertainty}
\end{figure}

\begin{figure}[!ht]
\centering
\subfigure[Estimated frequencies and intensities by segment for TNI]{\includegraphics[width=.8\textwidth]{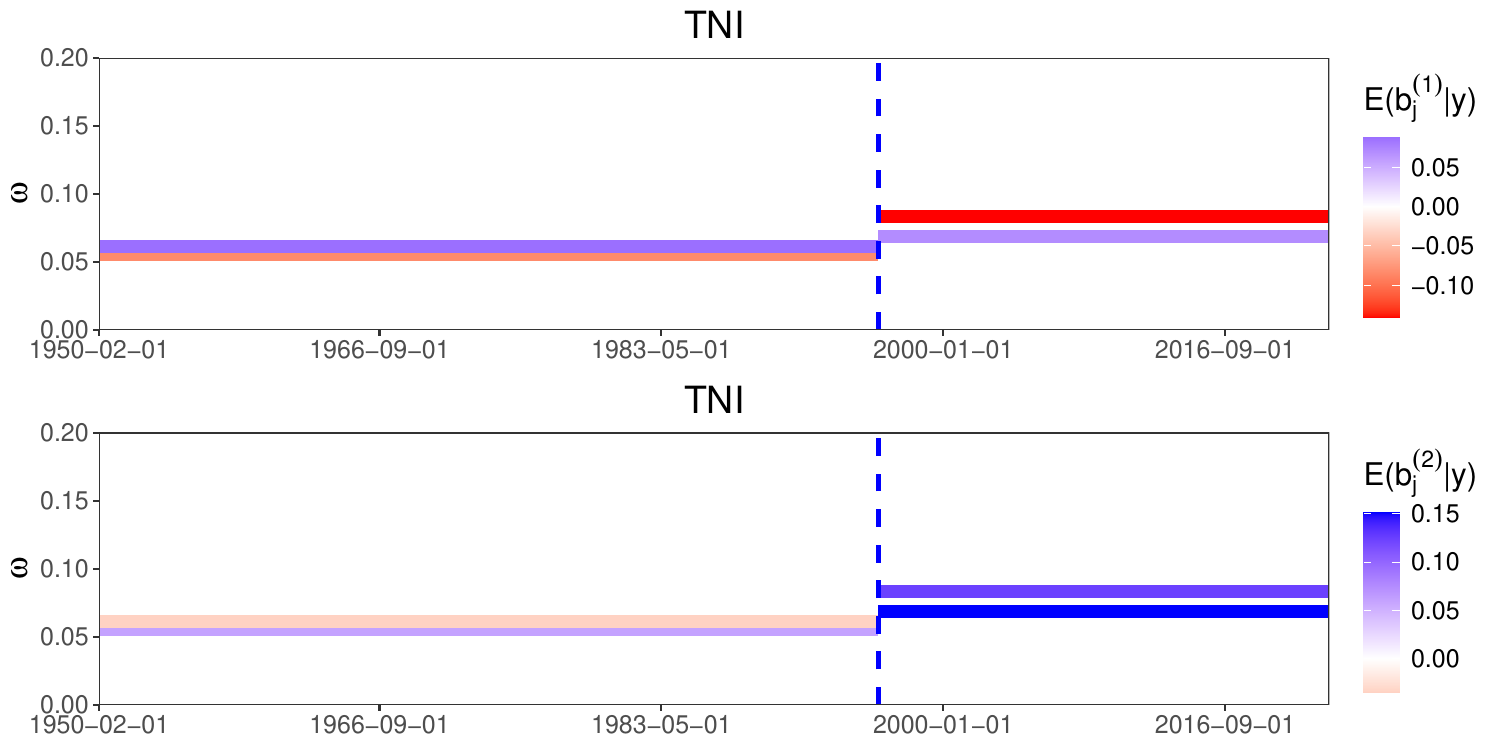}}\\
\subfigure[Segment 1: $N_E=2$]{\includegraphics[width=.4\textwidth]{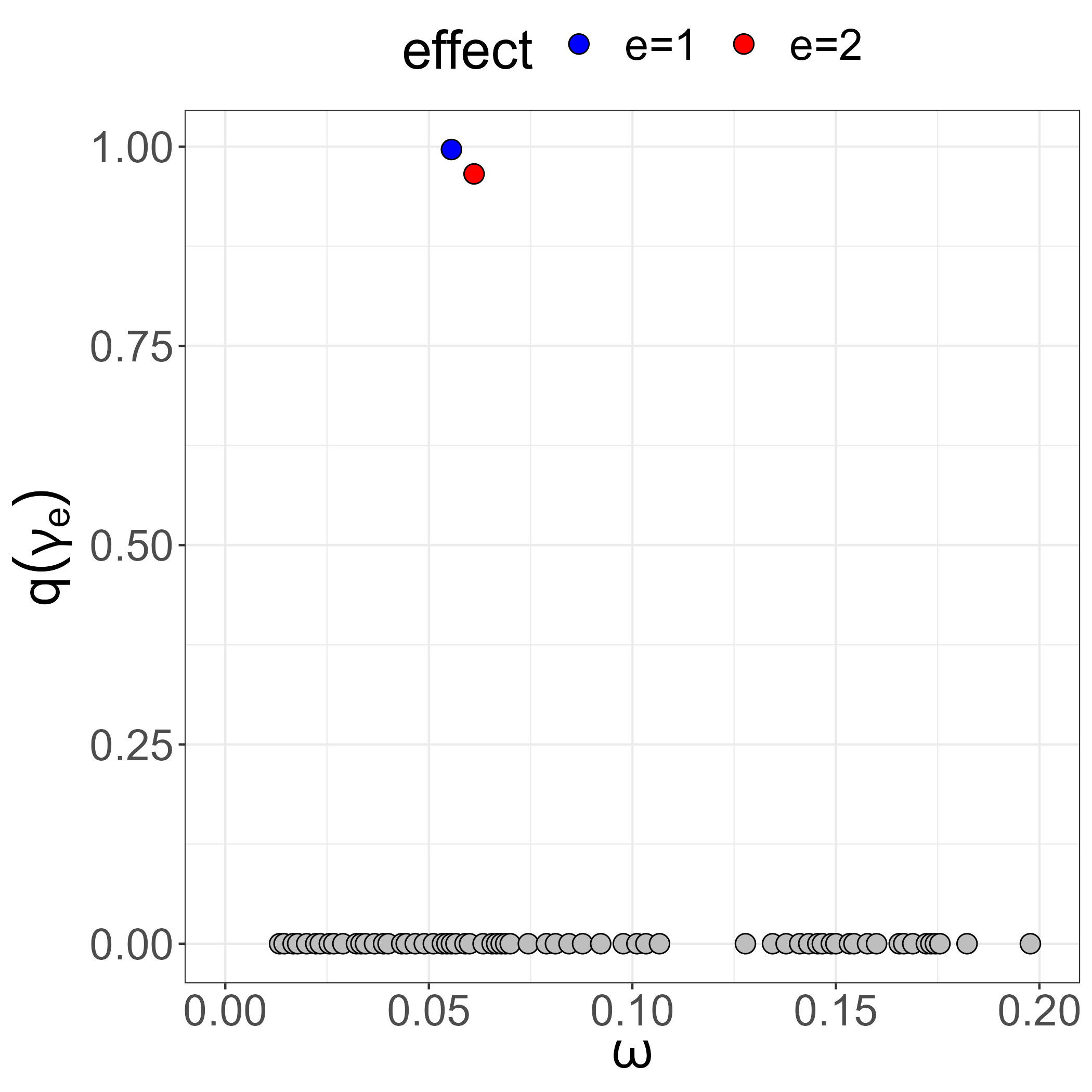}}
\subfigure[Segment 2: $N_E=2$]{\includegraphics[width=.4\textwidth]{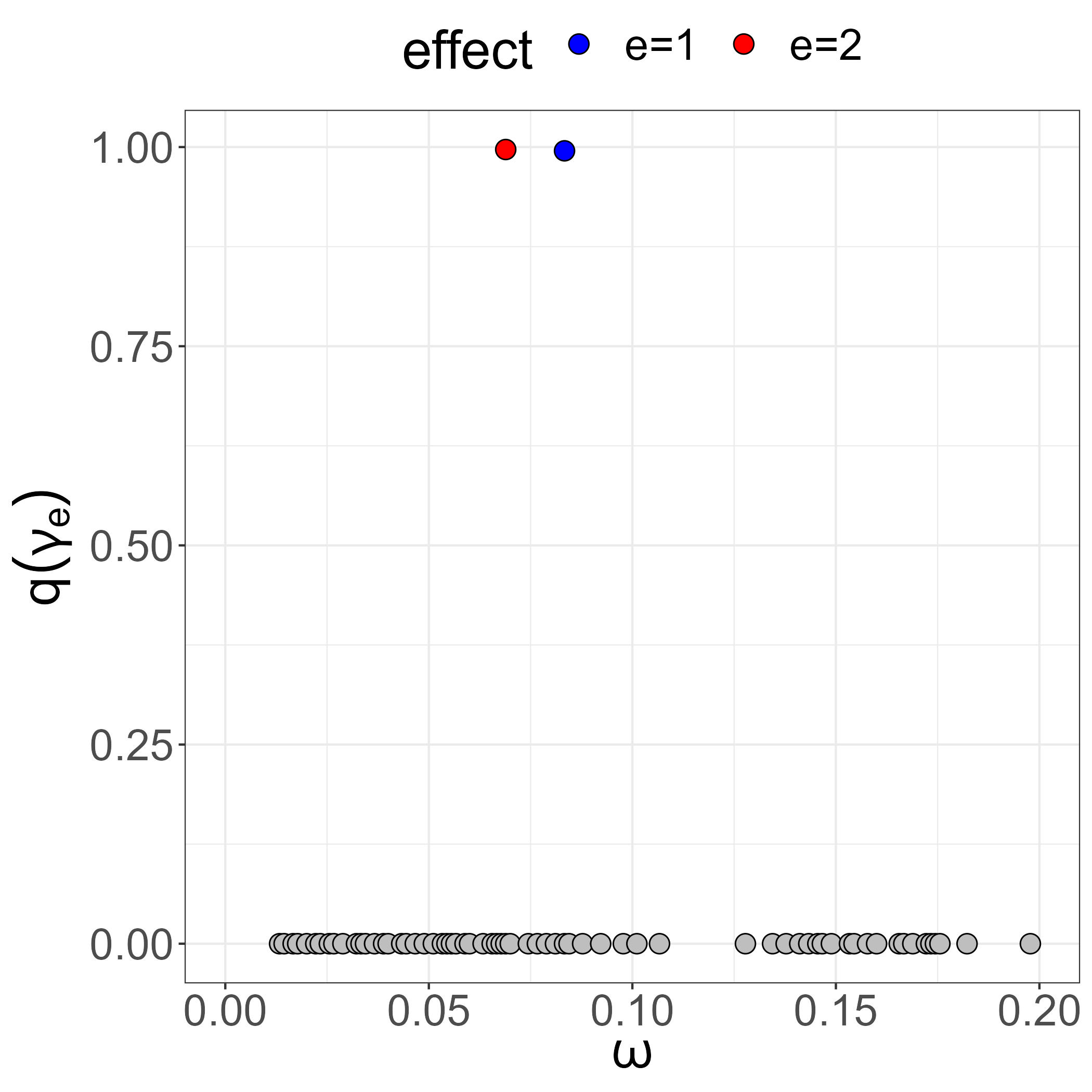}}
\caption{{ENSO data: frequencies' uncertainty quantification}. Panel (a) depicts the estimated frequencies and intensities in each segment for TNI. It serves as a reference for panels (b)-(c). Panels (b)-(c) show the posterior inclusion probabilities for the frequencies in the two segments. Different colours refer to different effects in the SuSiE algorithm with $N_E=2$.}\label{fig:nino_i2_omega_uncertainty}
\end{figure}

\end{document}